\DeclareMathOperator{\E}{\mathbb{E}}
\DeclareMathOperator{\X}{\mathbf{X}}
\DeclareMathOperator{\D}{\mathbf{D}}
\DeclareMathOperator{\x}{\mathbf{x}}
\DeclareMathOperator{\Loss}{\mathbb{L}}
\DeclareMathOperator*{\argmax}{arg\,max}
\DeclareMathOperator*{\argmin}{arg\,min}
\newtheorem{theorem}{Theorem}
\newcommand{\R}{\textnormal{\sffamily\bfseries R}}
\newcommand{\pkg}[1]{{\fontseries{b}\selectfont #1}} 
\newcommand\independent{\protect\mathpalette{\protect\independenT}{\perp}}
\def\independenT#1#2{\mathrel{\rlap{$#1#2$}\mkern2mu{#1#2}}}
\title{Estimating Bayesian Optimal Treatment Regimes \\ for Dichotomous Outcomes using Observational Data\thanks{The authors gratefully acknowledge the contribution of oropharynx cancer data by I. Nauta, MD, C.R. Leemans, MD PhD, and R.H. Brakenhoff, PhD (Amsterdam UMC, Vrije Universiteit Amsterdam, Otolaryngology / head and neck surgery, Tumor Biology section, Cancer Center Amsterdam, The Netherlands). This research has received funding by the European Commission under Horizon 2020 Framework Programme (Societal Challenges, Health), grant agreement no. 689715, project BD2Decide.}}
\author[1]{Thomas~Klausch}
\author[1]{Peter~M.~van~de~Ven}
\author[1]{Tim~van~de~Brug}
\author[1,2]{Mark~A.~van~de~Wiel}
\author[1]{Johannes~Berkhof}
\affil[1]{\footnotesize Amsterdam UMC, Vrije~Universiteit~Amsterdam, Department~of~Epidemiology~and~Biostatistics, Amsterdam~Public~Health~Research~Institute,~Amsterdam,~The~Netherlands}
\affil[2]{\footnotesize MRC Biostatistics Unit, Cambridge University, Cambridge, UK}
\date{ Email: t.klausch@vumc.nl}
\begin{document}
\maketitle
\begin{abstract}
Optimal treatment regimes (OTR) are individualised treatment assignment strategies that identify a medical treatment as optimal given all background information available on the individual. We discuss Bayes optimal treatment regimes estimated using a loss function defined on the bivariate distribution of dichotomous potential outcomes. The proposed approach allows considering more general objectives for the OTR than maximization of an expected outcome (e.g., survival probability) by taking into account, for example, unnecessary treatment burden. As a motivating example we consider the case of oropharynx cancer treatment where unnecessary burden due to chemotherapy is to be avoided while maximizing survival chances. Assuming ignorable treatment assignment we describe Bayesian inference about the OTR including a sensitivity analysis on the unobserved partial association of the potential outcomes. We evaluate the methodology by simulations that apply Bayesian parametric and more flexible non-parametric outcome models. The proposed OTR for oropharynx cancer reduces the frequency of the more burdensome chemotherapy assignment by approximately 75\% without reducing the average survival probability. This regime thus offers a strong increase in expected quality of life of patients.

\end{abstract}

\section{Introduction}
An important question in current research in personalised medicine is how to select the best treatment based on all information available about the individual. This development is facilitated as researchers have access to increasingly large amounts of observational data from diverse sources, such as electronic patient files or extracts from CT or MRI scans and genomics. Traditional analyses from evidence-based medicine often only use sub-group analysis of treatment effects for predefined clinical baseline characteristics.  Our motivating example is primary treatment of oropharyngeal squamous cell carcinoma (OPSCC) that includes radiotherapy (RT) or RT with concomitant chemotherapy (CRT). Meta-analyses have shown an average 6.5 \%-points gain in 5-year survival for patients given CRT as opposed to RT treatment, where survival gain under CRT is higher in patients with high-stage tumours, better performance status, and the younger \parencite{pignon_2009}. These results suggest the potential of simple individualised treatment rules (e.g., give CRT in high-stage patients), but such statistics are only available for a few characteristics and it is unknown how the available characteristics interact. Another important consideration in practice is treatment burden that patients encounter due to side-effects and toxicity (e.g., CRT is seriously toxic). Toxic treatments should only be administered if a clear benefit is expected (e.g., survival instead of death). A relevant question therefore is whether unknown individualised OPSCC treatment rules exist that increase patient benefit beyond current standard of care in terms of survival and burden.

Finding a set of treatment assignment rules that optimize benefit for all patients in a population means estimating an 'optimal treatment regime' \parencite[OTR;][]{robins_optimal_2004, murphy_optimal_2003}. There is increased effort to develop OTR estimation methodology, both for dynamic (multiple decisions taken in sequence) and static (decisions taken at one point in time) regimes \parencite{zhang_robust_2012, zhang_robust_2013, zhang_estimating_2012, zajonc_bayesian_2012, zhang_robust_2013, zhao_new_2015, xu_bayesian_2016, murray_bayesian_2017, thall_bayesian_2007}. This literature assumes that data from a randomized trial are available or that data are observational and treatment assignment is ignorable \parencite{rubin_bayesian_1978}. Regardless of type of study design, the risk of misspecifying the outcome model is a substantial concern because it leads to biased estimates of treatment effects and incorrect treatment decisions. A viable approach is, therefore, to deploy statistical and machine learning techniques to approximate the unknown functional relationship between potential outcomes and patient characteristics \parencite{qian_performance_2011, taylor_reader_2015, imai_estimating_2013, xu_bayesian_2016, murray_bayesian_2017, athey_recursive_2016}. An alternative is to use treatment assignment propensities in weighted estimating equations or treatment assignment classifiers \parencite{robins_estimation_2008, zhang_estimating_2012, zhao_estimating_2012, zhao_new_2015}. These procedures require correct specification of the treatment assignment model (or known propensities), if the outcome model is misspecified.

In this paper, we consider Bayesian estimation of static OTR for dichotomous potential outcomes with observational data and develop an OTR for OPSCC treatment. Our paper extends available OTR methodology in two respects. First, we implement a Bayesian decision theoretical framework using a general loss function defined on the bivariate distribution of the potential outcomes  \parencite[for Bayesian decision theory see][]{parmigiani_decision_2009}. We select the treatment which minimizes the mean posterior loss (the Bayes decision). This is important, because loss may be related to a patient covariate even in the absence of a treatment-covariate interaction effect in, for instance, a logistic regression. We introduce the so-called conditional parametrization of the loss function which allows penalizing three clinically relevant treatment errors:  (a) a more burdensome treatment is selected although the less burdensome one leads to the same outcome (patient has unnecessary burden); (b) a less burdensome treatment is selected but the patient only has positive outcome (e.g., survival) under the more burdensome one (patient death is avoidable); (c) a more burdensome treatment is selected but the patient only survives under the less burdensome one (avoidable death and unnecessary burden). Using this approach with loss penalties for the unnecessary burden of CRT, we estimate an OTR for OPSCC that strongly reduces CRT assignment by approximately 75\% while holding average three-year survival probabilities constant at the level of the observed regime. Our approach is related to work by \textcite{li_doseschedule_2008} and \textcite{lee_bayesian_2015} who defined utilities on a joint distribution of efficacy and toxicity in the design of dose-response trials. Although similar in spirit, our focus is estimating OTR using a loss function that fully depends on the distribution of potential (efficacy) outcomes.

Second, we  quantify uncertainty about the correctness of treatment decisions and the expected loss under the OTR. This objective is hard for much OTR methodology, but facilitated in the Bayesian setting. Specifically, we are concerned with three sources of uncertainty in observational data, i.e. (i) sampling error, (ii) missing potential outcomes  \parencite{rubin_causal_2005}, and (iii) an unidentified partial correlation between the potential outcomes \parencite{ding_potential_2016}. Bayesian estimation facilitates quantifying uncertainty, because posterior distributions widen in sparse regions of the data (due to i and ii) and unidentified parameters (iii) can be handled in estimation. The potential for Bayesian inference about OTR has been recently recognized by a small body of literature \parencite{murray_bayesian_2017, zajonc_bayesian_2012, xu_bayesian_2016, arjas_optimal_2010, thall_bayesian_2007}. To avoid outcome model misspecification, we follow \textcite{murray_bayesian_2017} and \textcite{xu_bayesian_2016} who use flexible non-parametric Bayesian learning techniques for function approximation, in particular Bayesian additive regression trees \parencite[BART,][]{chipman_bart:_2010, hill_bayesian_2011}. We develop a posterior estimate of certainty on the optimal decision similar to \textcite{murray_bayesian_2017} and previously unconsidered credible intervals of the expected loss and outcome under the OTR. We take into account the non-identified partial correlation (iii) by a Bayesian sensitivity analysis. We caution that the selection of the optimal treatment can distort the nominal frequentist coverage property of credible intervals, because the optimal decision selects the posterior distribution of expected loss with the smaller mean. Bounds on the type I error rate are derived.

The remainder of this paper is structured as follows. In section \ref{sec:theory} we present the Bayesian decision theoretical framework for OTR with discrete outcomes. In section \ref{sec:estimation} we discuss Bayesian estimation and inference on the OTR using data from observational designs. In section \ref{sec:simulation} we present results on a simulation study assessing performance of the estimation approach. Section \ref{sec:application} presents the application to treatment of OPSCC.

\section{Theoretical framework} \label{sec:theory}
We consider observational data generated according to the Rubin causal model \parencite{rubin_bayesian_1978, rubin_estimating_1974}. For two treatments, random variable $W \in \{0,1\}$ denotes assignment to treatment $0$ or $1$. The observable potential outcomes are binary coded $Y=(Y(0),Y(1))$, with $Y(1)$ the outcome under $W=1$ and $Y(0)$ the outcome under $W=0$. Depending on $W$ either $Y(0)$ or $Y(1)$ is observed, where the observed outcome is $Y(W) = WY(1)+(1-W)Y(0)$. Covariates $\X \in \mathbb{R}^{p}$ are observed pre-treatment measurements of confounders and predictors of $(Y(0),Y(1))$. We assume for any observation that the potential outcomes are jointly multinomial distributed,
\begin{align}
\label{eq:bivariate_dist}
(Y(0),Y(1) | \X) \sim \text{multinom}(\theta_{00}, \theta_{10}, \theta_{01}, \theta_{11}| \X, n=1),
\end{align}
\sloppy with random parameter vector $\bm{\theta} = (\theta_{00}, \theta_{10}, \theta_{01}, \theta_{11})$. The marginal distributions are Bernoulli with $(Y(0)|\X) \sim \text{Bern}(\theta_{1+}|\X)$ and $(Y(1)|\X)\sim \text{Bern}(\theta_{+1}|\X)$ with $\theta_{1+}:= \theta_{10}+\theta_{11}$ and $\theta_{+1}:= \theta_{01}+\theta_{11}$ where $P(Y(0)=1|\X,\bm{\theta})= \E(Y(0)|\X,\bm{\theta})=\E(Y(0)|\X,\theta_{1+})$ and $P(Y(1)=1|\X,\bm{\theta})= \E(Y(1)|\X,\bm{\theta})=\E(Y(1)|\X,\theta_{+1})$.

In line with the Rubin causal model, we assume, first, ignorable treatment assignment
\parencite{rubin_bayesian_1978},
\begin{align}
\label{eq:ignorable_treatment}
Y(0), Y(1), \bm{\theta} \independent W | \X
\end{align}
meaning that the treatment assignment is conditionally independent of the potential outcomes given patient covariates. In randomized trials $Y(0), Y(1), \bm{\theta} \independent W$ holds unconditional $\X$ due to randomization of $W$. Second, we assume positivity (overlap assumption) $0 < P(W | \X) < 1$, which precludes deterministic treatment assignment. Third, we assume no interference between the treatment assignment of a patient and the outcomes of others \parencite[Stable Unit Treatment Value Assumption (SUTVA),][]{imbens_causal_2015}. Due to SUTVA all observations are identically and independently distributed according to the joint distribution of $(Y(0),Y(1),W,\X)$. The complete data are $\D=\{Y(0),Y(1),W,\X\}$; the observed data are $\D_{obs}=\{Y(W),W,\X\}$. 

\subsection{General loss function and Bayes decision} \label{sec:loss_decision}
We introduce treatment decision $a \in \{0,1\}$ for any future patient with outcomes ($Y(0),Y(1)$), where $a=0$ denotes assigning treatment 0 and $a=1$ assigning treatment 1. When $a$ is assigned, the observed outcome is $ Y(a) = (1-a) Y(0) + a Y(1)$. Without loss of generality we assume that $Y(a)=1$ denotes treatment success (e.g., survival) and $Y(a)=0$ denotes treatment failure (e.g., death). The loss associated with the treatment decision is described by a loss function:
\begin{align}
\label{eq:loss}
\Loss(Y(0),Y(1), a) := \sum_{(j,k) \in \{0,1\}^2} L_{jk}^{(a)} \mathbbm{1}_{ \{Y(0)=j,Y(1)=k\}} 
\end{align}
with $L_{jk}^{(a)}  \ge0 $ fixed loss coefficients. Our objective is to take a Bayes optimal treatment decision for each new patient with profile $\X=\x$ given a specific choice for $L^{(a)}_{jk}$. We define the expected loss for a particular value of $(\bm{\theta}, \ \x)$ and decision $a$ as
\begin{align}
\label{eq:exp_loss}
\mu_{\Loss}(a,\bm{\theta},\x) &:= \E_{Y|\x, \bm{\theta}} \big[\Loss(Y(0),Y(1), a)\big] \nonumber \\
&= \sum_{(j,k) \in \{0,1\}^2} \Loss(Y(1)=j,Y(0)=k, a) P(Y(1)=j,Y(0)=k| \x,\bm{\theta} )
\end{align}
Parameter $\mu_{\Loss}(a,\bm{\theta}, \x)$ is the estimand on which optimal decisions are based and its uncertainty is reflected by the posterior distribution across $\bm{\theta}$ denoted $\pi( \bm{\theta} | \x, \D)$ where we condition on $\X=\x$. We assume that sampling from $\pi(\bm{\theta} | \x, \D)$ is possible and defer details to section \ref{sec:estimation}. The Bayes decision $a_1^*(\x)$ minimizes the posterior expected loss for $\X=\x$,
\begin{align}
\label{eq:post_exp_loss}
a_1^*(\x) := \argmin_a \E_{\bm{\theta} | \x, \D } \big[  \mu_{\Loss}(a,\bm{\theta},\x) \big]= \argmin_a \int_{\Theta} \mu_{\Loss}(a,\bm{\theta},\x) \pi(\bm{\theta} | \x, \D) d\bm{\theta}
\end{align}
with $\Theta$ the support of $\bm{\theta}$. We now define the loss contrast
\begin{align} \label{eq:Delta_Loss}
\Delta_{\Loss}(\bm{\theta},\x) := \mu_{\Loss}(1,\bm{\theta},\x) -  \mu_{\Loss}(0,\bm{\theta},\x),
\end{align}
so that
\begin{align}
\label{eq:OTR_rule1}
a_1^*(\x) :=
\begin{cases}
1 \quad \text{if} \ \E_{ \bm{\theta} | \x, \D }  \big[ \Delta_{\Loss}(\bm{\theta},\x) | \x, \D \big] <0   \\
0 \quad \text{else}.
\end{cases}
\end{align}
Parameter $\Delta_{\Loss}(\bm{\theta},\x)$ can be viewed as the conditional treatment effect on the expected loss and its posterior density $\pi_{\bm{\theta}|\x,\D}( \Delta_{\Loss}(\bm{\theta},\x) | \x, \D)$ is a function of $\pi(\bm{\theta} | \x, \D)$. The posterior probability $\rho(\x)$ that $\mu_{\Loss}(1,\bm{\theta},\x) \le \mu_{\Loss}(0,\bm{\theta},\x)$ is 
\begin{align}
\label{eq:pcp}
\rho(\x) := P( \Delta_{\Loss}(\bm{\theta},\x) \le 0 | \x,\D  ). 
\end{align}
Probability $\rho(\x)$ quantifies the posterior certainty about the treatment decision $a_1^*(\x)=1$; $1-\rho(\x)$ may be used in the alternative case. 
It may be noted that $\rho(\x) $ can be used to form an alternative decision rule
\begin{align}
\label{eq:OTR_rule2}
a_2^*(\x) :=
\begin{cases}
1 \quad \text{if} \ \rho(\x)>0.5  \\
0 \quad \text{else}.
\end{cases}
\end{align}
This alternative uses the posterior median of $\pi_{\bm{\theta}|\x,\D}( \Delta_{\Loss}(\bm{\theta},\x) | \x,\D)$ instead of posterior mean (\ref{eq:OTR_rule1}) and is Bayes for the $0-1$ loss function
\begin{align}
\label{eq:loss2}
\Loss_2(a_2, \bm{\theta},\x) = 
\begin{cases}
1 \quad \text{if}  \ (\Delta_{\Loss}(\bm{\theta},\x) > 0 \cap a_2=1) \cup (\Delta_{\Loss}(\bm{\theta},\x) \le 0 \cap a_2=0) \\
0 \quad \text{else}.
\end{cases}
\end{align}
In the following, we write general $a^*$ and only index the parameter where needed. An OTR is defined by a rule to compute $a^*(\x)$ and $\rho(\x)$ for any $\x$ sampled from $p(\X)$. The expected loss (\ref{eq:exp_loss}) under the OTR is then given by $\mu_{\Loss}(a^*(\x),\bm{\theta}, \x )$. A second relevant estimand is the expected outcome under the OTR (e.g., survival probability). The expected outcome is
\begin{align}
\label{eq:mu_Y}
\mu_{Y}(a,\bm{\theta},\x) := \E_{Y|\x,\bm{\theta}}( Y( a )  | \x, \bm{\theta})  = (1-a)   \E(Y(0) | \x, \theta_{1+} ) + a \E(Y(1) | \x, \theta_{+1}  )   ,
\end{align}
with $ \mu_{Y}(a^*(\x),\bm{\theta},\x) $ under the OTR. This approach clearly distinguishes the expected loss and the expected outcome, which are distinct target quantities. 

\subsection{Loss functions: two special cases} \label{sec:loss_constr}
We  discuss two important special cases of the fully parametrized loss function (\ref{eq:loss}) (Table \ref{tab:loss}). We now make the additional assumption that treatment decision $a=1$ means administering the treatment that is more burdensome, for instance, because of higher toxicity or more severe side-effects (e.g., CRT). The 'marginal parametrization' penalises $a=1$ by $L_t$ and it also penalises a negative outcome (e.g., dying) under any treatment by $L_d$. 
This approach is akin to making treatment decisions based on a quality-of-life adjusted outcome measure. The associated contrast (\ref{eq:Delta_Loss}) simplifies to
\begin{align}
\label{eq:marginal_par}
\Delta_{\Loss}(\bm{\theta},\x) := L_d [\E(Y(0)|\x,\theta_{1+} ) - \E(Y(1)|\x,\theta_{+1} )] + L_t,
\end{align}
 which now only depends on the marginal probabilities $\E(Y(0)|\x,\theta_{1+} )$ and  $\E(Y(1)|\x,\theta_{+1} )$.

\vspace{0.5cm}
\begin{table}[h]
	\centering
	\begin{threeparttable}
		\caption{ Loss functions using the full, conditional, and marginal parametrizations. }
		\begin{tabular}{l c c c c c c c c} 
			
			\label{tab:loss}
			& \multicolumn{2}{c}{Full} && \multicolumn{2}{c}{Conditional} && \multicolumn{2}{c}{Marginal} \\
			\cline{2-3}
			\cline{5-6}
			\cline{8-9}
			Potential outcomes & $a=0$ & $a=1$ &&  $a=0$ & $a=1$ &&  $a=0$ & $a=1$ \\
			\hline
			$Y(0) = 0, Y(1)=0$  & $L^{(0)}_{00}$ & $L^{(1)}_{00}$ && $0$ & $L^{(1)}_{00}$  && $L_d $ & $L_d+ L_t $ \\
			$Y(0) = 0, Y(1)=1$  & $L^{(0)}_{01}$ & $L^{(1)}_{01}$ && $L^{(0)}_{01}$ & $0$  && $L_d$ & $L_{t}$ \\
			$Y(0) = 1, Y(1)=0$  & $L^{(0)}_{10}$ & $L^{(1)}_{10}$ && $0$ & $L^{(1)}_{10}$  && $0$ & $L_d+L_t$ \\
			$Y(0) = 1, Y(1)=1$  & $L^{(0)}_{11}$ & $L^{(1)}_{11}$ && $0$ & $L^{(1)}_{11}$  && $0$ & $L_t$ \\
			\hline 
		\end{tabular}
	\end{threeparttable}
\end{table} 
\vspace{0.5cm}

The 'conditional parametrization' is a flexible generalization of the marginal parametrization. Losses are defined conditional on the potential outcome strata. In each stratum, the wrong treatment decision is penalised whereas losses of the correct decisions are fixed at zero. Two types of wrong decisions emerge. First, when the potential outcomes are equal, $(Y(0) = 0, Y(1)=0)$ or $(Y(0) = 1, Y(1)=1)$, one should avoid giving the more burdensome treatment ($a=1$). Clearly, it is a central concern in practice to avoid unnecessary burden for patients. The associated losses are $L_{00}^{(1)}$ and $L_{11}^{(1)}$. Second, when the potential outcomes are not equal, $(Y(0) = 0, Y(1)=1)$ or $(Y(0) = 1, Y(1)=0)$, taking the treatment decision leading to a negative outcome (e.g., death) is wrong  ($a=0$ or $a=1$, respectively). The associated parameters $L_{01}^{(0)}$ and $L_{10}^{(1)}$ encode the loss due to fatal treatment errors, because the alternative treatment would have saved the patient. The parameters need not be equal. In particular, a patient in stratum $(Y(0) = 1, Y(1)=0)$ with $a=1$ not only receives the wrong treatment but also encounters higher treatment burden than a patient in stratum $(Y(0) = 0, Y(1)=1)$ with $a=0$ (thus $L_{01}^{(0)} \le L_{10}^{(1)}$). To the contrary, a patient in $(Y(0) = 0, Y(1)=1)$ correctly receives $a=1$ (and has burden) but now the burden is unavoidable, because the alternative would be a negative outcome (e.g., death); hence $L_{01}^{(1)}=0$ here. In comparison, the marginal parametrization assigns loss $L_t$ irrespective of potential outcomes to decision $a=1$, which includes $(Y(0) = 0, Y(1)=1)$; thus $L_{01}^{(1)}=L_t$ here. This loss is usually clinically irrelevant and should not be attributed. 

\subsection{Regime maximizing the expected outcome}
Let the decision $a^*_{\text{OTRmax}}(\x)$ denote the regime maximizing the posterior expected outcome (e.g., survival probability) with
\begin{align}
a^*_{\text{OTRmax}}(\x) &:= \argmax_a \int_{\Theta} \mu_{Y}(a,\bm{\theta},\x) \pi(\bm{\theta} | \x, \D) d\bm{\theta} \\
&= \begin{cases} \label{eq:OTR_max}
1 \quad \text{if} \ \E_{ \bm{\theta} | \x, \D }  \big[ \Delta_{Y}(\bm{\theta},\x)| \x, \D \big] > 0 \\
0 \quad \text{else},
\end{cases}
\end{align}
with 
\begin{align} \label{eq:Delta_Y}
\Delta_{Y}(\bm{\theta},\x) := \mu_{Y}(1,\bm{\theta},\x) - \mu_{Y}(0,\bm{\theta},\x).
\end{align} 
This is the Bayesian equivalent to the frequentist approach for estimating OTR using an outcome model  \parencite[e.g.][]{zhang_estimating_2012}. Regime (\ref{eq:OTR_max}) emerges as special case of the conditional and marginal loss parametrization (Table \ref{tab:loss}) when minimization of loss in (\ref{eq:post_exp_loss}) equals maximization of the expected outcome in (\ref{eq:OTR_max}). It can be shown that this situation emerges for the conditional parametrization when $L_{01}^{(0)}=L_{10}^{(1)}$ and $L_{00}^{(1)}=L_{11}^{(1)}=0$. For the marginal parametrization we have $L_t=0$. 
In both cases, being indifferent about the more burdensome treatment leads to maximizing the expected outcome; a plausible result that implies a trade-off between minimizing expected loss (patient burden) and maximizing expected outcome (survival probability). 

\section{Bayesian estimation}   \label{sec:estimation}
All estimands laid out in section \ref{sec:loss_decision} are functions of $\bm{\theta}$ and $\X$; hence our objective is to obtain $\pi(\bm{\theta}|\x,\D)$. In practice, the bivariate distribution of $(Y(0),Y(1) | \X)$ in (\ref{eq:bivariate_dist}) is not identifiable by the observed data $\D_{obs}=\{Y(W),W,\X\}$. To address this problem, we first re-parametrize (\ref{eq:bivariate_dist}) as
\begin{align}
\label{eq:bivariate_dist_repar}
(Y(0),Y(1) | \X) \sim \text{multinom}(\theta_{1+}, \theta_{+1}, \phi| \X, n=1),
\end{align}
where $\phi := \theta_{11}\theta_{00}/(\theta_{10}\theta_{01})$ is the partial odds ratio, i.e. the association of $(Y(0),Y(1))$ conditional on $\X$. If $\phi=1$, $\theta_{11} =\theta_{1+}\theta_{+1}$ so that $\theta_{10}=\theta_{1+}-\theta_{11}$, $\theta_{01}=\theta_{+1}-\theta_{11}$, $\theta_{00}=1-\theta_{11}-\theta_{10}-\theta_{01}$.  If $\phi \ne 1$, $\theta_{11} = -c^2+[(c^2-4bd)(2b)^{-1}]^{1/2}$ with $b=1-\phi,\ c=1-b(\theta_{+1}+\theta_{1+}),\ d=-\phi\theta_{+1}\theta_{1+}$ \parencite{li_doseschedule_2008}. The posterior distribution of the parameters in (\ref{eq:bivariate_dist_repar}) then factors as \parencite{ding_causal_2018}
\begin{align}
\pi(\theta_{1+},\theta_{+1}, \phi| Y(0), Y(1), W, \X) &\propto \pi(\theta_{1+},\theta_{+1}, \phi|Y(W), \X) \label{eq:post_ignorability} \\
&\propto \pi(\theta_{1+}|Y(0),W=0,\X) \pi(\theta_{+1}|Y(1),W=1,\X)p(\phi),  \label{eq:post_prior_indep} 
\end{align}
where (\ref{eq:post_ignorability}) uses ignorable treatment assignment (\ref{eq:ignorable_treatment}) and  (\ref{eq:post_prior_indep}) assumes exchangeability and prior independence. We see that the posterior $\pi(\theta_{1+},\theta_{+1}, \phi| \D)$  only depends on the observed data through the posteriors $\pi(\theta_{1+}|Y(0),W=0,\X)$ and $\pi(\theta_{+1}|Y(1),W=1,\X)$. A prior assumption on the value or the distribution of $\phi$ then identifies $\pi(\theta_{1+},\theta_{+1}, \phi| \D)$ and it enables a sensitivity analysis as explained below. Note that the posterior of $\phi$ remains equal to its prior, because it is not informed by the data. To arrive at $\pi(\bm{\theta}|\x,\D)$, Bayesian models are applied by specifying the functional relationships of $\theta_{1+}$ and $\theta_{+1}$ with $\X$. However, often no prior knowledge exists on the correct functional form of models and prior distributions cannot be chosen informatively. We then suggest applying non-parametric Bayesian function approximation using BART \parencite{chipman_bart:_2010}; see section \ref{sec:simulation}. 
Posterior inference about $\mu_{\Loss}(a^*(\x),\bm{\theta},\x)$ and $\mu_{Y}(a^*(\x),\bm{\theta},\x)$, the pointwise expected loss and outcome under the OTR, is possible. The posterior of $\mu_{\Loss}$ is given by
\begin{align}
\label{eq:posterior_mu_L}
\pi_{\bm{\theta}|\x,\D}(\mu_{\Loss}(a^*(\x),\bm{\theta},\x) | \x,\D) = \begin{cases}
\pi_{\bm{\theta}|\x,\D}(\mu_{\Loss}(1,\bm{\theta},\x) | \x,\D) \quad \text{if} \ a^*(\x)=1 \\
\pi_{\bm{\theta}|\x,\D}(\mu_{\Loss}(0,\bm{\theta},\x) | \x,\D) \quad \text{if} \ a^*(\x)=0.
\end{cases}
\end{align}
Given $a^*(\x)$, the posterior of $\mu_{Y}$, see  (\ref{eq:mu_Y}), only depends on the marginal parameters
\begin{align}
\label{eq:posterior_mu_Y}
\pi_{\bm{\theta}|\x,\D}(\mu_{Y}(a^*(\x),\bm{\theta},\x) | \x,\D) = \begin{cases}
\pi_{\theta_{1+}|\x,\D}(\mu_{Y}(1,\theta_{+1},\x) | \x,\D) \quad \text{if} \ a^*(\x)=1 \\
\pi_{\theta_{+1}|\x,\D}(\mu_{Y}(0,\theta_{1+},\x) | \x,\D) \quad \text{if} \ a^*(\x)=0.
\end{cases}
\end{align}
Point estimates are given by the posterior means and  posterior  $(1-\gamma)$-credible intervals can be derived. 

\subsection{Frequentist coverage of credible intervals}

The optimal treatment decision rule (\ref{eq:OTR_rule1}) selects $a^*$ for which the posterior mean is smallest. Therefore, the posterior mean estimates and credible intervals may be biased by optimism. We consider effects on frequentist coverage that optimism may have. Define the following $(1-\gamma)$-credible interval for $\mu_{\Loss}(1, \bm{\theta},\x )$:
\[ I(1) = [ Q(\gamma/2), Q(1-\gamma/2) ], \]
where $Q$ is the quantile function of the posterior distribution of $\mu_{\Loss}(1, \bm{\theta},\x )$. Similarly, we define a $(1-\gamma)$-credible interval $I(0)$ for $\mu_{\Loss}(0, \bm{\theta},\x )$. Let $I(a^*)$ be the interval corresponding to the optimal treatment, i.e. we select either $I(1)$ or $I(0)$ based on the decision $a^*$. 

To consider frequentist coverage, let $\bm{\theta}_0$ be the true value of the parameter, i.e. the point at which the posterior distribution of $\bm{\theta}$ conditional on $\X=\x$ contracts as the sample size tends to infinity. Suppose that the coverage of $I(1)$ for $\mu_{\Loss}(1, \bm{\theta}_0,\x )$ is $1-\alpha$:
\begin{equation}\label{coverageI1}
P( \mu_{\Loss}(1, \bm{\theta}_0,\x ) < Q(\gamma/2)) = \alpha/2, \qquad 
P( \mu_{\Loss}(1, \bm{\theta}_0,\x ) > Q(1-\gamma/2)) = \alpha/2,
\end{equation}
where $P$ refers to the distribution of the expected posterior loss $\mu_{\Loss}(1,\bm{\theta},\x)$ over repeated data sets given a specific $\x$. Similarly assume that $I(0)$ has coverage $1-\alpha$ for $\mu_{\Loss}(0, \bm{\theta}_0,\x )$. We define the coverage of $I(a^*)$ for $\mu_{\Loss}(a^*, \bm{\theta}_0,\x )$ as $P(\mu_{\Loss}(a^*, \bm{\theta}_0,\x ) \in I(a^*))$. 

The following theorem gives a crude lower bound on the coverage of $I(a^*)$. It holds in general and is similar in spirit to Bonferroni bounds.

\begin{theorem}\label{thm:bonferroni}
	Suppose the credible intervals $I(1),I(0)$ have frequentist coverage $1-\alpha$. Then the frequentist coverage of $I(a^*)$ for $\mu_{\Loss}(a^*, \bm{\theta}_0,\x)$ is at least $1-2\alpha$.
\end{theorem}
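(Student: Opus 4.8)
The plan is to control the non-coverage probability of $I(a^*)$ by a union bound, using the observation that, whatever rule produces $a^*\in\{0,1\}$ from the data, the event ``$I(a^*)$ misses $\mu_{\Loss}(a^*,\bm{\theta}_0,\x)$'' is contained in the union of the two \emph{fixed} events ``$I(1)$ misses $\mu_{\Loss}(1,\bm{\theta}_0,\x)$'' and ``$I(0)$ misses $\mu_{\Loss}(0,\bm{\theta}_0,\x)$''. Throughout, $\x$ and $\bm{\theta}_0$ are held fixed and $P$ denotes probability over repeated data sets, as in (\ref{coverageI1}).

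First I would decompose the non-coverage event according to the value of $a^*$:
\[
\big\{\mu_{\Loss}(a^*,\bm{\theta}_0,\x)\notin I(a^*)\big\}
=\big(\{a^*=1\}\cap\{\mu_{\Loss}(1,\bm{\theta}_0,\x)\notin I(1)\}\big)
\;\cup\;\big(\{a^*=0\}\cap\{\mu_{\Loss}(0,\bm{\theta}_0,\x)\notin I(0)\}\big),
\]
a disjoint union since $a^*$ takes exactly one of the values $0,1$. Dropping the factors $\{a^*=1\}$ and $\{a^*=0\}$ yields the containment
\[
\big\{\mu_{\Loss}(a^*,\bm{\theta}_0,\x)\notin I(a^*)\big\}
\;\subseteq\;\big\{\mu_{\Loss}(1,\bm{\theta}_0,\x)\notin I(1)\big\}\cup\big\{\mu_{\Loss}(0,\bm{\theta}_0,\x)\notin I(0)\big\}.
\]
Finally, subadditivity of $P$ together with the assumed coverages --- $P(\mu_{\Loss}(1,\bm{\theta}_0,\x)\notin I(1))=\alpha$ and likewise for $I(0)$, both of which follow from (\ref{coverageI1}) --- gives $P(\mu_{\Loss}(a^*,\bm{\theta}_0,\x)\notin I(a^*))\le 2\alpha$, i.e. coverage at least $1-2\alpha$.

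Since this is essentially a Bonferroni argument, there is no substantive analytic difficulty; the one point to be careful about is that the set containment above must be justified \emph{without} invoking any independence between $a^*$ and the intervals $I(1),I(0)$, because the optimism induced by the $\argmin$ in (\ref{eq:OTR_rule1}) creates exactly such dependence --- it is precisely by discarding the events $\{a^*=j\}$ that we avoid needing to know it. I would close by noting that the factor $2$ is not improvable at this level of generality: if the two non-coverage events were disjoint and the selection rule always picked the interval that happens to miss, the bound $1-2\alpha$ would be attained, which is the sense in which the bound is ``crude''.
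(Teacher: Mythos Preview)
Your proof is correct and is essentially the same Bonferroni argument as the paper's: the paper decomposes the coverage event as $P(\mu_{\Loss}(1,\bm{\theta}_0,\x)\in I(1),\,a^*=1)+P(\mu_{\Loss}(0,\bm{\theta}_0,\x)\in I(0),\,a^*=0)$ and lower-bounds each summand by $P(a^*=j)-\alpha$, whereas you take the complementary route and upper-bound the non-coverage probability by a union bound --- these are dual formulations of the identical inequality.
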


For a proof see Appendix \ref{sec:appendix_A}. Under stricter conditions we now give a more detailed bound on the coverage of $I(a^*)$. It is shown that the coverage of $I(a^*)$ can be equal to $1-\alpha$ despite the optimism.

\begin{theorem}\label{thmcredible}
	Suppose the credible intervals $I(1),I(0)$ have frequentist coverage $1-\alpha$ in the sense of \eqref{coverageI1}, suppose the distribution of the means of $I(1),I(0)$ over repeated data sets is the (correlated) bivariate normal distribution with variances $\sigma^2(1),\sigma^2(0)$, and suppose the widths of $I(1),I(0)$ are constant over repeated data sets. Then the frequentist coverage of $I(a^*)$ for $\mu_{\Loss}(a^*, \bm{\theta}_0,\x )$ is
	\begin{itemize}
		\item[(i)] equal to $1-\alpha$, if $\mu_{\Loss}(1, \bm{\theta}_0,\x )=\mu_{\Loss}(0, \bm{\theta}_0,\x )$ or $\sigma^2(1) = \sigma^2(0)$,
		\item[(ii)] strictly between $1-\alpha$ and $1-\frac{1}{2}\alpha$, if $\mu_{\Loss}(1, \bm{\theta}_0,\x )>\mu_{\Loss}(0, \bm{\theta}_0,\x )$ and $\sigma^2(1)<\sigma^2(0)$, or if $\mu_{\Loss}(1, \bm{\theta}_0,\x )<\mu_{\Loss}(0, \bm{\theta}_0,\x )$ and $\sigma^2(1)>\sigma^2(0)$,
		\item[(iii)] strictly between $1-\frac{3}{2}\alpha$ and $1-\alpha$, if $\mu_{\Loss}(1, \bm{\theta}_0,\x )>\mu_{\Loss}(0, \bm{\theta}_0,\x )$ and $\sigma^2(1) > \sigma^2(0)$, or if $\mu_{\Loss}(1, \bm{\theta}_0,\x )<\mu_{\Loss}(0, \bm{\theta}_0,\x )$ and $\sigma^2(1) < \sigma^2(0)$.
	\end{itemize}
\end{theorem}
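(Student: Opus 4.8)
The plan is to reduce the statement to a two‑dimensional Gaussian computation. Write $m_a:=\mu_{\Loss}(a,\bm{\theta}_0,\x)$ and let $M_a$ be the midpoint of $I(a)$, which I identify with the posterior mean entering \eqref{eq:OTR_rule1}. By the bivariate‑normality and constant‑width hypotheses, $(M_1,M_0)$ is bivariate normal over repeated data sets with variances $\sigma^2(1),\sigma^2(0)$ and $I(a)=[M_a-h_a,M_a+h_a]$ with $h_a$ deterministic; matching the two tail probabilities in \eqref{coverageI1} forces $M_a$ to be unbiased for $m_a$ and $h_a=z\,\sigma(a)$ with $z:=\Phi^{-1}(1-\tfrac12\alpha)$. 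Put $U:=(M_1-m_1)/\sigma(1)$, $V:=(M_0-m_0)/\sigma(0)$ (standard bivariate normal, correlation $r$), $\delta:=m_0-m_1$ and $S:=\sigma(1)U-\sigma(0)V$. Then $a^*=0\iff S>\delta$, while $\{m_1\notin I(1)\}=\{|U|>z\}$ and $\{m_0\notin I(0)\}=\{|V|>z\}$.

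Since $P(|U|\le z)=P(|V|\le z)=1-\alpha$, conditioning on $S$ shows the coverage of $I(a^*)$ equals $(1-\alpha)+C$, where
\[
C=P(S>\delta,|V|\le z)-P(S>\delta,|U|\le z)=\int_{\delta}^{\infty}\big(\Psi_V(s)-\Psi_U(s)\big)f_S(s)\,ds,
\]
with $\Psi_U(s):=P(|U|\le z\mid S=s)$, $\Psi_V(s):=P(|V|\le z\mid S=s)$ and $f_S$ the centred normal density of $S$; equivalently $C=P(S>\delta,|U|>z,|V|\le z)-P(S>\delta,|U|\le z,|V|>z)$. Two symmetries organise the rest: the law‑preserving reflection $(U,V)\mapsto(-U,-V)$ shows $C$ is odd in $\delta$ for fixed variances, and relabelling the two treatments leaves the coverage unchanged while sending $(\delta,\sigma(1),\sigma(0))\mapsto(-\delta,\sigma(0),\sigma(1))$; composing them gives $C(\delta;\sigma(1),\sigma(0))=-C(\delta;\sigma(0),\sigma(1))$.

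Part (i) is then immediate: $\delta=0$ forces $C=-C=0$, and $\sigma(1)=\sigma(0)$ forces $C(\delta;\sigma,\sigma)=-C(\delta;\sigma,\sigma)=0$. For (ii)--(iii) the symmetries reduce everything to the canonical configuration $\delta>0,\ \sigma(1)>\sigma(0)$ — case (ii) is this (up to relabelling) and case (iii) is its image under $C\mapsto-C$ — so it suffices to prove $0<C<\tfrac12\alpha$, strictly, when $\delta>0$ and $\sigma(1)>\sigma(0)$. For the upper bound: in this configuration $\{S>\delta\}\cap\{U<-z\}$ forces $\sigma(0)V<\sigma(1)U-\delta<-\sigma(1)z-\delta$, hence $V<-(\sigma(1)z+\delta)/\sigma(0)<-z$; thus $\{S>\delta,|U|>z,|V|\le z\}=\{S>\delta,U>z,|V|\le z\}$, which has probability at most $P(U>z)=\tfrac12\alpha$ and strictly less (on $\{U>z\}$ with $V$ large, $S<\delta$). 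As $C\le P(S>\delta,|U|>z,|V|\le z)$, this gives $C<\tfrac12\alpha$.

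The lower bound $C>0$ is the technical core. Oddness and $C(0;\cdot)=0$ give $C=-\int_0^{\delta}\eta\,f_S$ with $\eta:=\Psi_V-\Psi_U$, which is even; $\eta(0)<0$ because, conditionally on $S=0$, $U$ and $V$ are centred normals with variances $v_U<v_V$, so $V$ lands in $[-z,z]$ less often; and $\eta(s)\to0^{+}$ as $|s|\to\infty$ because $\Psi_U$ decays strictly faster than $\Psi_V$ (the conditional mean of $U$ given $S=s$ leaves $[-z,z]$ faster, which is exactly where $\sigma(1)>\sigma(0)$ enters). It remains to show $\eta$ changes sign exactly once on $(0,\infty)$, from $-$ to $+$; granting this, $\int_0^{\delta}\eta f_S<\int_0^{\infty}\eta f_S=0$ for every $\delta>0$, so $C>0$, which together with the upper bound proves (ii), hence (iii) and the theorem. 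The single‑crossing claim amounts to the monotonicity of $s\mapsto\Psi_V(s)/\Psi_U(s)$ on $(0,\infty)$; I would establish it from the log‑concavity of $m\mapsto P\big(N(m,v)\in[-z,z]\big)$ (a convolution of log‑concave functions) together with the elementary inequalities $\sigma(1)-r\sigma(0)>|\sigma(0)-r\sigma(1)|$ and $(\sigma(1)-r\sigma(0))/\sigma(0)^2>|\sigma(0)-r\sigma(1)|/\sigma(1)^2$, both valid when $\sigma(1)>\sigma(0)$. I expect this monotonicity — and with it the strictness of the lower bound — to be the main obstacle; everything else is routine manipulation of Gaussian integrals.
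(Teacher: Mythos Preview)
Your reduction to a bivariate normal problem, the identification of the intervals as $[M_a\pm z\sigma(a)]$, and the two symmetries you use for part~(i) are all correct and essentially coincide with the paper's setup (the paper proves an auxiliary Theorem~\ref{thmcoverage} about confidence intervals for a bivariate normal and then reads off Theorem~\ref{thmcredible} exactly as you do). Your upper bound $C<\tfrac12\alpha$ in the canonical case is also fine and parallels the paper's argument that one of the six boundary regions in its decomposition is empty.

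The genuine gap is the lower bound $C>0$. You reduce it to a single--crossing property of $\eta=\Psi_V-\Psi_U$ on $(0,\infty)$ and propose to obtain it from monotonicity of $\Psi_V/\Psi_U$ via ``log-concavity of $m\mapsto P(N(m,v)\in[-z,z])$ together with the inequalities $\sigma(1)-r\sigma(0)>|\sigma(0)-r\sigma(1)|$ and $(\sigma(1)-r\sigma(0))/\sigma(0)^2>|\sigma(0)-r\sigma(1)|/\sigma(1)^2$''. Those two inequalities are indeed equivalent to $|a_U|>|a_V|$ and $a_U^2/v_U>a_V^2/v_V$ (so they give your claimed signs at $s=0$ and at $s=\infty$), but log-concavity of each $\Psi_a$ separately does \emph{not} yield monotonicity of their ratio: you would need an ordering of the log-derivatives $a_U|(\log g_U)'(a_U s)|\ge |a_V|\,|(\log g_V)'(|a_V|s)|$ for all $s>0$, which mixes two different scales $a_U,a_V$ and two different variances $v_U,v_V$, and does not follow from log-concavity alone. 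As written this is a sketch of a strategy, not a proof, and you rightly flag it as the obstacle.

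The paper avoids this analytical difficulty altogether. Instead of conditioning on the decision variable $S$, it works in the standardized plane $(\tilde X,\tilde Y)$, decomposes the coverage event into six rectangular strips \eqref{2a}--\eqref{1c}, and shows directly that the coverage is $\ge$, $=$, or $\le$ the quantity $P(\tilde X>-\xi,\tilde Y>-\xi)-P(\tilde X>\xi,\tilde Y>\xi)$, which equals $1-\alpha$ by the central symmetry $(\tilde X,\tilde Y)\stackrel{d}{=}(-\tilde X,-\tilde Y)$. The comparison is done by elementary geometry: one splits into two sub-cases depending on whether the decision boundary (the line $\sigma\tilde x+\mu=\tau\tilde y+\nu$) passes above or below the corner $(-\xi,\xi)$, and in each sub-case a single region is compared to its image under one of the two reflections $(\tilde X,\tilde Y)\mapsto(-\tilde X,-\tilde Y)$ or $(\tilde X,\tilde Y)\mapsto(-\tilde Y,-\tilde X)$. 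No conditional distributions and no single-crossing lemma are needed; the whole argument is a page of set inclusions. I would recommend abandoning the $\Psi_V/\Psi_U$ route and adopting this geometric comparison for the strict inequality.
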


For the proof of Theorem \ref{thmcredible} see Appendix \ref{sec:appendix_A}. In particular, case (i) indicates that the coverage is $1-\alpha$ when there is no treatment effect ($\Delta_{\Loss}$) on the expected loss.  
From the proof it follows that the bounds $1-\frac{1}{2}\alpha$ and $1-\frac{3}{2}\alpha$ in cases (ii) and (iii) are approximated only for extreme values of the means and standard deviations. Therefore, Theorem \ref{thmcredible} says that, for reasonable values of the parameters and under the assumptions of the theorem, coverage is approximately $1-\alpha$. In Section \ref{sec:simulation}, we report on a simulation assessing the empirical performance of the credible intervals under the OTR.

\subsection{Sensitivity analysis} \label{sec:sensitivity}
Inference may be sensitive to the non-identifiable parameter $\phi \in (0,\infty)$ which denotes the partial odds ratio between the potential outcomes. Prior assumptions on $\phi$ impact the joint posterior $\pi(\theta_{1+},\theta_{+1},\phi|\D)$, (\ref{eq:post_prior_indep}), and thus  $\pi(\bm{\theta}|\D)$. The parameters affected by $\phi$, however, vary depending on the loss parametrization (Table \ref{tab:sensitivity}). The marginal parametrization generally leads to non-sensitive decisions $a^*$, $a^*_{\text{OTRmax}}$ and posteriors of $\mu_{\Loss}$ and $\mu_{Y}$.
For conditional loss, decision $a^*$ (\ref{eq:OTR_rule1}) is sensitive to $\phi$, as it depends on $\pi(\bm{\theta}|\D)$.  However, using the parametrization OTRmax, decision $a^*_{\text{OTRmax}}$ only depends on the marginal probabilities, (\ref{eq:Delta_Y}), so that it is non-sensitive to $\phi$. The posteriors of $\mu_{\Loss}$ and $\mu_{Y}$ generally depend on $\phi$ via $a^*$. Even when fixing $a^*=a$, the posterior of $\mu_{\Loss}$, (\ref{eq:posterior_mu_L}), still depends on $\pi(\bm{\theta}|\D)$. The posterior of $\mu_Y$, however, is non-sensitive for fixed $a^* = a$ as it only depends on the marginal probabilities. These assertions also hold for the posteriors of $\mu_{\Loss}$ and $\mu_{Y}$ under $a^*_{\text{OTRmax}}$. For a sensitivity analysis, we use a symmetric interval around reference $\phi_0$  (e.g., $\phi_0=1$) with conservative upper and lower bounds $[u,l]$. Decisions $a^*$ are determined at $\phi=(l,\phi_0,u)$; a sensitive decision is affected by the extreme choices $[u,l]$.  For patients with sensitive decisions the optimal treatment decision cannot be determined without an assumption on $\phi$. For patients with non-sensitive decisions we first take the optimal decision and then, holding the decision fixed at $a=a^*$, we define an uniform prior on $\phi \sim U(l,u)$, so that uncertainty in $\phi$ is reflected in the posterior of $\mu_{\Loss}$.
	
\begin{table}[h]
	\centering
	\begin{threeparttable}
		\caption{ Parameter sensitivity to partial correlation $\phi$ by parametrizations. }
		\begin{tabular}{l c c c c } 
			\label{tab:sensitivity}
		    & Conditional & Marginal \\
			\hline
			\rule{0pt}{15pt}
			$a^*(\x)$ & yes & no \\
			\rule{0pt}{15pt}
			$a^*_{\text{OTRmax}}(\x)$ & no & no \\
			\rule{0pt}{15pt}
			$\mu_{\Loss}(a^*(\x)=a,\bm{\theta},\x)$ & yes & no \\
			\rule{0pt}{15pt}
			$\mu_{Y}(a^*(\x)=a,\bm{\theta},\x)$ & no & no \\
			\rule{0pt}{15pt}
			$\mu_{\Loss}(a^*_{\text{OTRmax}}(\x)=a,\bm{\theta},\x)$ & yes & no \\
			\rule{0pt}{15pt}
			$\mu_{Y}(a^*_{\text{OTRmax}}(\x)=a,\bm{\theta},\x)$ & no & no \\
			\hline 		
		\end{tabular}
	\end{threeparttable}
\end{table}

\section{Simulation study}      \label{sec:simulation}
The objective of the simulation study was to evaluate the performance of the method in two settings: (1) functional approximation of the marginal probabilities by Bayesian Additive Regression Trees \parencite[BART;][]{chipman_bart:_2010} and (2) estimation by logistic regression models. We evaluated effects on the
\begin{enumerate}
	\item Accuracy of the Bayes decisions $a^*_1(\x)$, 
	\item Bias of posterior mean estimates of expected loss and outcome under the OTR ($\mu_{\Loss}(a^*(\x),\bm{\theta},\x)$ and$\ \mu_{Y}(a^*(\x),\bm{\theta},\x)$), and
	\item Interval width and frequentist coverage probability of credible intervals.
\end{enumerate}
We considered the impact of five factors: the strength of heterogeneity in the treatment effects $\Delta_Y(\bm{\theta},X_1)$ across a covariate $X_1$, the strength of confounding in treatment assignment, sample size, the type of loss function, and the presence of unrelated noise variables in $\X$.
\subsection{Simulation set-up}
We first generated one variable $X_1 \sim U (-1,1)$ and specified the true outcome models as
\begin{align}
\E(Y(0) | X_1; \bm{\beta}_0) &= \text{expit}(\beta_{00} + \beta_{10} X_1 + \beta_{20} X_1^2 + \beta_{30} X_1^3) \nonumber\\
\E(Y(1) | X_1; \bm{\beta}_1) &= \text{expit}(\beta_{01} + \beta_{11} X_1 + \beta_{21} X_1^2 + \beta_{31} X_1^3) \label{eq:true_mod},
\end{align}
\sloppy with $\text{expit}(x)=1/(1+\exp(-x))$ and $\bm{\beta}_j = (\beta_{0j},\beta_{1j},\beta_{2j},\beta_{3j})^T$, $j=0,1$. Models (\ref{eq:true_mod}) determine the true conditional marginal probabilities $(\theta_{+1},\theta_{1+})|X_1$; hence from (\ref{eq:Delta_Y}) we have $\Delta_Y(\bm{\theta},X_1) = \E(Y(1) | X_1; \bm{\beta}_1) - \E(Y(0) | X_1; \bm{\beta}_0)$. We fixed $\bm{\beta}_1$ at $\bm{\beta}_1=(0.457,3.185,-1.593,-2.124)^T$ and set $\bm{\beta}_0$ at $\bm{\beta}_0=( \beta_{01},-\beta_{11},\beta_{21},-\beta_{31})^T$ (strong heterogeneity), $\bm{\beta}_0=(0.457,1.343,-1.430, -1.217)^T$ (mild heterogeneity), or $\bm{\beta}_0=\bm{\beta}_1$ (no heterogeneity). This choice for the parameters in (\ref{eq:true_mod}) leads to the realistic scenario of a biomarker impacting treatment success as shown in Figure \ref{fig:sim_setup}. 

\begin{figure}[t!]  
	\begin{subfigure}{0.48\textwidth}
		\includegraphics[trim= 4.5cm 1cm 5.5cm 2cm, clip=true, width=\linewidth]{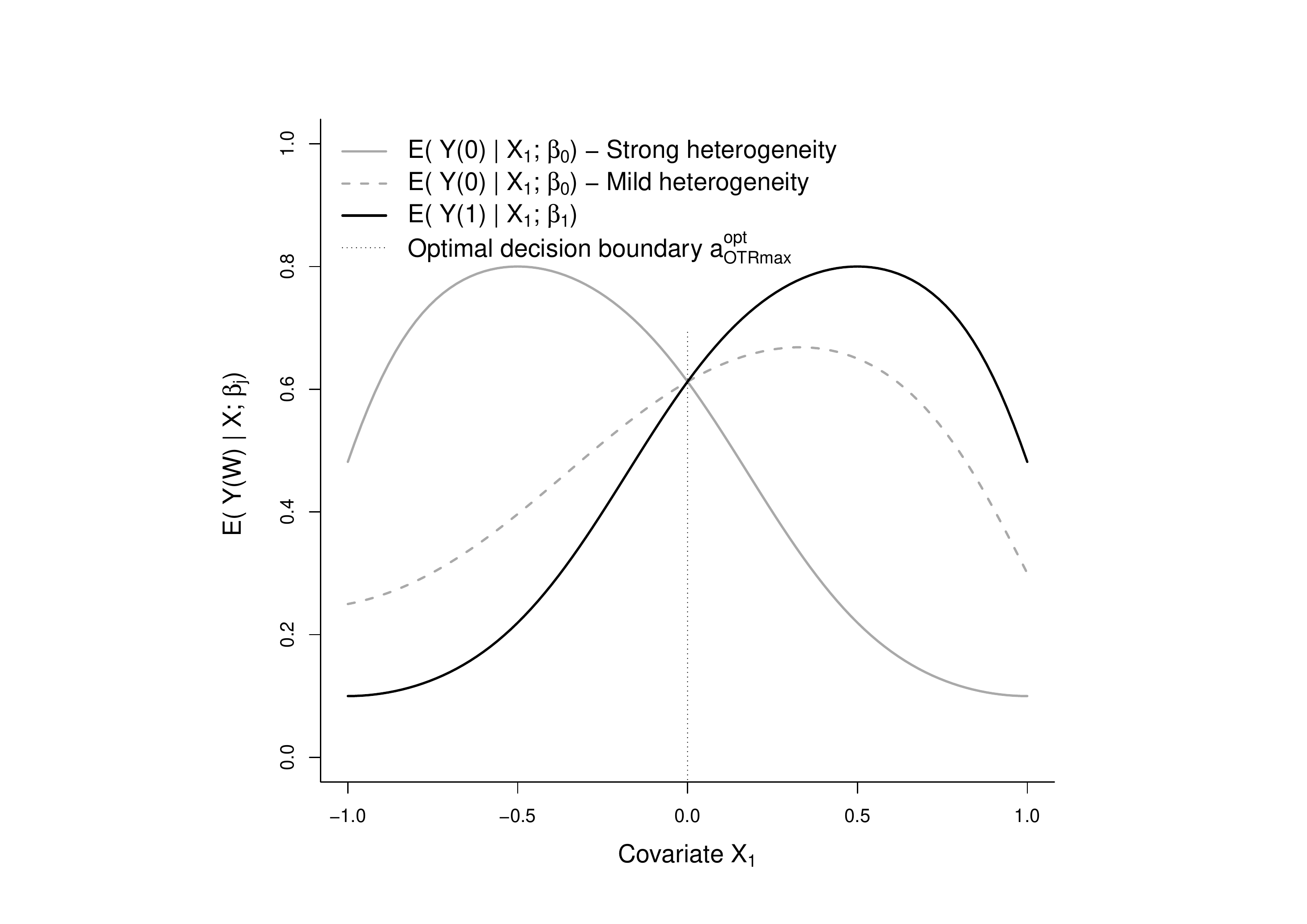}
		\caption{True functional forms $\E(Y(W=j)|X_1;\bm{\beta}_j)$} \label{fig:sim_setup}
	\end{subfigure}\hspace*{\fill}
	\begin{subfigure}{0.48\textwidth}
		\includegraphics[trim= 4.5cm 1cm 5.5cm 2cm, clip=true, width=\linewidth]{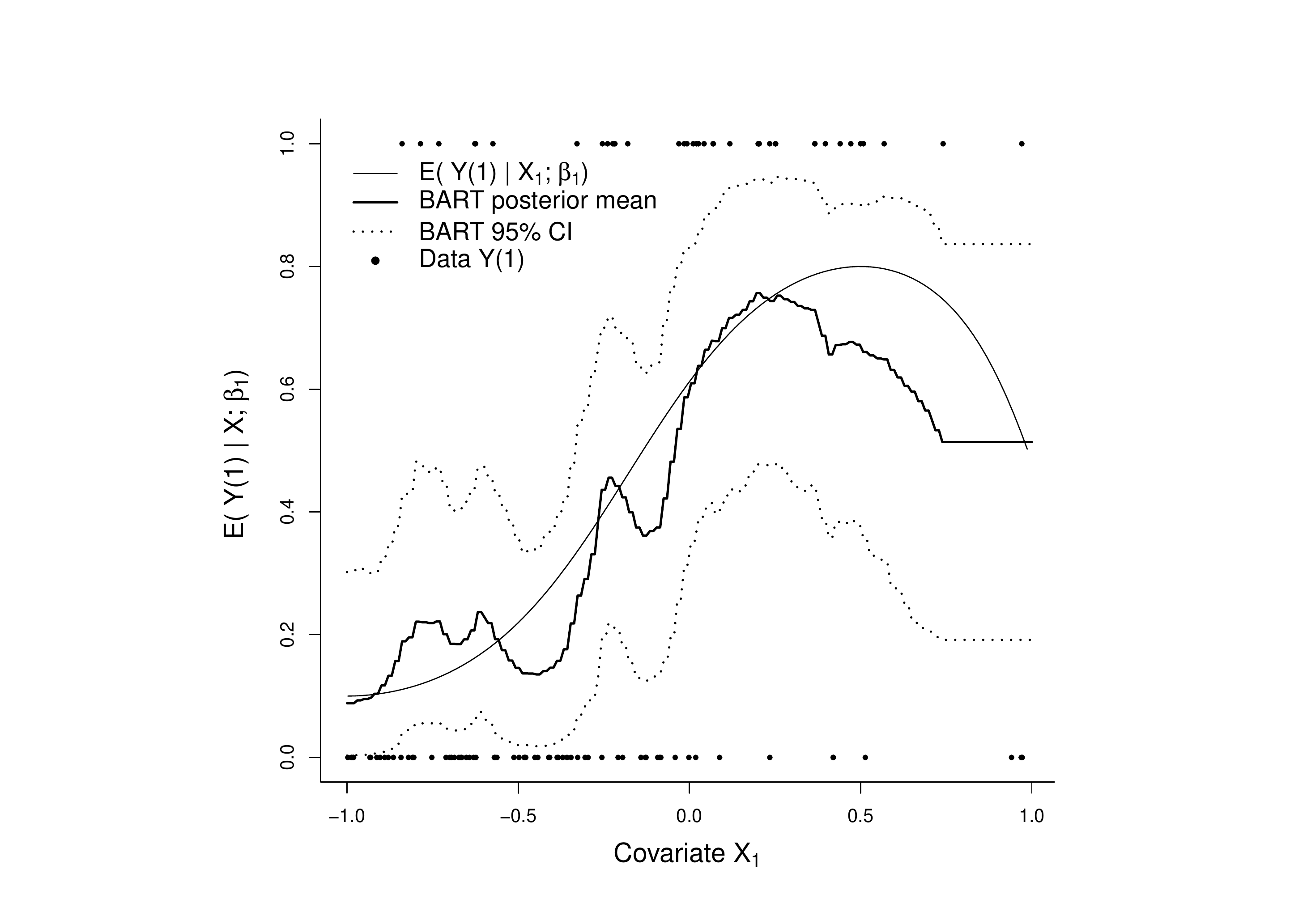}
		\caption{BART approximation to $\E(Y(1)|X_1;\bm{\beta}_1)$} \label{fig:_bart_y_approx}
	\end{subfigure}
	
	\medskip
	\begin{subfigure}{0.48\textwidth}
		\includegraphics[trim= 4.5cm 1cm 5.5cm 2cm, clip=true, width=\linewidth]{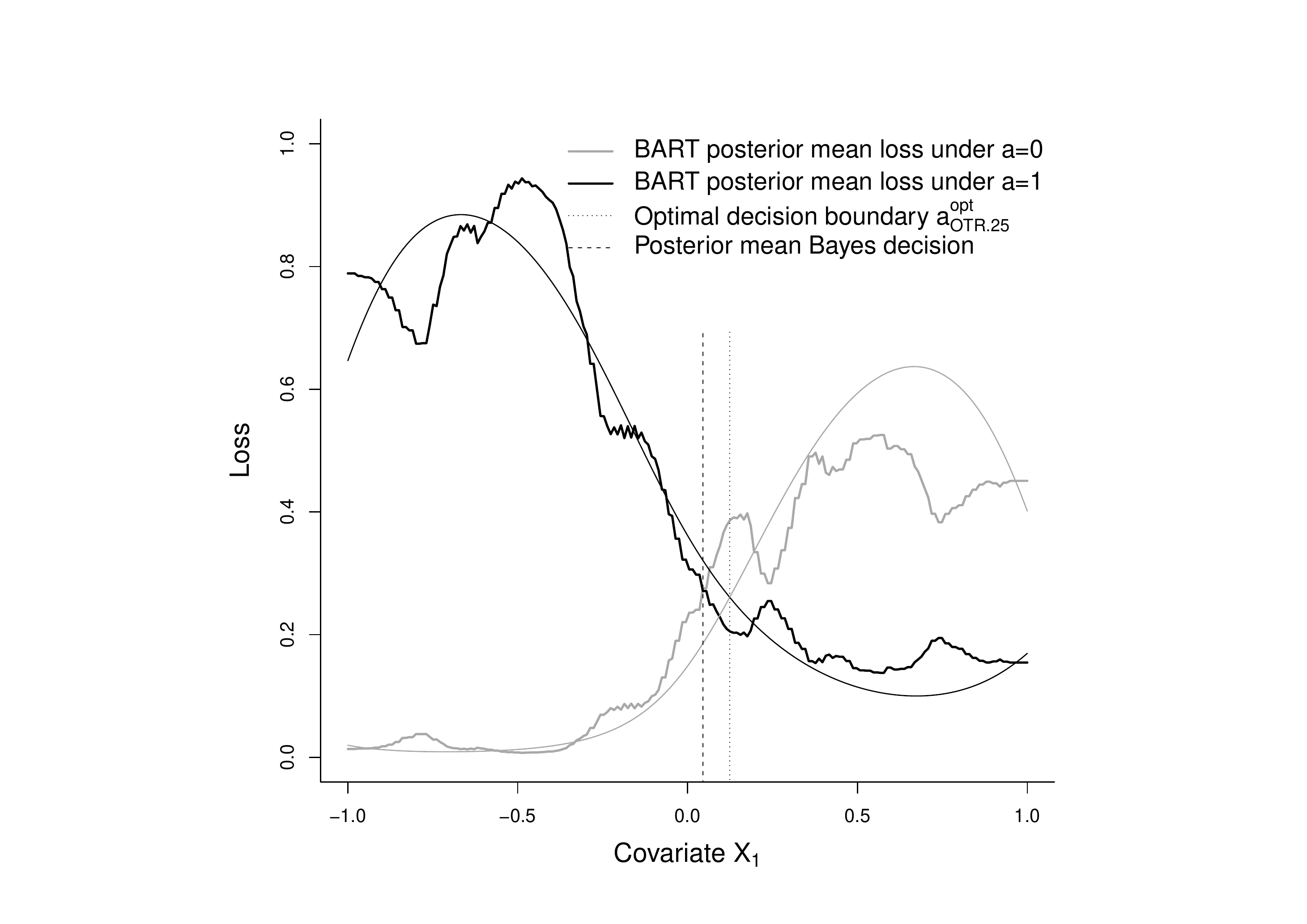}
		\caption{BART approximation to loss $\mu_{\Loss}(a,\bm{\theta}_0,X_1)$} \label{fig:_bart_L_approx}
	\end{subfigure}\hspace*{\fill}
	\begin{subfigure}{0.48\textwidth}
		\includegraphics[trim= 4.5cm 1cm 5.5cm 2cm, clip=true, width=\linewidth]{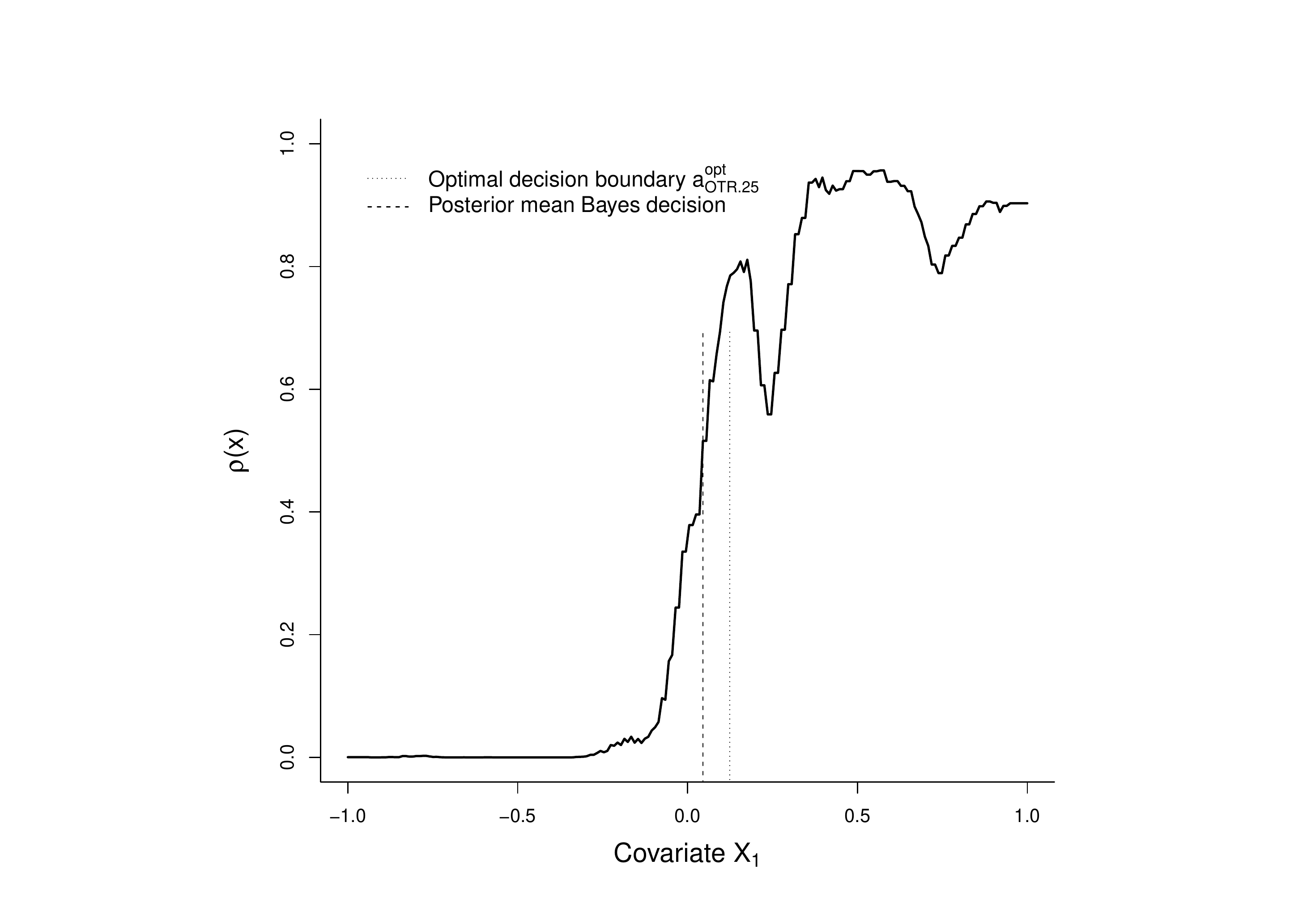}
		\caption{Posterior probability $\rho(X_1)$} \label{fig:_bart_pcp} 
	\end{subfigure}
	\caption{Illustration of simulation set-up and approximation with BART. (a) Marginal probabilities for $(Y(0), Y(1))$ by degree of treatment effect heterogeneity, (b) approximation of the marginal probability of $Y(1)$ with BART, (c) estimated decision boundary at intersection of the posterior means of losses, and (d) associated posterior probability $\rho(X_1)$. Figures (b)-(d) use $\lambda=-\log(3)$ and $n=250$ under strong treatment effect heterogeneity and OTR.25 loss. }
\end{figure}

We simulated $W$ from a Bernoulli distribution with success probability
\begin{align}
\label{eq:treatment_assignment_mech}
\E(W | X_1; \lambda) &= \text{expit} \bigg( \frac{\lambda (X_1-\bar{X_1})}{\text{sd}(X_1)} \bigg)
\end{align}
\sloppy where $\bar{X_1}$ and $\text{sd}(X_1)$ are sample mean and standard deviation. We chose $\lambda \in \{ -\log(3),0,\log(3) \}$, where $\lambda = \pm \log(3)$ caused strong selectivity at the boundaries of the support of $X_1 \in (-1,1)$. Condition $\lambda=0$ represented randomized assignment.

We considered two sets $\X$, first, $\X=X_1$ and, second, $\X= ( X_1,X_2,X_3,X_4,X_5, X_6 )$, where $X_k \sim U (-1,1)$, $k=2,...,6$, are independent noise variables unrelated to outcome and treatment assignment; let $q \in \{0,5\}$ denote the number of noise variables. Furthermore, we varied sample size $n \in \{250,500,1000\}$ and considered two combinations of loss functions and partial correlations in the conditional parametrization (Table \ref{tab:loss}), i.e. $(L^{(1)}_{00}, L^{(0)}_{01}, L^{(1)}_{10}, L^{(1)}_{11} ) = (0,1,1,0)$ with $\phi=1$ and $(0.25,1,1.25,0.25)$ with $\phi=5$. The former parametrization leads to OTRmax (\ref{eq:OTR_max}); the latter is referred to as OTR.25 in the following, where we penalised the loss due to unnecessary burden by 25\% of the loss of a wrong treatment decision.
Let $a^{opt}$ denote the optimal assignment rule determined on the marginal probabilities implied by true model (\ref{eq:true_mod}) and a loss function. Figures of the resulting true loss functions and optimal decisions  $a^{opt}$ for strong, mild, and no treatment effect heterogeneity are provided in Appendix \ref{sec:appendix_B} for both OTRmax and OTR.25. For the OTRmax regime maximizing expected outcome, (\ref{eq:OTR_max}), the optimal decision boundary is also shown in Figure \ref{fig:sim_setup}. In particular, $a^{opt}_{\text{OTRmax}}(X_1)= \mathbbm{1}_{ \{X_1>0\}}$, i.e. where $\Delta_Y(\bm{\theta},X_1)>0$ the decision should be $a=1$ for mild or strong heterogeneity and $a=0$ elsewhere. A special case emerges when $\bm{\beta}_0=\bm{\beta}_1$ (no heterogeneity), i.e. choosing any treatment has same marginal outcome probability (and loss) for any $X_1$. Then infinitely many optimal regimes $a^{opt}_{\text{OTRmax}}(X_1)$ exist. In this case, we chose $a^{opt}_{\text{OTRmax}}=0$ as a benchmark to compare Bayes decisions $a^*_1(X_1)$ against. 

\subsection{Function approximation with BART}
BART are flexible non-parametric function approximators with similarities to random forests \parencite{breiman_random_2001} and gradient boosting \parencite{friedman_greedy_2001}. However, BART offer the additional advantage of inference on the posterior expected loss, posterior expected outcome, and the posterior probability of a correct treatment decision. We illustrate decision taking with BART in Figures \ref{fig:_bart_y_approx} to \ref{fig:_bart_pcp}.  Figure \ref{fig:_bart_y_approx} shows approximation and credible intervals for $\E(Y(1)|X_1;\bm{\beta}_1)$ for a single data set in the simulation with $\lambda=-\log(3)$ and $n=250$. Intervals widen in the sparse region of $(Y(1),X_1)$ on the right, which is caused by selective treatment assignment. Figure \ref{fig:_bart_L_approx} shows true loss functions under $a=0,1$ for the OTR.25 condition and strong treatment effect heterogeneity. The optimal decision boundary $a_{\text{OTR.25}}^{opt}$ is determined on the true losses and Bayes decision  $a_{\text{OTR.25}}^{*}$ (\ref{eq:OTR_rule1}) is determined on the posterior means. For posterior probabilities $\rho(X_1)$ (\ref{eq:pcp}) see Figure \ref{fig:_bart_pcp}.  Close to the Bayes decision boundary there is higher uncertainty, reflected by $\rho$ close to 0.5. 

For dichotomous outcomes $Y$, BART uses a probit sum-of-trees model consisting of $t=1,...,T$ separate regression trees $g$ modelling latent variable $Z_i = \sum_{t=1}^T g(\x_i; \mathcal{T}_t, \mathcal{M}_t) + \epsilon_i$, where $\epsilon_i \sim \mathcal{N}(0, 1)$ and $Y_i= \mathbbm{1}_{ \{ Z_i>0 \} }$. Function $g$ returns the leaf mean associated with observation $\x_i$, where $\mathcal{T}_t$ denotes a tree's decision rules, and $\mathcal{M}_t=\{\nu_{1t},...,\nu_{mt},...,\nu_{Mt} \}$ the vector of $M$ leaf node means. BART specifies prior distributions $p(\mathcal{T}_t)$ and $p(\nu_{mt} |\mathcal{T}_t )$ regularizing the depth of trees and the extremity of leave means to prevent BART from overfitting. Prior $p(\mathcal{T}_t)$ assigns probability $1-\kappa(1+\delta)^{-\eta}$ to the event that the node at depth $\delta$ is the end node with defaults ($\kappa=0.95$, $\eta=2$) strongly favouring short trees with mode at $\delta=2$. Uniform priors are used for splitting variable assignment and splitting values. Prior $p(\nu_{mt} |\mathcal{T}_t )$ is $N(0,\sigma^2)$ with $\sigma^2=3k/\sqrt{T}$ effectively shrinking the leave means to zero to limit the influence of the individual nodes on the overall fit (recommended $k=2$). The posterior distribution $\pi( (\mathcal{T}_1, \mathcal{M}_1),...,(\mathcal{T}_T, \mathcal{M}_T) | Y )$ is simulated using an MCMC backfitting algorithm with data augmentation for $Z_i=Z$ \parencite{chipman_bart:_2010}. Our implementation follows the procedure described in section \ref{sec:estimation}. We modelled observed $Y(0)$ and $Y(1)$ separately leading to two posterior distributions $\pi( (\mathcal{T}_1, \mathcal{M}_1),...,(\mathcal{T}_T, \mathcal{M}_T) | Y(0),X )$ and $\pi( (\mathcal{T}_1, \mathcal{M}_1),...,(\mathcal{T}_T, \mathcal{M}_T) | Y(1),X )$. For each draw from the posteriors, $Z$ is obtained and evaluated at $\Phi(Z)$, where $\Phi$ is the cumulative normal distribution, yielding samples from $\pi(\theta_{1+}|Y(0),\X)$ and $\pi(\theta_{+1}| Y(1),\X)$, respectively. 

\subsection{Implementation}
BART was implemented in \R{} using \pkg{pbart} from package \pkg{BART} with settings $k=2$, $\kappa=0.95$, $\eta=2$, and $T=50$, recommended by \textcite{chipman_bart:_2010}. We compared the performance of BART to Bayesian logistic regression. Without noise variables in the set $\X=X_1$ ($q=0$)  the models were (correctly) specified as in (\ref{eq:true_mod}). With noise variables ($q=5$), the models were specified as $\E(Y(j) | \X; \bm{\beta}_j) = \text{expit} ( \sum_{k=1}^6 \sum_{p=1}^3 \beta_{jkp} X^p_k )$, $j=0,1$. In both cases, we imposed an improper uniform prior, $p(\bm{\beta}_0,\bm{\beta}_1) \propto \text{const}$. We used \R{} function \pkg{mcmclogit} from package \pkg{MCMCpack} with default settings, which implements a Metropolis algorithm to generate samples from the posterior of ($\bm{\beta}_0,\bm{\beta}_1$). We ran Monte Carlo simulations by generating $K=10^3$ data sets for each factorial combination of treatment effect heterogeneity in $\bm{\beta}_0$ (\ref{eq:true_mod}), strength of selection mechanism in $\lambda$ (\ref{eq:treatment_assignment_mech}), sample size $n$, loss function (OTRmax or OTR.25), and number of independent noise variables ($q$) in $\X$. For each data set we generated $5 \times 10^3$ samples from the posterior distributions after $10^3$ samples of burn-in. We then determined for all observed $\x_{ik}$, where $i=1,....,n$ the  observations in Monte Carlo data set $k=1,...,K$, the Bayes decisions $a_1^*(\x_{ik})$ 
, the posterior mean estimates of $\mu_{\Loss}(a_1^*(\x_{ik}),\bm{\theta},\x_{ik})$ and $\ \mu_{Y}(a_1^*(\x_{ik}),\bm{\theta},\x_{ik})$, and the 95 \% credible intervals. Odds ratio $\phi$ was assumed known. We report the Monte Carlo estimates of the sample-average of the bias $B = (nK)^{-1} \sum_{ik} \hat{\mu}_{ik} - \mu_{ik}$, where $\hat{\mu}_{ik} $ is the posterior mean estimate of $\mu(a_1^*(\X_{ik}),\bm{\theta},\x_{ik})$ and $\mu_{ik} $ its true value determined by (\ref{eq:true_mod}); the width $ \omega = (nK)^{-1} \sum_{i,k} u_{ik} - l_{ik} $, where $u_{ik}$ and $l_{ik}$ the upper and lower limits of credible interval $i$ in data set $k$; the coverage probability of the credible intervals $C= (nK)^{-1} \sum_{i,k} \mathbbm{1}{ \{\mu_{ik} \in [l_{ik},u_{ik}] \} }$; and the accuracy of decision $a_1^*$ with $A= (nK)^{-1} \sum_{i,k}  \mathbbm{1} \{a_1^*(\x_{ik}) = a^{opt}(\x_{ik}) \}$.

\subsection{Results}

Table \ref{tab:simres_OTRmax} compares the correctly specified logistic regression models (\ref{eq:true_mod}) with uninformative priors and $q=0$  noise variables to BART with $q=5$ for OTRmax loss (with $\phi=1$). In Appendix \ref{sec:appendix_B}, we include results for the  logistic regression model with $q=5$ and BART with $q=0$. For the correctly specified logistic model ($q=0$), we found a bias in posterior mean estimates of expected loss and outcome under the OTR that increased at mild or no treatment effect heterogeneity and decreased in larger samples. The direction of the bias was positive in the sense that outcome probability was overestimated and loss was underestimated due to optimism. Credible interval coverage probability increased with sample size, where $C_Y$ reached roughly 95\% at $n=1000$ in most cases and $C_{\Loss}$ occasionally was slightly smaller. Interval width decreased with sample size, as expected; however we also found a difference across selectivity conditions $\lambda$, with $\lambda=-\log(3)$ causing greater average interval width than under $\lambda=1$ and $\lambda=\log(3)$. This finding illustrates the important role played by selectivity, which may cause sparsity and thus wider intervals. Accuracy $A$ of treatment decisions decreased with weaker treatment effect heterogeneity and increased with sample size, ranging from 0.964 ($n=250$, $\lambda=-\log(3)$) to 0.987 ($n=1000$, $\lambda=0$ and $\lambda=\log(3)$)  for strong heterogeneity and 0.823 to 0.947 for mild heterogeneity. Without heterogeneity, ($\bm{\beta}_0 = \bm{\beta}_1$ in (\ref{eq:true_mod})), accuracy is not defined for OTRmax (\ref{eq:OTR_max}) and we used $a^{opt}=0$ as an arbitrary benchmark to compare Bayes decisions against. About half of decisions were taken in favour of this benchmark (the others being $a^*=1$), where higher sample size yielded $A$ closer to 0.50 reflecting that Bayes decisions are taken randomly. Note that interval coverage was not impacted despite this result and the related optimism bias.

\begin{sidewaystable}  [p]
	\centering
	\caption{Simulation results for correctly specified Bayesian logistic model with uninformative priors (q=0 noise variables) and BART (q=5) for the OTRmax loss function with $\phi=1$ (Het.: treatment effect heterogeneity, $\lambda$: selectivity odds ratio, $n$: sample size, $B$: Average bias of OTR posterior mean estimate, $\omega$: Average width of credible intervals, $C$: Average coverage probability of 95\% credible intervals, $A$: Accuracy of assignment.)}	
	\begin{tabular}{cccccccccccccccccc}	\label{tab:simres_OTRmax}
		&&& \multicolumn{7}{c}{Logistic regression model ($q=0$)} && \multicolumn{7}{c}{BART ($q=5$)} \\
		\cline{4-10}
		\cline{12-18}
		Het. &$\lambda$ &  n & $B_{L}$ & $B_Y$ & $\omega_{L}$ & $\omega_{Y}$ & $C_{L}$ & $C_Y$ & $A$ &  & $B_{L}$ & $B_Y$ & $\omega_{L}$ & $\omega_{Y}$ & $C_{L}$ & $C_Y$ & $A$ \\ \hline
		\multirow{9}{*}{\rotatebox[origin=c]{90}{Strong}} &
		\multirow{3}{*}{$-\log(3)$} & 250 & -0.006 & 0.006 & 0.114 & 0.374 & 0.897 & 0.915 & 0.964 &   & 0.008 & -0.034 & 0.29 & 0.654 & 0.992 & 0.99 & 0.928 \\
		&&500 & -0.002 & 0.002 & 0.081 & 0.282 & 0.93 & 0.941 & 0.979 &   & 0.007 & -0.021 & 0.242 & 0.579 & 0.993 & 0.989 & 0.945 \\
		&&1000 & -0.001 & 0.001 & 0.058 & 0.206 & 0.937 & 0.941 & 0.986 &   & 0.007 & -0.017 & 0.204 & 0.508 & 0.994 & 0.991 & 0.958 \\ \cline{2-18}
		&\multirow{3}{*}{$0$} & 250 & -0.004 & 0.003 & 0.107 & 0.289 & 0.923 & 0.937 & 0.969 &   & 0.01 & -0.027 & 0.293 & 0.623 & 0.994 & 0.992 & 0.933 \\
		&&500 & -0.001 & 0.001 & 0.077 & 0.209 & 0.941 & 0.947 & 0.981 &   & 0.008 & -0.022 & 0.244 & 0.553 & 0.995 & 0.993 & 0.949 \\
		&&1000 & -0.001 & 0.001 & 0.054 & 0.149 & 0.942 & 0.949 & 0.987 &   & 0.008 & -0.019 & 0.204 & 0.481 & 0.996 & 0.994 & 0.958 \\ \cline{2-18}
		&\multirow{3}{*}{$\log(3)$} & 250 & -0.006 & 0.007 & 0.123 & 0.256 & 0.9 & 0.937 & 0.967 &   & 0.019 & -0.018 & 0.322 & 0.603 & 0.988 & 0.99 & 0.926 \\
		&&500 & -0.002 & 0.003 & 0.089 & 0.182 & 0.922 & 0.941 & 0.979 &   & 0.016 & -0.02 & 0.263 & 0.535 & 0.994 & 0.993 & 0.949 \\
		&&1000 & -0.001 & 0.001 & 0.063 & 0.129 & 0.941 & 0.946 & 0.987 &   & 0.012 & -0.019 & 0.215 & 0.464 & 0.996 & 0.995 & 0.959 \\ \hline
		\multirow{9}{*}{\rotatebox[origin=c]{90}{Mild}} &
		\multirow{3}{*}{$-\log(3)$} & 250 & -0.031 & 0.031 & 0.195 & 0.373 & 0.902 & 0.939 & 0.823 &   & -0.03 & 0.052 & 0.393 & 0.663 & 0.994 & 0.991 & 0.768 \\
		&&500 & -0.013 & 0.015 & 0.144 & 0.286 & 0.932 & 0.951 & 0.9 &   & -0.024 & 0.043 & 0.347 & 0.592 & 0.993 & 0.992 & 0.802 \\
		&&1000 & -0.006 & 0.006 & 0.103 & 0.209 & 0.942 & 0.946 & 0.94 &   & -0.016 & 0.028 & 0.305 & 0.52 & 0.995 & 0.994 & 0.837 \\ \cline{2-18}
		&\multirow{3}{*}{$0$} & 250 & -0.024 & 0.026 & 0.178 & 0.299 & 0.909 & 0.931 & 0.857 &   & -0.028 & 0.043 & 0.401 & 0.642 & 0.995 & 0.992 & 0.749 \\
		&&500 & -0.009 & 0.009 & 0.13 & 0.216 & 0.937 & 0.947 & 0.913 &   & -0.021 & 0.033 & 0.353 & 0.57 & 0.995 & 0.994 & 0.797 \\
		&&1000 & -0.004 & 0.004 & 0.093 & 0.154 & 0.941 & 0.944 & 0.947 &   & -0.013 & 0.02 & 0.306 & 0.496 & 0.996 & 0.996 & 0.84 \\ \cline{2-18}
		&\multirow{3}{*}{$\log(3)$} & 250 & -0.031 & 0.032 & 0.201 & 0.284 & 0.904 & 0.923 & 0.836 &   & -0.033 & 0.062 & 0.415 & 0.637 & 0.991 & 0.985 & 0.691 \\
		&&500 & -0.014 & 0.014 & 0.15 & 0.197 & 0.93 & 0.94 & 0.899 &   & -0.022 & 0.041 & 0.367 & 0.563 & 0.994 & 0.99 & 0.759 \\
		&&1000 & -0.005 & 0.006 & 0.109 & 0.136 & 0.939 & 0.938 & 0.941 &   & -0.012 & 0.026 & 0.318 & 0.486 & 0.995 & 0.993 & 0.813 \\ \hline
		\multirow{9}{*}{\rotatebox[origin=c]{90}{None}} &
		\multirow{3}{*}{$-\log(3)$} & 250 & -0.049 & 0.051 & 0.205 & 0.303 & 0.892 & 0.926 & 0.471 &   & -0.038 & 0.076 & 0.404 & 0.628 & 0.992 & 0.984 & 0.59 \\
		&&500 & -0.034 & 0.035 & 0.157 & 0.222 & 0.915 & 0.941 & 0.482 &   & -0.035 & 0.06 & 0.363 & 0.552 & 0.994 & 0.99 & 0.563 \\
		&&1000 & -0.024 & 0.025 & 0.116 & 0.16 & 0.936 & 0.946 & 0.496 &   & -0.031 & 0.049 & 0.321 & 0.476 & 0.995 & 0.992 & 0.532 \\ \cline{2-18}
		&\multirow{3}{*}{$0$} & 250 & -0.042 & 0.042 & 0.188 & 0.277 & 0.904 & 0.94 & 0.492 &   & -0.037 & 0.062 & 0.405 & 0.612 & 0.995 & 0.991 & 0.502 \\
		&&500 & -0.029 & 0.028 & 0.141 & 0.2 & 0.925 & 0.944 & 0.5 &   & -0.035 & 0.051 & 0.361 & 0.538 & 0.995 & 0.993 & 0.503 \\
		&&1000 & -0.021 & 0.021 & 0.103 & 0.142 & 0.934 & 0.949 & 0.499 &   & -0.031 & 0.044 & 0.318 & 0.466 & 0.996 & 0.994 & 0.497 \\ \cline{2-18}
		&\multirow{3}{*}{$\log(3)$} & 250 & -0.05 & 0.05 & 0.204 & 0.303 & 0.888 & 0.93 & 0.531 &   & -0.038 & 0.074 & 0.405 & 0.628 & 0.994 & 0.986 & 0.402 \\
		&&500 & -0.034 & 0.035 & 0.157 & 0.222 & 0.918 & 0.94 & 0.51 &   & -0.035 & 0.061 & 0.362 & 0.551 & 0.994 & 0.99 & 0.439 \\
		&&1000 & -0.024 & 0.024 & 0.116 & 0.158 & 0.928 & 0.945 & 0.504 &   & -0.031 & 0.049 & 0.32 & 0.477 & 0.995 & 0.992 & 0.458 \\ \hline
	\end{tabular}
\end{sidewaystable}

\begin{sidewaystable}  [p]
	\centering
	\caption{Simulation results for correctly specified Bayesian logistic model with uninformative priors (q=0 noise variables) and BART (q=5) for the OTR.25 loss function with $\phi=5$ (Het.: treatment effect heterogeneity, $\lambda$: selectivity odds ratio, $n$: sample size, $B$: Average bias of OTR posterior mean estimate, $\omega$: Average width of credible intervals, $C$: Average coverage probability of 95\% credible intervals, $A$: Accuracy of assignment.)}	
	\begin{tabular}{cccccccccccccccccc}	\label{tab:simres_OTR25}
		&&& \multicolumn{7}{c}{Logistic regression model ($q=0$)} && \multicolumn{7}{c}{BART ($q=5$)} \\
		\cline{4-10}
		\cline{12-18}
		Het. &$\lambda$ &  n & $B_{L}$ & $B_Y$ & $\omega_{L}$ & $\omega_{Y}$ & $C_{L}$ & $C_Y$ & $A$ &  & $B_{L}$ & $B_Y$ & $\omega_{L}$ & $\omega_{Y}$ & $C_{L}$ & $C_Y$ & $A$ \\ \hline
		\multirow{9}{*}{\rotatebox[origin=c]{90}{Strong}} &
		\multirow{3}{*}{$-\log(3)$} & 250 & -0.005 & 0.009 & 0.123 & 0.372 & 0.918 & 0.920 & 0.96 &   & 0.009 & -0.027 & 0.310 & 0.653 & 0.991 & 0.991 & 0.919 \\
		&&500 & -0.001 & 0.002 & 0.086 & 0.281 & 0.937 & 0.937 & 0.974 &   & 0.009 & -0.018 & 0.256 & 0.581 & 0.992 & 0.989 & 0.940 \\
		&&1000 & 0 & 0.001 & 0.060 & 0.205 & 0.948 & 0.947 & 0.984 &   & 0.009 & -0.015 & 0.212 & 0.511 & 0.993 & 0.991 & 0.954 \\ \cline{2-18}
		&\multirow{3}{*}{$0$} & 250 & -0.004 & 0.006 & 0.108 & 0.290 & 0.929 & 0.939 & 0.965 &   & 0.013 & -0.022 & 0.314 & 0.625 & 0.993 & 0.992 & 0.923 \\
		&&500 & -0.001 & 0.002 & 0.075 & 0.210 & 0.942 & 0.947 & 0.978 &   & 0.011 & -0.019 & 0.257 & 0.555 & 0.994 & 0.993 & 0.941 \\
		&&1000 & 0 & -0.001 & 0.052 & 0.150 & 0.947 & 0.951 & 0.985 &   & 0.010 & -0.018 & 0.210 & 0.484 & 0.995 & 0.994 & 0.954 \\ \cline{2-18}
		&\multirow{3}{*}{$\log(3)$} & 250 & -0.004 & 0.007 & 0.124 & 0.261 & 0.908 & 0.937 & 0.962 &   & 0.014 & -0.007 & 0.343 & 0.607 & 0.986 & 0.987 & 0.904 \\
		&&500 & -0.001 & 0.004 & 0.085 & 0.185 & 0.930 & 0.939 & 0.976 &   & 0.016 & -0.016 & 0.274 & 0.539 & 0.992 & 0.992 & 0.937 \\
		&&1000 & 0 & 0.001 & 0.059 & 0.131 & 0.946 & 0.944 & 0.985 &   & 0.014 & -0.017 & 0.22 & 0.467 & 0.994 & 0.994 & 0.954 \\  \hline
		\multirow{9}{*}{\rotatebox[origin=c]{90}{Mild}} &
		\multirow{3}{*}{$-\log(3)$} & 250 & -0.016 & 0.024 & 0.196 & 0.356 & 0.927 & 0.924 & 0.846 &   & -0.012 & 0.048 & 0.411 & 0.661 & 0.995 & 0.991 & 0.816 \\
		&&500 & -0.011 & 0.014 & 0.146 & 0.266 & 0.935 & 0.936 & 0.867 &   & -0.009 & 0.036 & 0.364 & 0.591 & 0.995 & 0.993 & 0.833 \\
		&&1000 & -0.007 & 0.008 & 0.106 & 0.193 & 0.951 & 0.954 & 0.892 &   & -0.007 & 0.026 & 0.318 & 0.516 & 0.996 & 0.994 & 0.848 \\ \cline{2-18}
		&\multirow{3}{*}{$0$} & 250 & -0.015 & 0.020 & 0.185 & 0.303 & 0.926 & 0.930 & 0.849 &   & -0.007 & 0.035 & 0.427 & 0.645 & 0.995 & 0.992 & 0.792 \\
		&&500 & -0.009 & 0.011 & 0.136 & 0.22 & 0.940 & 0.944 & 0.874 &   & -0.006 & 0.0260 & 0.375 & 0.575 & 0.995 & 0.994 & 0.809 \\
		&&1000 & -0.006 & 0.007 & 0.098 & 0.158 & 0.948 & 0.952 & 0.895 &   & -0.004 & 0.019 & 0.322 & 0.500 & 0.997 & 0.996 & 0.825 \\ \cline{2-18}
		&\multirow{3}{*}{$\log(3)$} & 250 & -0.019 & 0.026 & 0.220 & 0.307 & 0.915 & 0.938 & 0.830 &   & -0.005 & 0.043 & 0.458 & 0.642 & 0.992 & 0.986 & 0.743 \\
		&&500 & -0.011 & 0.014 & 0.163 & 0.222 & 0.941 & 0.943 & 0.869 &   & -0.003 & 0.032 & 0.397 & 0.572 & 0.995 & 0.991 & 0.776 \\
		&&1000 & -0.008 & 0.010 & 0.117 & 0.160 & 0.944 & 0.948 & 0.884 &   & -0.002 & 0.022 & 0.34 & 0.499 & 0.996 & 0.994 & 0.803 \\ \hline
		\multirow{9}{*}{\rotatebox[origin=c]{90}{None}} &
		\multirow{3}{*}{$-\log(3)$} & 250 & -0.005 & 0.016 & 0.198 & 0.295 & 0.928 & 0.915 & 0.946 &   & -0.002 & 0.05 & 0.419 & 0.628 & 0.994 & 0.987 & 0.900 \\
		&&500 & 0.001 & 0.005 & 0.147 & 0.215 & 0.938 & 0.931 & 0.98 &   & 0.003 & 0.032 & 0.376 & 0.551 & 0.995 & 0.991 & 0.918 \\
		&&1000 & 0.002 & 0.002 & 0.106 & 0.154 & 0.939 & 0.939 & 0.994 &   & 0.007 & 0.019 & 0.33 & 0.474 & 0.995 & 0.993 & 0.942 \\ \cline{2-18}
		&\multirow{3}{*}{$0$} & 250 & -0.003 & 0.011 & 0.198 & 0.274 & 0.933 & 0.929 & 0.961 &   & 0.004 & 0.030 & 0.440 & 0.609 & 0.996 & 0.992 & 0.877 \\
		&&500 & 0.002 & 0.003 & 0.146 & 0.198 & 0.945 & 0.938 & 0.988 &   & 0.007 & 0.021 & 0.388 & 0.538 & 0.996 & 0.993 & 0.917 \\
		&&1000 & 0.002 & 0.001 & 0.104 & 0.142 & 0.950 & 0.947 & 0.997 &   & 0.009 & 0.012 & 0.338 & 0.465 & 0.996 & 0.995 & 0.945 \\ \cline{2-18}
		&\multirow{3}{*}{$log(3)$} & 250 & -0.012 & 0.021 & 0.234 & 0.303 & 0.904 & 0.930 & 0.928 &   & 0.004 & 0.040 & 0.468 & 0.626 & 0.993 & 0.986 & 0.805 \\
		&&500 & -0.002 & 0.008 & 0.178 & 0.223 & 0.926 & 0.942 & 0.974 &   & 0.009 & 0.025 & 0.410 & 0.545 & 0.994 & 0.990 & 0.883 \\
		&&1000 & 0.001 & 0.002 & 0.130 & 0.161 & 0.943 & 0.945 & 0.992 &   & 0.012 & 0.012 & 0.358 & 0.473 & 0.995 & 0.992 & 0.931 \\ \hline
	\end{tabular}
\end{sidewaystable}	

With $q=5$ noise variables, the logistic model suffered substantially, inflating bias and interval width strongly  (Appendix \ref{sec:appendix_B}). Despite wider intervals, coverage was low even in large samples (approx. 0.90) and unacceptably low in small samples. To the contrary, for BART with $q=5$ the optimism bias was only slightly higher than under the correctly specified logistic model (Table \ref{tab:simres_OTRmax}). Furthermore, accuracy under strong heterogeneity was only slightly lower ranging from 0.926 ($n=250$, $\lambda=\log(3)$) to 0.959 ($n=1000$, $\lambda=\log(3)$). Under mild heterogeneity accuracy was still good ranging from 0.691 ($n=250$, $\lambda=\log(3)$) to 0.840 ($n=250$, $\lambda=0$); larger samples yielded a clear gain in accuracy here. Furthermore, BART had conservative coverage at 0.99 throughout conditions, where the correctly specified logistic regression model partly had too liberal coverage. However, credible intervals were wider on average than under the logistic model, emphasizing large samples are needed for precise inference with BART. Compared to the logistic model with $q=5$ (Appendix \ref{sec:appendix_B}), BART at $q=5$ had smaller bias and interval width at much better coverage. BART also strongly profited from data without noise variables in the set $\X$ ($q=0$; Appendix \ref{sec:appendix_B}). Interval width was then reduced (slightly wider than under the correctly specified logistic model), while coverage remained at high levels. Bias was slightly smaller and almost competitive with the correctly specified logistic model. Accuracy was increased throughout. 

We repeated this analysis for a second loss function, OTR.25 (Table  \ref{tab:simres_OTR25}). Throughout conditions, optimism bias under, both, the correctly specified model and BART was smaller. The methods also yielded very comparable coverage and width of credible intervals as under OTRmax loss. Likewise accuracy was similar for strong and mild treatment effect heterogeneity. Without heterogeneity, the optimal decision $a^{opt}_{\text{OTR.25}}$ was $0$, regardless of the value for $X_1$ (for OTR.25 loss function plots see Figure \ref{fig:true_loss} in Appendix \ref{sec:appendix_B}). This decision was recovered with good accuracy for both logistic regression and BART; as before larger samples yielded a substantial increase in accuracy (e.g., see BART with $\lambda=\log(3)$).

\section{OTR for oropharynx cancer}           \label{sec:application}
In this section we develop OTR for oropharynx cancer. We first explain details of the data, loss functions, and modelling. Subsequently, results on expected loss and survival probabilities under three candidate treatment regimes are presented.

\subsection{Data and model}
Observational data on patients diagnosed with Oropharyngeal Squamous Cell Carcinoma (OPSCC) at VUmc Amsterdam, The Netherlands, were available \parencite{nauta_evaluation_2018}. We optimized assignment of RT ($W=0$) vs. CRT ($W=1$) treatment using different loss functions. The proportion of patients receiving CRT was 55.2\%. We considered survival status at three years after treatment ($Y(0),Y(1)$). Patient covariates ($\X$) included age, gender, tumour T-stage (five categories, T1 to T4b), tumour N-stage (four categories, N0 to N3), alcohol use (unit years), smoking (package years), co-morbidity assessment (three categories, Adult Co-morbidity Evaluation-27), and human papillomavirus infection (HPV). The clinical decision for CRT ($W=1$) considers tumour stage, age, and co-morbidities. These confounders were observed in support of assumption (\ref{eq:ignorable_treatment}). To assess covariate overlap, we estimated propensity scores $P(W=1|\x_i;\hat{\bm{\beta}})$ for all $i=1,...,n$, from logistic model $P(W=1|\x_i;\bm{\beta})=\text{expit}(\x_i^T\bm{\beta})$ and compared the propensity distributions in groups $W=0,1$ (Appendix \ref{sec:appendix_C}). To facilitate overlap, patients aged 75 years or higher were omitted (n=277 after omission) and tumour stages N2 and N3 were merged. 

We considered three loss functions (regimes) in the conditional parametrization (Table \ref{tab:loss}): OTRmax with $(L^{(1)}_{00}, L^{(0)}_{01}, L^{(1)}_{10}, L^{(1)}_{11} ) = (0,1,1,0)$, OTR.25 with  $(0.25,1,1.25,0.25)$, and OTR.50 with  $(0.50,1,1.50,0.50)$. The coefficients $L^{(1)}_{00}$ and $L^{(1)}_{11}$ quantify loss due to unnecessary burden that patients with equal survival outcomes under both treatments receive from CRT assignment. These errors are penalised with 25\% and 50\% of the loss from a wrong treatment decision leading to death ($L^{(0)}_{01}$ scaled at 1). Coefficient $L^{(1)}_{10}$ is set to 1.25 (1.50) to reflect the loss due to wrong treatment by CRT and unnecessary burden simultaneously. We modelled the potential outcomes using BART with prior settings as discussed in section \ref{sec:simulation}. We multiply imputed missing data on the alcohol ($n=3$), comorbidity ($n=2$), and HPV variables ($n=7$) using predictive mean matching with five imputed data sets. BART models of the potential outcomes were fitted on each completed data set. This resulted in five posterior distributions of the marginal probabilities ($\pi_{1+},\pi_{+1}$) with $5 \times 10^3$ draws respectively. Following \textcite{gelman_bayesian_2013}, these posterior distributions were pooled for inference.

\subsection{Results}
We first held the partial association of potential outcomes fixed at $\phi_0=1$ and used posterior mean decision rule $a_1^*$ (\ref{eq:OTR_rule1}) to assess the OTR. We found that 60.3\% of patients received CRT under OTRmax loss, but only 13.4\% under OTR.25 and 0.7\% under OTR.50 caused by increased penalization of unnecessary burden in these regimes. The empirical distribution of the posterior probabilities for assigning CRT correctly, $\rho(\x_i)$ in (\ref{eq:pcp}), is shown in Figure \ref{fig:posterior_prob}. The decision boundary for $a^*_2=1$ (\ref{eq:OTR_rule2}) at $\rho=0.5$ is illustrated. The OTRmax probabilities at each quantile were higher than for OTR.25 and OTR.50, reflecting that more patients benefit from CRT under OTRmax loss. We carried out a sensitivity analysis for $\phi$ by evaluating all decisions again at the conservative bounds $\phi=[\exp(-3),\exp(3)]$. For $a_1^*$, 4.7\% (OTR.25) and 1.4\% (OTR.50) of all decisions changed between these bounds; these decisions are called sensitive to $\phi$. The impact of $\phi$ on $\rho(\x_i)$ is illustrated by bounds in Figure \ref{fig:posterior_prob}. Only few bounds enclosed the decision boundary at 0.5 demonstrating non-sensitivity of most decisions. In addition, we used black/grey coding in Figure \ref{fig:posterior_prob} to compare the posterior probabilities to treatments received ($W$). At the left tail of the distribution of $\rho(\X)$ under OTRmax more patients received RT ($W=0$) than CRT; similarly more patients at the right tail received CRT ($W=1$) than RT. Averaged across the sample, 60.3\% (OTRmax), 51.6\% (OTR.25), and 45.5\% (OTR.50) received the same treatment under optimal assignment ($a_1^*$) and the one observed ($W$). OTRmax had strongest overlap with the observed regime, but there were still substantial differences in assignment strategies. 

\begin{figure}[h!]  
	\centering
	\begin{subfigure}{0.32\textwidth}
		\includegraphics[trim=140 0 150 0, clip=true, width=\linewidth]{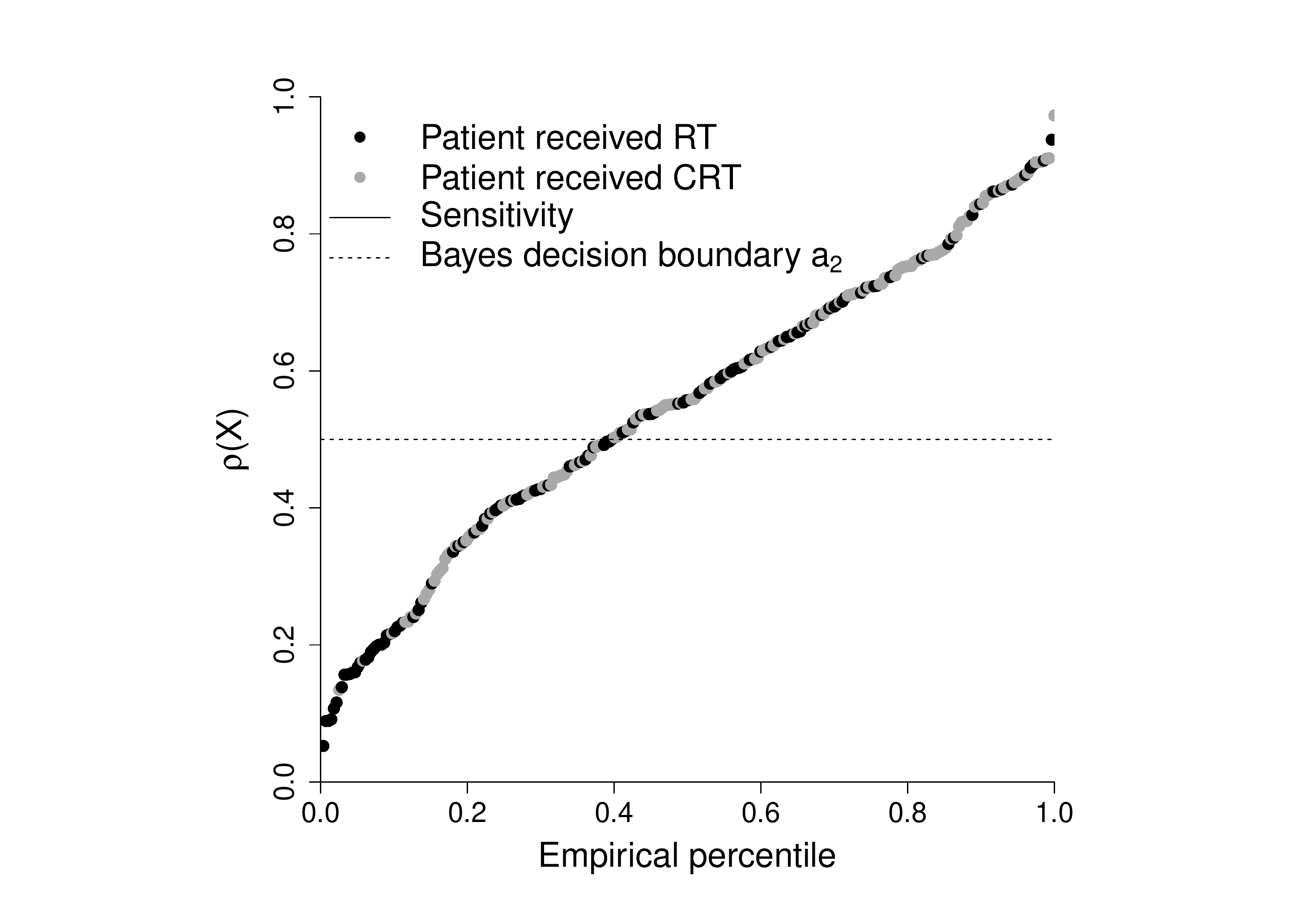}
		\caption{OTRmax} \label{fig:fig_post_prob1}
	\end{subfigure} 
	\begin{subfigure}{0.32\textwidth}
		\includegraphics[trim=140 0 150 0, clip=true, width=\linewidth]{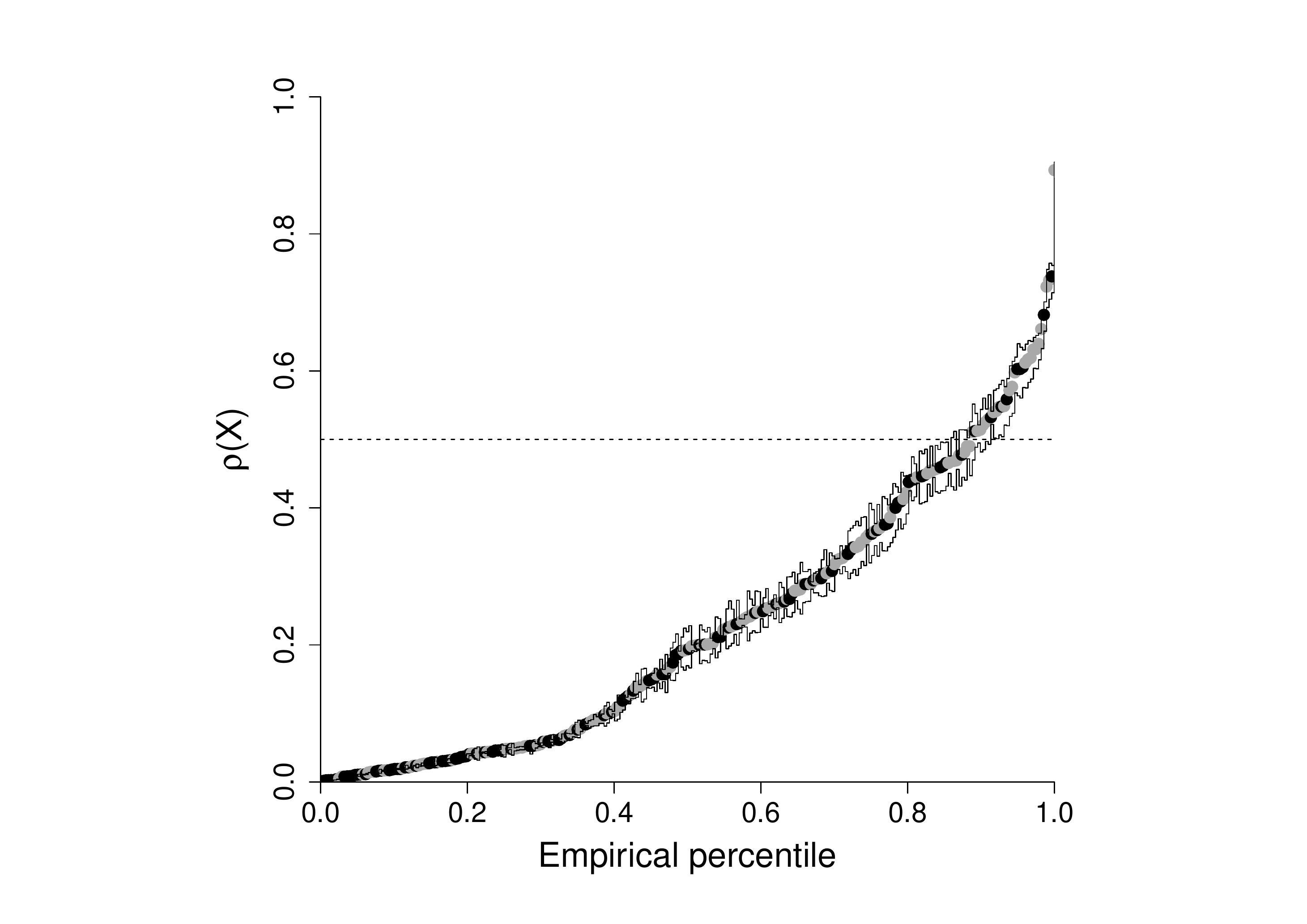}
		\caption{OTR.25} \label{fig:fig_post_prob2}
	\end{subfigure} 
	\begin{subfigure}{0.32\textwidth}
		\includegraphics[trim=140 0 150 0, clip=true, width=\linewidth]{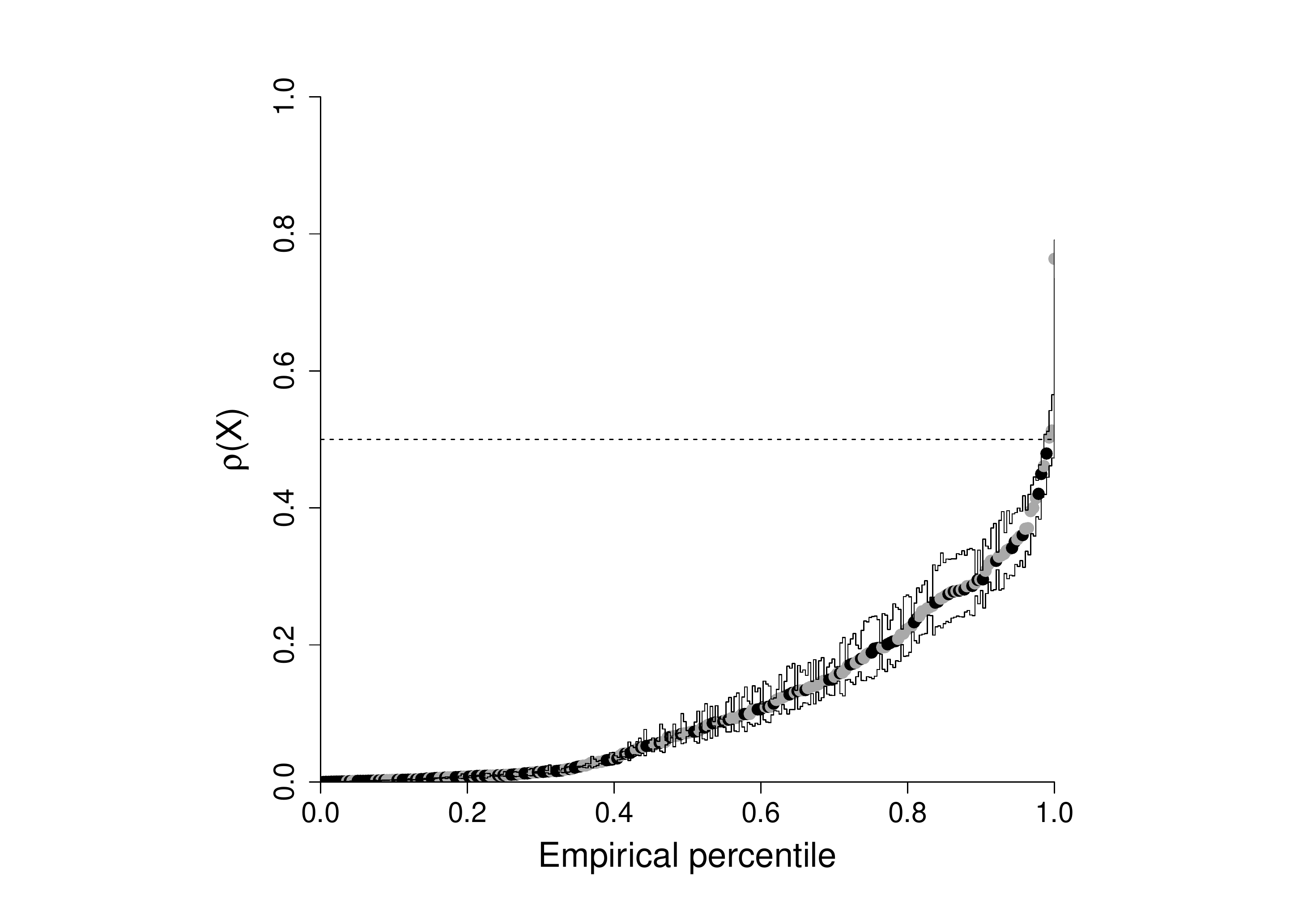}
		\caption{OTR.50} \label{fig:fig_post_prob3}
	\end{subfigure} 
%
	\caption{Empirical quantile plots of posterior probabilities for correct CRT assignment at $\phi_0=1$ for different loss functions (regimes). A sensitivity analysis is carried out for $\log(\phi) \in \{-3,3\} $.} \label{fig:posterior_prob}
\end{figure}

The survival probabilities under the OTR, i.e. $\mu_Y(a_1^*(\x_i),\bm{\theta},\x_i)$ in (\ref{eq:mu_Y}), were compared to probabilities under the observed regime $W$, $\mu_Y(w_i,\bm{\theta},\x_i)$  (Figure \ref{fig:quantile_survival}).  A sensitivity analysis is not needed for this estimand (section \ref{sec:sensitivity}). At any quantile the distribution under OTRmax had slightly higher value than under the observed regime $W$. Largest differences were present at the lower tail suggesting most potential for survival probability optimization concerns patients with small survival probabilities under both treatments (differences in survival probabilities amount to 0.3). Uncertainty around posterior means is indicated by 95\% credible intervals, where we found lower probabilities had higher posterior uncertainty than higher probabilities. For OTR.25 and OTR.50 uncertainty was similar but the gain in survival was nullified. This reflects the objective of loss minimization instead of survival probability maximization. This trade-off is further illustrated by comparing $\mu_Y(a_1^*(\x_i),\bm{\theta},\x_i)$ to expected loss $\mu_{\Loss}(a_1^*(\x_i),\bm{\theta},\x_i)$, (\ref{eq:exp_loss}), in Figure \ref{fig:quantile_loss}. Expected loss of all regimes was lower than that of the observed regime and for OTR.25 and OTR.50 the reduction in loss relative to the observed regime was substantial. 

\begin{figure}[h!]  
	\centering
	\begin{subfigure}{0.32\textwidth}
		\includegraphics[trim=140 0 150 0, clip=true, width=\linewidth]{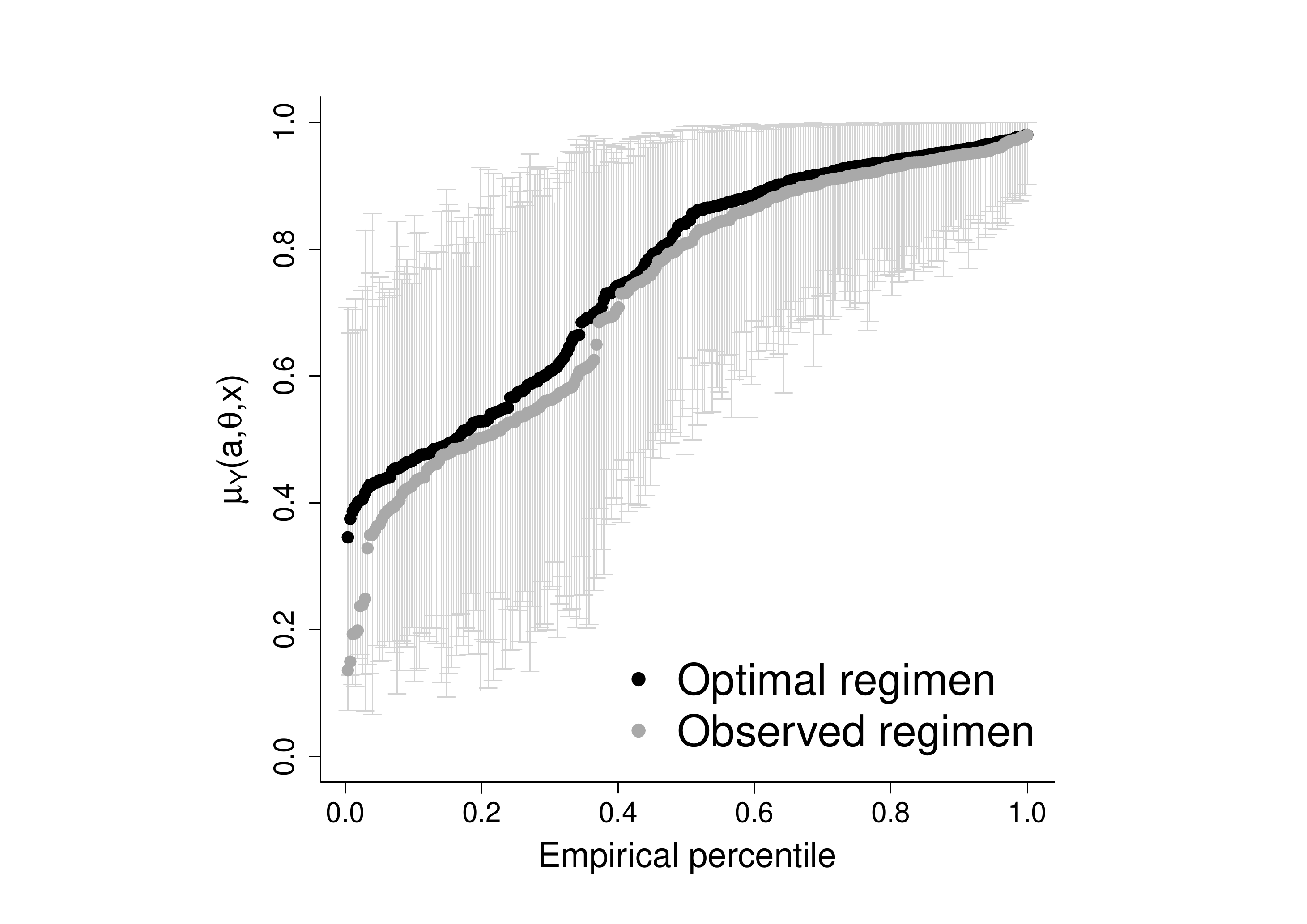}
		\caption{OTRmax} \label{fig:quantile_survival1}
	\end{subfigure} 
	\begin{subfigure}{0.32\textwidth}
		\includegraphics[trim=140 0 150 0, clip=true, width=\linewidth]{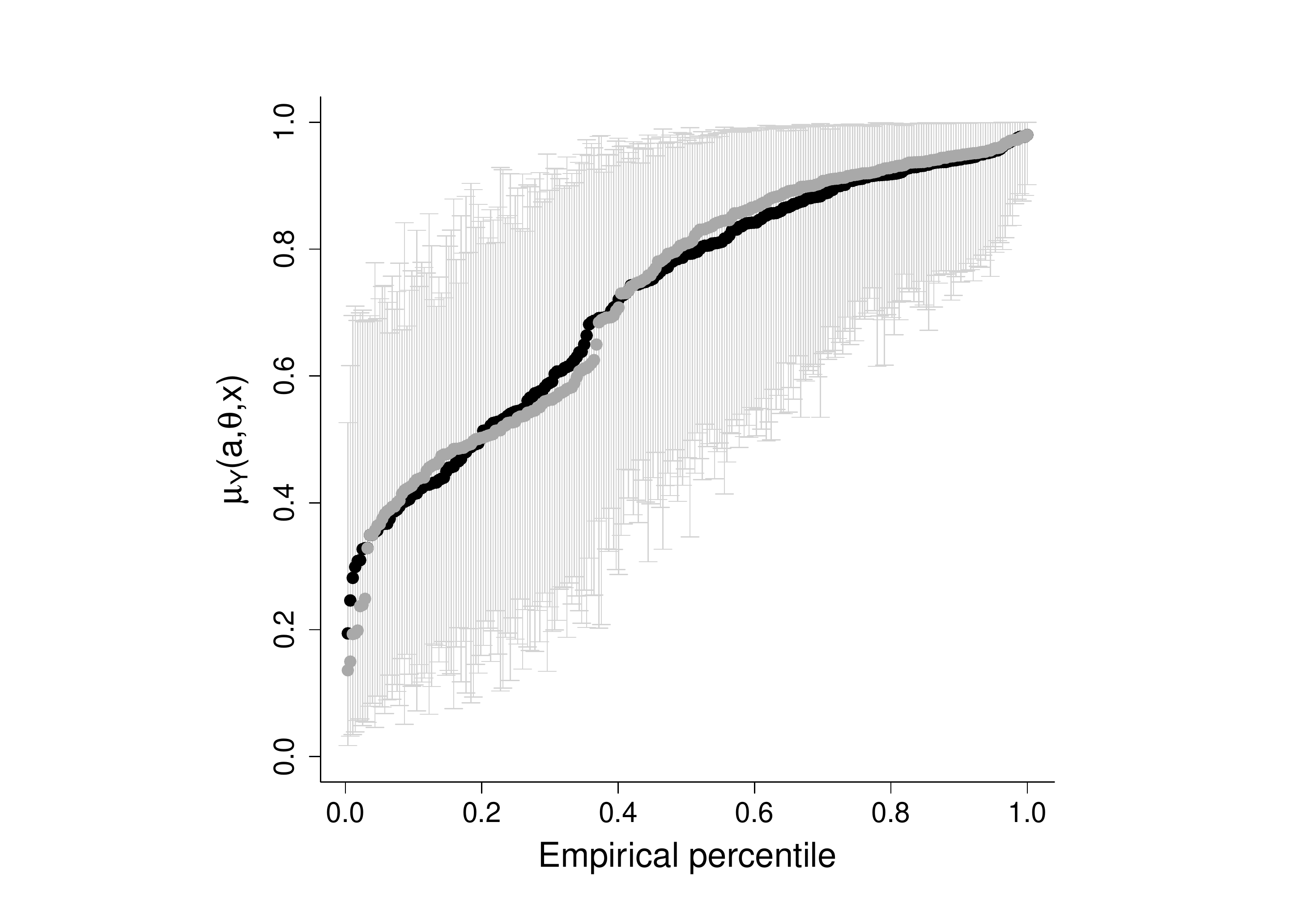}
		\caption{OTR.25} \label{fig:quantile_survival2}
	\end{subfigure}
	\begin{subfigure}{0.32\textwidth}
		\includegraphics[trim=140 0 150 0, clip=true, width=\linewidth]{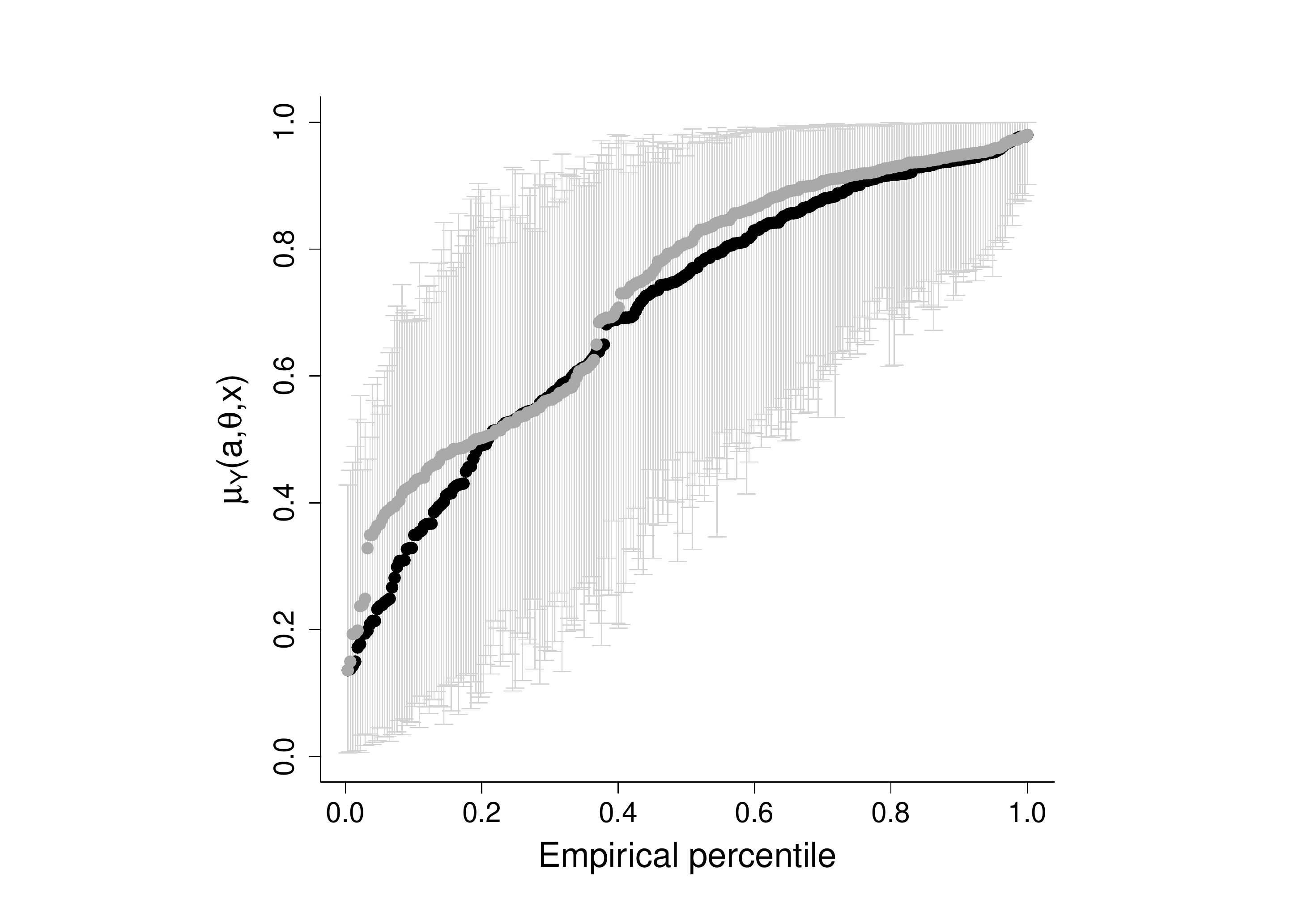}
		\caption{OTR.50} \label{fig:quantile_survival3}
	\end{subfigure} 
	\caption{Mean posterior survival probabilities by OTR $\mu_Y(a_1^*(\x_i),\bm{\theta},\x_i)$ as compared to the observed regime $\mu_Y(w_i,\theta,\x_i)$. 95\% credible intervals given for $\mu_Y(a_1^*(\x_i),\bm{\theta},\x_i)$. The figure includes all $n=277$ patients, i.e. also patients with sensitive decisions for which we fixed $\phi=1$. An alternative is to omit these patients as shown in Appendix \ref{sec:appendix_C} with similar results.} \label{fig:quantile_survival}
\end{figure}

\begin{figure}[h!] 
	\centering
	\begin{subfigure}{0.32\textwidth}
		\includegraphics[trim=140 0 150 0, clip=true, width=\linewidth]{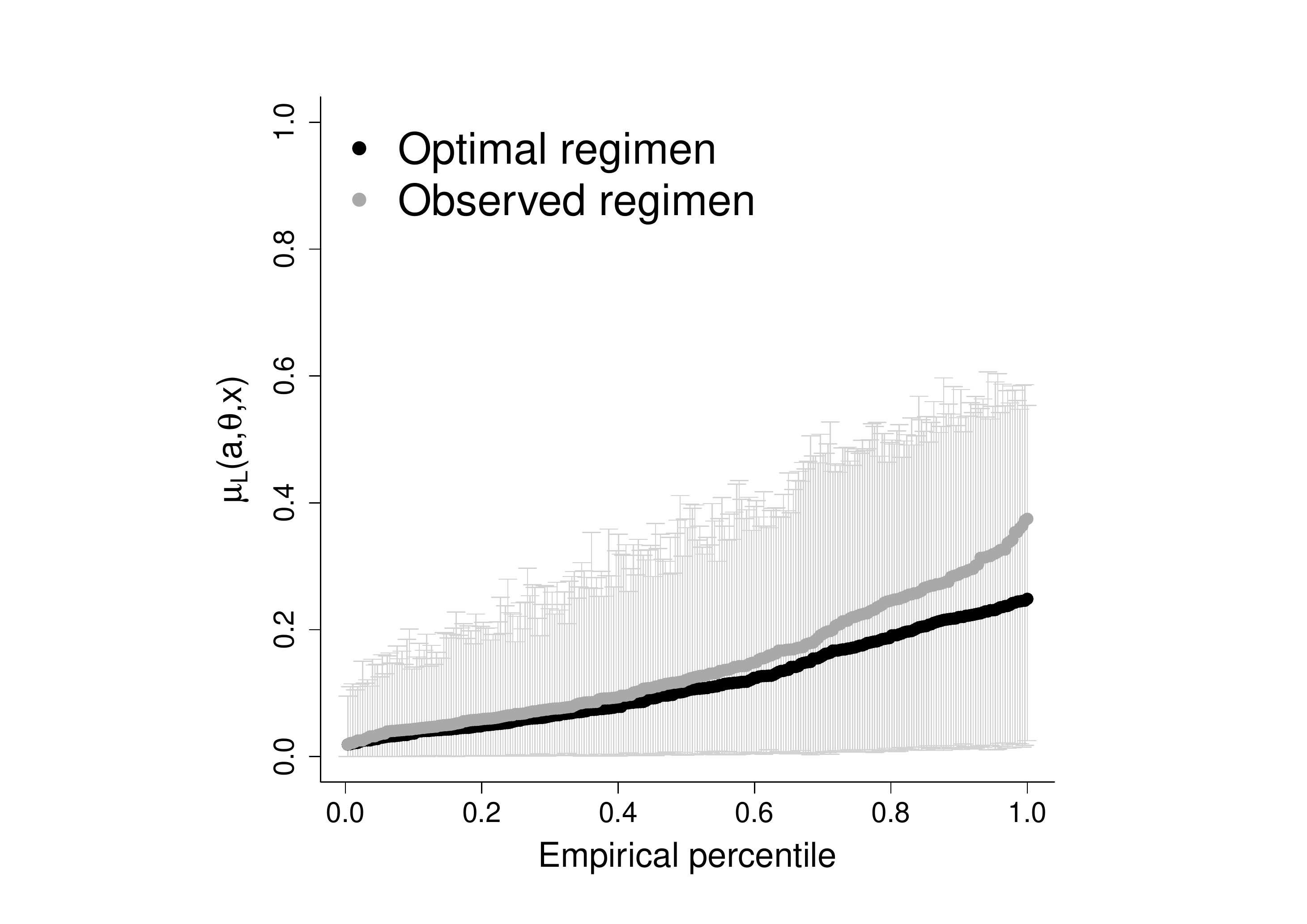}
		\caption{OTRmax} \label{fig:quantile_loss1}
	\end{subfigure}  
	\begin{subfigure}{0.32\textwidth}
		\includegraphics[trim=140 0 150 0, clip=true, width=\linewidth]{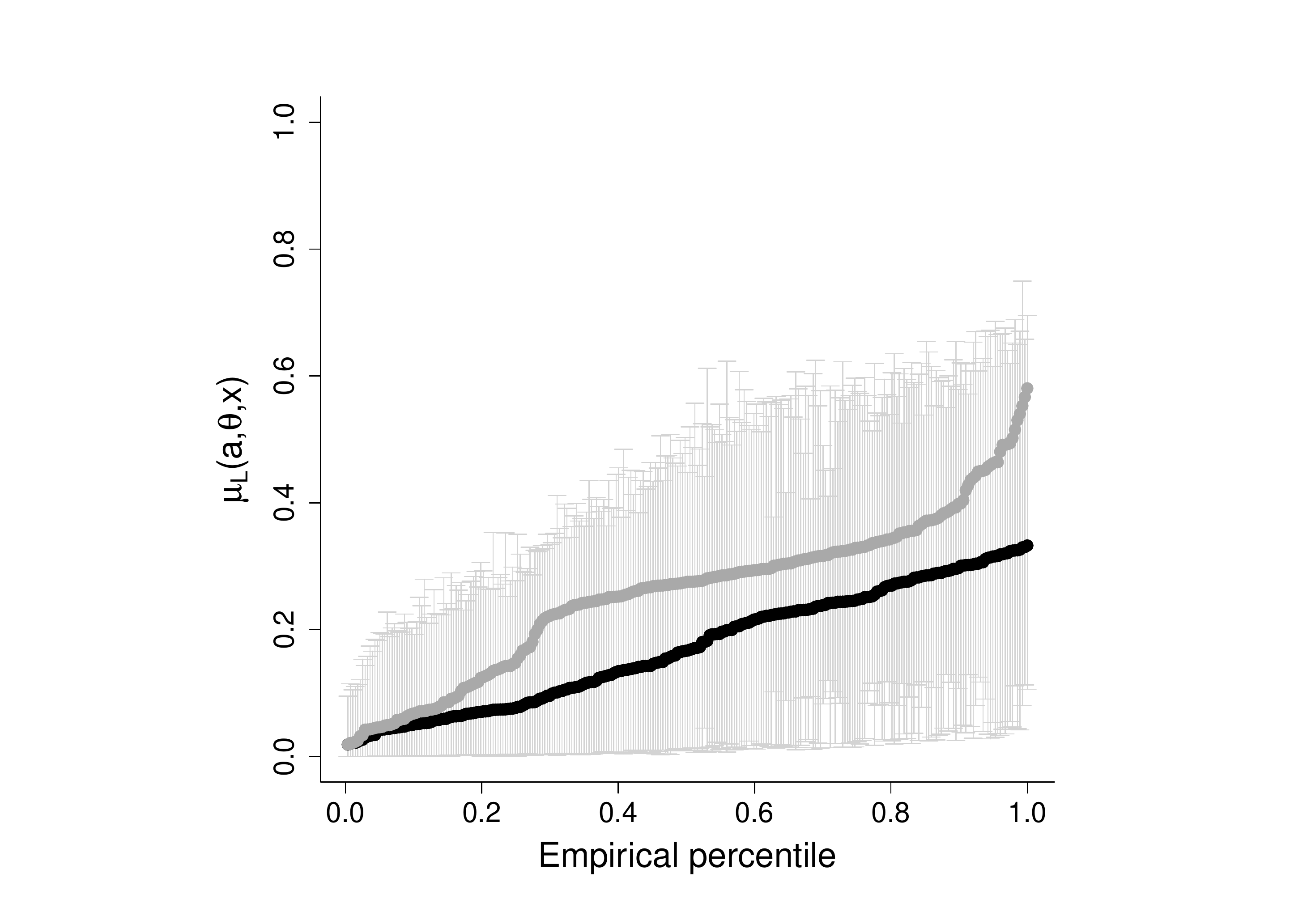}
		\caption{OTR.25} \label{fig:quantile_loss2}
	\end{subfigure}
	\medskip
	\begin{subfigure}{0.32\textwidth}
		\includegraphics[trim=140 0 150 0, clip=true, width=\linewidth]{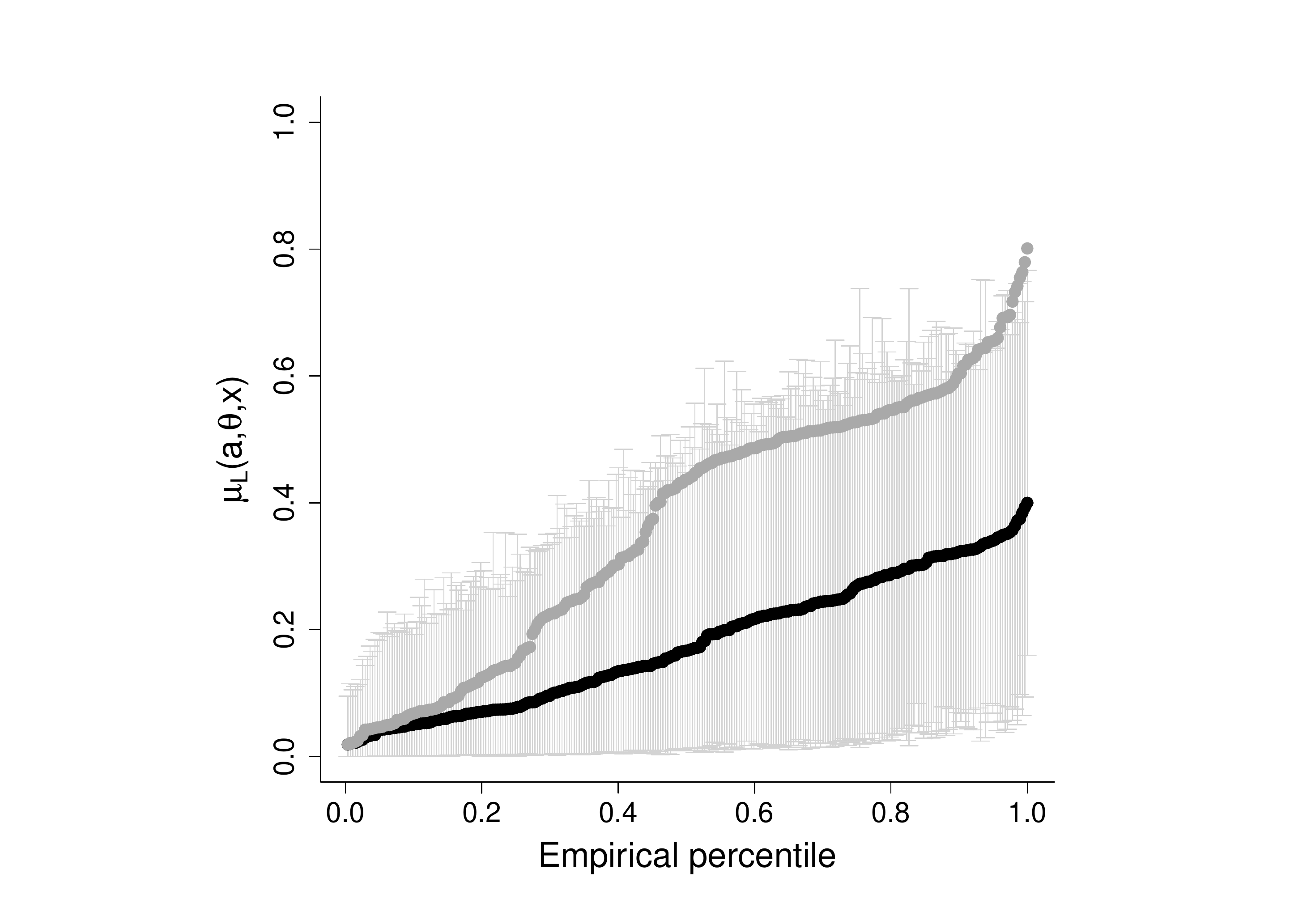}
		\caption{OTR.50} \label{fig:quantile_loss3}
	\end{subfigure}
	\caption{ Mean posterior expected loss $\mu_{\Loss}(a_1^*(\x_i),\bm{\theta},\x_i)$ under OTR (optimal regime) and the observed regime $\mu_{\Loss}(w_i,\bm{\theta},\x_i)$. Credible intervals given for OTR. Parameter uncertainty of $\phi$ simulated by prior $\phi \sim \text{U}(\exp(-3), \exp(3))$.}
	\label{fig:quantile_loss}
\end{figure}

There is additional uncertainty in posterior estimates of $\mu_{\Loss}(a_1^*(\x_i),\bm{\theta},\x_i)$ due to sensitivity of inference to association $\phi$ (section \ref{sec:sensitivity}). Keeping decisions $a_1^*(\x_i)$ fixed at $\phi_0=1$, we assumed a uniform prior on $\phi \sim \text{U}(\exp(-3), \exp(3))$ and integrated $\phi$ out during the estimation phase. The interval width in Figure \ref{fig:quantile_loss} was thereby increased compared to e.g. $\phi=1$ (conditional independence of potential outcomes), in particular by 7.0\% (OTRmax), 5.4\% (OTR.25), and 5.9\% (OTR.50). These results demonstrate that intervals are robust to strong prior uncertainty about $\phi$.

To summarize results, we average survival probabilities and expected losses (Table \ref{tab:result_summary}), where we define $\mathbb{U}(a) := n^{-1} \sum_i \hat{\mu}(a_i,\bm{\theta},\x_i)$, with $\hat{\mu}$ the posterior mean estimates of $\mu_Y$ or $\mu_{\Loss}$ (shown in Figures \ref{fig:quantile_survival} and \ref{fig:quantile_loss}); furthermore, $\bar{W}= n^{-1} \sum_i W_i $ the sample proportion of patients receiving CRT under observed treatment assignment $W$ and $\bar{ a}_1^*= n^{-1} \sum_i a_{1i}^*(\x_i)$ the sample proportion of patients receiving CRT under the OTR. The average expected loss under the observed regime increased strongly from OTRmax to OTR.50 loss functions, but remained on low levels under the optimal regime $a_1^*$. Average survival probability under $W$ was 72.8\%. OTRmax loss increased average survival probability to 75.9\%, whereas OTR.25 and OTR.50 had averaged probabilities only slightly below those of the observed regime. However, CRT assignment could be strongly reduced, in particular below the observed regime (55.2\%). In conclusion, the decision which loss function and regime are 'best' remains subjective. However, a good candidate is OTR.25 which reduced patient burden by strongly reducing CRT assignment by four times (55.2 to 13.4\%) without lowering the average survival probability much under that of the observed regime (72.8 vs. 72.5\%; emphasis added in Table \ref{tab:result_summary}). 

\begin{table}[h!]
	\caption{Sample-averaged posterior mean estimates of expected losses and survival probabilities under the OTR ($a_1^*$) and observed regime ($W$) and the proportion of patients assigned to CRT. }
	\label{tab:result_summary}
	\centering
	\begin{tabular}{lcccccccc}
	& \multicolumn{2}{c}{Average loss} && \multicolumn{2}{c}{Average survival prob.} && \multicolumn{2}{c}{Prop. assigned CRT} \\
	\cline{2-3}
	\cline{5-6}
	\cline{8-9}
	\\[-1em]
	Loss function & $\mathbb{U}_{\Loss}(W)$ & $\mathbb{U}_{\Loss}(a_1^*)$ && $\mathbb{U}_{Y}(W)$ & $\mathbb{U}_{Y}(a_1^*)$ && $\bar{W}$ & $ \bar{ a}_1^*$ \\ \hline
	OTRmax & 0.148 & 0.118 &   & 0.728 & 0.759 &   & 0.552 & 0.603 \\
	OTR.25 & \textbf{0.259} & \textbf{0.173} &   & \textbf{0.728} &\textbf{ 0.725} &   & \textbf{0.552} & \textbf{0.134} \\
	OTR.50 & 0.370 & 0.181 &   & 0.728 & 0.697 &   & 0.552 & 0.007 \\ \hline
	\end{tabular}
\end{table}

\section{Discussion}            \label{sec:discussion}
We suggested a decision theoretical framework for estimating optimal treatment regimes (OTR) based on a loss function specified on the bivariate distribution of dichotomous potential outcomes. Our approach distinguishes expected patient loss from expected outcomes (e.g., survival probabilities), so that the effects of minimized loss can be compared and traded off against the expected outcomes under an OTR (Table \ref{tab:result_summary}). Furthermore, the particular loss parametrization can be flexibly tailored to the application, where we advocated using the conditional parametrization which is closely aligned to clinical decision making. Importantly, it allows penalizing unnecessary treatment burden (besides treatment failure), which is a central consideration in clinical practice. In an application to oropharyngeal cancer (OPSCC), we could achieve a fourfold reduction in burdensome chemotherapy (CRT) assignment without lowering average survival probabilities below the level of the observed regime. Posterior probabilities for treatment assignment as well as credible intervals and posterior mean estimates for expected loss and outcomes under an OTR are strongly informative measures that are useful to support clinicians in decision making. For the OPSCC data, uncertainty in expected loss and survival was larger at the lower tail of the distributions (Figures \ref{fig:quantile_survival2} and \ref{fig:quantile_loss2}), but posterior certainty was high that radiotherapy without chemotherapy was the optimal strategy for many patients (Figure \ref{fig:fig_post_prob2}). 

Furthermore, we developed Bayesian OTR estimation methodology assuming ignorable treatment assignment. An advantage of our approach is that it relies on two marginal Bayesian models of the potential outcomes, which allows applying standard Bayesian techniques including model updating, model and variable selection, and inclusion of prior information. In practice, however, the functional relationship between patient covariates and the potential outcomes is often unknown. Bayesian statistical and machine learning may then facilitate functional approximation, where we evaluated and applied Bayesian additive regression trees, BART. Our estimation approach is related to variants suggested by \textcite{zajonc_bayesian_2012} and \textcite{murray_bayesian_2017} who use the posterior predictive distribution in the context of dynamic regimes, whereas we base inference on the posterior distribution of the conditional marginal probabilities of the potential outcomes and a prior assumption on their partial association. An important assumption of our approach is ignorable treatment assignment requiring that all confounders are observed. Thorough planning using expert knowledge is essential to have data on the required confounders. For the case of the OPSCC data, all known causes of CRT assignment, by expert knowledge and treatment protocols, were observed in support of the assumption.

In simulations we found an optimism bias of point estimates under many conditions where we varied loss functions, treatment assignment mechanisms, treatment effect heterogeneity, and sample size. In particular, the direction of bias of expected loss was negative (and positive for outcomes) in small samples with low or no heterogeneity between expected losses. Bias and interval width decreased and decision accuracy increased substantially in larger samples. Frequentist interval coverage probabilities were nevertheless at nominal level for large samples in the Bayesian logistic model, and conservative for BART, even in small samples. Nevertheless, large samples appear desirable to control optimism bias and variance, needed in particular for BART. Replication of our findings on the optimal OPSCC regime, based on $n=277$, in larger samples is therefore desirable.

Several further paths for future research are worth noting. Our approach may be adapted to let loss functions depend on patient characteristics or preferences, which could broaden its usefulness for personalized medicine. Furthermore, it is desirable to extend the approach to loss functions and estimation for time-to-event (survival) outcomes and more than two treatments.
\vspace{2cm}

\newpage
\printbibliography
\newpage

\begin{appendices}
\setcounter{table}{0}
\setcounter{figure}{0}
\renewcommand{\thetable}{A\arabic{table}}
\renewcommand{\thefigure}{A\arabic{figure}}

\section{Proofs} \label{sec:appendix_A}

\begin{proof}[Proof of Theorem \ref{thm:bonferroni}.]
	The coverage of $I(a^*)$ is 
	\begin{align*}
	P(\mu_{\Loss}(a^*,\bm{\theta}_0,\x) \in I(a^*)) 
	&= P(\mu_{\Loss}(1,\bm{\theta}_0,\x) \in I(1), a^* = 1) + P(\mu_{\Loss}(0,\bm{\theta}_0,\x) \in I(0),a^*=0) \\
	&\geq P(a^*=1) - \alpha + P(a^*=0) - \alpha = 1-2\alpha.\qedhere
	\end{align*}
\end{proof}

The remainder of this section is devoted to the proof of Theorem \ref{thmcredible}. We consider (non-Bayesian) confidence intervals and prove Theorem \ref{thmcoverage} below. From this theorem we immediately obtain Theorem \ref{thmcredible}.

Let $(X,Y)$ be a two-dimensional random vector having the bivariate normal distribution with unknown mean $(\mu,\nu)$ and positive-definite covariance matrix 
\[
\left( \begin{array}{cc} \sigma^2 & \rho \sigma \tau \\ \rho\sigma\tau & \tau^2 \end{array} \right),
\]
with $\sigma^2$ and $\tau^2$ known. The classical confidence intervals for $\mu$ and $\nu$ with coverage $1-\alpha$ are given by
\begin{equation}\label{confidenceintervals}
[ X - \sigma \xi_{1-\alpha/2}, X + \sigma \xi_{1-\alpha/2} ],\qquad [ Y - \tau \xi_{1-\alpha/2}, Y + \tau \xi_{1-\alpha/2} ],
\end{equation}
with $\xi_{1-\alpha/2}$ the $(1-\alpha/2)$ quantile of the standard normal distribution. We now define a decision-based confidence interval $I$ by selecting one of the two intervals based on the outcome of $(X,Y)$, as follows:
\[
I = \left\{ \begin{array}{ll} 
[ X - \sigma \xi_{1-\alpha/2}, X + \sigma \xi_{1-\alpha/2} ], & \mbox{if } X \leq Y, \\
\normalsize[ Y - \tau \xi_{1-\alpha/2}, Y + \tau \xi_{1-\alpha/2} ], & \mbox{if } X > Y.  
\end{array} \right.
\]
I.e.\ we select the interval with the smallest mean. We define the \emph{coverage} of $I$ as $P(T\in I)$, with
\[
T = \left\{ \begin{array}{ll}
\mu, & \mbox{if } X \leq Y, \\
\nu, & \mbox{if } X> Y.
\end{array} \right.
\]

As an appetizer, suppose $\mu=\nu=0$, $\sigma^2 = \tau^2=1$ and $\rho =0$. Then
\[
I = [ \min\{X,Y\} - \xi_{1-\alpha/2}, \min\{X,Y\} + \xi_{1-\alpha/2} ]
\]  
and the coverage of $I$ is 
\begin{align*}
P(0\in I) & = P(0 < \min\{X,Y\} + \xi_{1-\alpha/2}) - P(0< \min\{X,Y\} - \xi_{1-\alpha/2}) \\
& = (1-\alpha/2)^2 - (\alpha/2)^2 = 1-\alpha.
\end{align*}
Hence the interval $I$ has a negative bias, but nevertheless the coverage is $1-\alpha$.

\begin{theorem}\label{thmcoverage}
	The coverage of $I$ is
	\begin{itemize}
		\item[(i)] equal to $1-\alpha$, if $\mu=\nu$ or $\sigma^2 = \tau^2$,
		\item[(ii)] strictly between $1-\alpha$ and $1-\frac{1}{2}\alpha$, if $\mu>\nu$ and $\sigma^2<\tau^2$,
		\item[(iii)] strictly between $1-\frac{3}{2}\alpha$ and $1-\alpha$, if $\mu>\nu$ and $\sigma^2 > \tau^2$.
	\end{itemize}
\end{theorem}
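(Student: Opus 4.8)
My plan is to prove Theorem~\ref{thmcoverage} by collapsing the coverage of $I$ onto a single scalar and analysing it. Standardise: write $X = \mu + \sigma Z_1$, $Y = \nu + \tau Z_2$ with $(Z_1,Z_2)$ a standard bivariate normal of correlation $\rho$, and set $\xi := \xi_{1-\alpha/2}$, $S := \sigma Z_1 - \tau Z_2$, $\delta := \nu-\mu$, so that $\{X\le Y\}=\{S\le\delta\}$ and $\{\mu\in[X-\sigma\xi,X+\sigma\xi]\}=\{|Z_1|\le\xi\}$ (similarly for $Y,\nu,Z_2$). Using $P(|Z_i|\le\xi)=1-\alpha$,
\begin{align*}
P(T\in I) &= P(|Z_1|\le\xi,\,S\le\delta) + P(|Z_2|\le\xi,\,S>\delta) = (1-\alpha) + D, \\
D &:= P(|Z_1|\le\xi,\,S\le\delta) - P(|Z_2|\le\xi,\,S\le\delta).
\end{align*}
The rule ``keep the interval with the smaller centre'' is invariant under relabelling $(X,Y)\mapsto(Y,X)$, so I may assume $\mu\ge\nu$, i.e.\ $\delta\le0$. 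Invariance of the law of $(Z_1,Z_2)$ under $(Z_1,Z_2)\mapsto(-Z_1,-Z_2)$ preserves each $\{|Z_i|\le\xi\}$ and sends $\{S\le\delta\}$ to $\{S\ge-\delta\}$; for $\delta\le0$ these half-planes are a.s.\ disjoint with union $\{|S|\ge|\delta|\}$, so $P(|Z_i|\le\xi,\,S\le\delta)=\tfrac12 P(|Z_i|\le\xi,\,|S|\ge|\delta|)$. Complementing inside $\{|Z_i|\le\xi\}$ yields the working identity
\[
D = \tfrac12\big[\,P(|Z_1|>\xi,\,|S|<|\delta|) - P(|Z_2|>\xi,\,|S|<|\delta|)\,\big] = \tfrac12\big[\,P(|Z_2|\le\xi,\,|S|<|\delta|) - P(|Z_1|\le\xi,\,|S|<|\delta|)\,\big].
\]

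From the first form, case (i) is immediate: if $\mu=\nu$ then $|\delta|=0$ and $D=0$; if $\sigma^2=\tau^2$ then $S=\sigma(Z_1-Z_2)$ is invariant under $Z_1\leftrightarrow Z_2$, which swaps the two probabilities, so $D=0$. Also each probability is at most $P(|Z_i|>\xi)=\alpha$, and strictly smaller when $\delta\neq0$ (the extra event $\{|S|<|\delta|\}$ has non-trivial complement), so $|D|<\tfrac12\alpha$ unconditionally; hence in (ii)--(iii) the coverage lies strictly within $\tfrac12\alpha$ of $1-\alpha$. What remains is the \emph{sign} of $D$ in (ii)--(iii). By the relabelling symmetry take $\mu>\nu$, so (ii) reads $\sigma^2<\tau^2$ and (iii) reads $\sigma^2>\tau^2$, and the claim becomes that $D$ has the same sign as $\tau-\sigma$.

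For the sign I would condition on $S$: the law of $Z_i$ given $S=s$ is normal with mean $m_is$ and variance $v_i$, where $v_1=\tau^2(1-\rho^2)/\mathrm{Var}(S)$ and $v_2=\sigma^2(1-\rho^2)/\mathrm{Var}(S)$, so $v_1-v_2$ has the sign of $\tau-\sigma$. With $p_i(s):=P(|Z_i|\le\xi\mid S=s)$ and $f_S$ the (even) density of $S$,
\[
D = \int_{0}^{|\delta|} f_S(s)\,[\,p_2(s)-p_1(s)\,]\,ds ,
\]
using evenness of $f_S$ and of $s\mapsto p_i(s)$. At $s=0$, $p_i(0)=2\Phi(\xi/\sqrt{v_i})-1$, so $p_2(0)-p_1(0)$ has the sign of $v_1-v_2$, i.e.\ of $\tau-\sigma$; as $|s|\to\infty$ the growing conditional mean drives $p_i$ to $0$ and a short computation shows $m_1^2/v_1-m_2^2/v_2$ has the sign of $\sigma-\tau$, so $p_2-p_1$ eventually has the sign of $\sigma-\tau$ (if $m_1=0$ then $p_1\equiv1-\alpha$ and $p_2-p_1\to-(1-\alpha)$, again matching $\sigma<\tau$). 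The decisive step is that $p_2-p_1$ changes sign \emph{exactly once} on $[0,\infty)$. Granting this, the $|\delta|\to\infty$ value of $D$ is $0$, so $\int_0^\infty f_S\,(p_2-p_1)=0$; a function with a single such sign change and vanishing total $f_S$-integral has $\int_0^{d}f_S\,(p_2-p_1)$ of the sign of $\tau-\sigma$ for every finite $d>0$, which is exactly $D$ with $d=|\delta|$. This gives $D>0$ in (ii) and $D<0$ in (iii).

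I expect the single-crossing property to be the main obstacle: for $\sigma<\tau$, $p_2$ is a Gaussian mass on the \emph{wider} standardised window (half-width $\xi/\sqrt{v_2}$) which nevertheless recentres \emph{faster} in $|s|$, so the two effects compete and no bare pointwise comparison works. The route I would take is to prove that $s\mapsto p_2(s)/p_1(s)$ is monotone in $|s|$: writing $p_i(s)=P\big(N(0,1)\in[c_i|s|-g_i,\,c_i|s|+g_i]\big)$ with $g_i=\xi/\sqrt{v_i}$ and $c_i=|m_i|/\sqrt{v_i}$, one has $\tfrac{d}{du}\log P\big(N(0,1)\in[cu-g,cu+g]\big)=-c\,\mathbb{E}\big[N(0,1)\mid N(0,1)\in[cu-g,cu+g]\big]$, so comparing the two log-derivatives reduces, via log-concavity (total positivity) of the standard normal density, to monotonicity of the truncated-normal conditional mean in its centre and in its half-width. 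Finally I would record that the extremes $1-\tfrac12\alpha$ and $1-\tfrac32\alpha$ are approached but never attained --- as one checks by substituting the explicit limiting coverage in degenerate regimes, e.g.\ $\sigma\downarrow0$ with $|\delta|$ of order $\tau\xi$ --- so the inequalities in (ii)--(iii) are strict.
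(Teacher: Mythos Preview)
Your reduction to coverage $=(1-\alpha)+D$ with the symmetrised form
\[
D=\tfrac12\bigl[P(|Z_2|\le\xi,\,|S|<|\delta|)-P(|Z_1|\le\xi,\,|S|<|\delta|)\bigr]
\]
is correct, and case~(i) together with the strict bound $|D|<\tfrac12\alpha$ follow cleanly from it. (A small imprecision: under $Z_1\leftrightarrow Z_2$ one has $S\mapsto -S$, not $S\mapsto S$; it is $|S|$ that is invariant, which is all you use.)

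The genuine gap is the single-crossing step. Your route requires $c_2\,M(c_2 u,g_2)>c_1\,M(c_1 u,g_1)$ for all $u>0$, where $M(a,g)=\E[N\mid N\in[a-g,a+g]]$. But for $\tau>\sigma$ one has both $c_2>c_1$ \emph{and} $g_2>g_1$, while $M$ is increasing in its centre $a$ and \emph{decreasing} in its half-width $g$ (for $a>0$); the two monotonicities you invoke therefore pull in opposite directions and do not by themselves deliver the inequality. The log-concavity/total-positivity remark as stated does not close this, and you have not supplied the missing argument.

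The paper proceeds quite differently: it stays in two dimensions, decomposing the coverage into six pieces according to where $(\tilde X,\tilde Y)$ lies relative to the square $[-\xi,\xi]^2$, and uses the reflections $(\tilde X,\tilde Y)\mapsto(-\tilde X,-\tilde Y)$ and $(\tilde X,\tilde Y)\mapsto(-\tilde Y,-\tilde X)$ together with a geometric comparison of the lines $\tilde y=(\sigma/\tau)\tilde x+(\mu-\nu)/\tau$ and $\tilde y=(\tau/\sigma)\tilde x+(\mu-\nu)/\sigma$ to obtain the inequality case by case, never needing any single-crossing. Within your own framework the quickest repair is to apply that second reflection directly to $D$: it yields $D=\tfrac12\bigl[P(|Z_2|\le\xi,|S|<|\delta|)-P(|Z_2|\le\xi,|S'|<|\delta|)\bigr]$ with $S'=\tau Z_1-\sigma Z_2$. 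Since $\mathrm{Var}\,S=\mathrm{Var}\,S'$, both terms are centred-rectangle probabilities of a standardised bivariate normal differing only in correlation, and (via $\partial_\rho P(U\le u,V\le v)=\phi_2(u,v;\rho)$) this probability is strictly increasing in the squared correlation, which one computes to differ by a positive multiple of $\tau^2-\sigma^2$.
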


Before proceeding to the proof of Theorem \ref{thmcoverage}, we show how Theorem \ref{thmcredible} follows from it. By the assumptions in Theorem \ref{thmcredible}, the credible intervals $I(1), I(0)$ are of the form \eqref{confidenceintervals}. Indeed, $X,Y$ represent the means of the intervals, which by assumption have the bivariate normal distribution with variances $\sigma^2 = \sigma^2(1)$ and $\tau^2 = \sigma^2(0)$. By normality of the means, by the assumption that the widths of the intervals are constant, and because we assumed the coverage of $I(1),I(0)$ to be symmetric in the sense of \eqref{coverageI1}, we must have that $\mu = \mu_{\Loss}(1,\bm{\theta}_0,\x)$, $\nu = \mu_{\Loss}(0,\bm{\theta}_0,\x)$ and
\[
I(1) = [X - \sigma \xi_{1-\alpha/2}, X + \sigma \xi_{1-\alpha/2}],\qquad
I(0) = [Y - \tau \xi_{1-\alpha/2}, Y + \tau \xi_{1-\alpha/2}].
\]
This proves Theorem \ref{thmcredible}; it remains to prove Theorem \ref{thmcoverage}.

\begin{proof}[Proof of Theorem \ref{thmcoverage}.] We write $\xi=\xi_{1-\alpha/2}$ and denote
	\[
	\tilde x = \frac{x-\mu}{\sigma},\qquad \tilde y = \frac{y-\nu}{\tau}.
	\]
	We can write the coverage of $I$ as
	\begin{align}
	& P(X > Y, \nu \in [ Y - \tau \xi, Y + \tau \xi] ) \nonumber\\
	& \quad + P( X < Y, \mu \in [ X - \sigma \xi, X + \sigma \xi] ) \nonumber\\
	& = P(X> Y, \tilde X > \xi, | \tilde Y | < \xi ) \label{2a}\\
	& \quad + P(X>Y, |\tilde X|  < \xi, |\tilde Y| < \xi ) \label{2b}\\
	& \quad + P(X>Y, \tilde X < -\xi, |\tilde Y|  < \xi ) \label{2c}\\
	& \quad + P(X<Y, |\tilde X|  < \xi, \tilde Y > \xi ) \label{1a}\\
	& \quad + P(X<Y, |\tilde X| < \xi, |\tilde Y| < \xi ) \label{1b}\\
	& \quad + P(X<Y, |\tilde X| < \xi, \tilde Y < -\xi ). \label{1c}
	\end{align}
	Note that we must have $\eqref{2c} = 0$ or $\eqref{1c} = 0$, which can be easily seen from a two-dimensional plot of $\tilde y$ against $\tilde x$. This implies that the coverage of $I$ is less than $P(\tilde X > -\xi)$ or $P(\tilde Y > -\xi)$, respectively, which gives the upper bound of statement (ii) of the theorem. Furthermore, 
	\[
	\eqref{2b} + \eqref{1b} = P(|\tilde X| < \xi, |\tilde Y| < \xi)
	\]
	and 
	\[
	\eqref{2a} = P(\tilde X > \xi, |\tilde Y| < \xi) \quad \text{or} \quad 
	\eqref{1a} = P(|\tilde X| < \xi, \tilde Y > \xi).
	\]
	Therefore, the coverage of $I$ is greater than
	\[
	1-P(\tilde X < -\xi) - P(|\tilde Y| > \xi) \quad \text{or} \quad
	1-P(|\tilde X| > \xi) - P(\tilde Y < -\xi),
	\]
	respectively. This yields the lower bound of statement (iii).

	Using the decomposition \eqref{2a}--\eqref{1c}, we will prove that the coverage of $I$ is (i) equal to, (ii) greater than, or (iii) less than
	\begin{equation}\label{toprove}
	P( \tilde X > -\xi, \tilde Y > -\xi )
	- P( \tilde X > \xi, \tilde Y > \xi ),
	\end{equation}
	if (i) $\mu=\nu$ or $\sigma^2=\tau^2$, (ii) $\mu>\nu$ and $\sigma^2<\tau^2$, or (iii) $\mu>\nu$ and $\sigma^2>\tau^2$, respectively. The theorem then follows from the fact that \eqref{toprove} equals
	\begin{align*}
	& 1 - P(\tilde X < -\xi) - P(\tilde Y < -\xi) \\
	&\quad + P( \tilde X < -\xi, \tilde Y < -\xi ) 
	- P( \tilde X > \xi, \tilde Y > \xi ) \\
	& = 1 - \alpha/2 - \alpha/2 + 0 \\
	&= 1 - \alpha,
	\end{align*}
	where we use the fact that $(\tilde X,\tilde Y)$ has the same distribution as $(-\tilde X,-\tilde Y)$. It remains to prove statement \eqref{toprove}; we split the proof in four cases.

	\textbf{Case 1.} Suppose $\mu\geq \nu$, $\sigma^2\leq \tau^2$ and 
	\begin{equation}\label{case100}
	\mu-\nu \leq (\tau-\sigma) \xi.
	\end{equation}
	We will evaluate each of the terms \eqref{2a}--\eqref{1c}, and show that their sum is equal to \eqref{toprove} if $\mu=\nu$, and greater than \eqref{toprove} if $\mu>\nu$. Obviously,
	\begin{equation}\label{case10}
	\eqref{2b} + \eqref{1b} = P(|\tilde X| < \xi, |\tilde Y| < \xi ).
	\end{equation}
	Furthermore, if $\tilde x < \xi$ and $\tilde y  > \xi$ then, by assumption \eqref{case100},
	\[
	y-x = \tau \tilde y +\nu - \sigma \tilde x -\mu
	> (\tau-\sigma) \xi - (\mu-\nu) \geq 0,
	\] 
	which implies that
	\begin{equation}\label{case11}
	\eqref{1a} = P(|\tilde X| < \xi, \tilde Y > \xi).
	\end{equation}
	Similarly, if $\tilde x > -\xi$ and $\tilde y < -\xi$ then
	\[
	y-x = \tau \tilde y +\nu - \sigma \tilde x -\mu
	< - (\tau-\sigma) \xi - (\mu-\nu) \leq 0,
	\]
	which gives
	\begin{equation}\label{case12}
	\eqref{1c} = 0.
	\end{equation}
	Using the fact that $(\tilde X,\tilde Y)$ has the same distribution as $(-\tilde X,-\tilde Y)$, we obtain
	\begin{align*}
	\eqref{2c} &= P( \tilde X > (\tau/\sigma) \tilde Y - (\mu-\nu)/\sigma, \tilde X < -\xi, | \tilde Y| < \xi ) \\
	&\geq P( \tilde X > (\tau/\sigma) \tilde Y + (\mu-\nu)/\sigma, \tilde X < -\xi, | \tilde Y | < \xi ) \\
	&= P( - \tilde X > (\tau/\sigma)(-\tilde Y) + (\mu-\nu)/\sigma, - \tilde X < -\xi, | -\tilde Y | < \xi ) \\
	&= P( X < Y, \tilde X > \xi, | \tilde Y| < \xi ), 
	\end{align*}
	where the inequality is in fact an equality if and only if $\mu=\nu$. It follows that
	\begin{equation}\label{case13}
	\eqref{2a} + \eqref{2c} \geq P(\tilde X > \xi, |\tilde Y| < \xi).
	\end{equation}
	Combining \eqref{case10}, \eqref{case11}, \eqref{case12} and \eqref{case13} yields that the coverage of $I$ is greater than or equal to \eqref{toprove}, with equality if and only if $\mu=\nu$.

	\textbf{Case 2.} Suppose $\mu\geq \nu$, $\sigma^2\leq \tau^2$ and $\mu-\nu \geq (\tau-\sigma) \xi$. To give a bound on \eqref{2c}, consider the two lines
	\begin{align}
	\tilde y &= \frac{\sigma}{\tau} \tilde x + \frac{\mu-\nu}{\tau}, \label{line1} \\
	\tilde y &= \frac{\tau}{\sigma} \tilde x + \frac{\mu-\nu}{\sigma}. \label{line2}
	\end{align}
	The two lines intersect at the point $(-(\mu-\nu) / (\sigma+\tau), (\mu-\nu) / (\sigma + \tau))$, which is outside the region $\tilde x < -\xi, \tilde y < \xi$. If $\sigma^2 < \tau^2$ then the slope of line \eqref{line2} is greater than the slope of line \eqref{line1}. Therefore, if 
	\[
	\tilde x < -\xi,\qquad \tilde y < \xi, \qquad \tilde y < \frac{\tau}{\sigma} \tilde x + \frac{\mu-\nu}{\sigma},
	\]
	then we must have $\tilde y < (\sigma/\tau) \tilde x + (\mu-\nu) / \tau$. This implies
	\begin{align}
	\eqref{2c} &= P( \tilde Y < (\sigma/\tau) \tilde X + (\mu-\nu)/\tau, \tilde X < -\xi, | \tilde Y| < \xi ) \nonumber\\
	&\geq P( \tilde Y < (\tau/\sigma) \tilde X + (\mu-\nu)/\sigma, \tilde X < -\xi, |\tilde Y | < \xi ), \label{case21}
	\end{align}
	with equality if and only if $\sigma^2=\tau^2$. Since $(\tilde X,\tilde Y)$ has the same distribution as $( -\tilde Y,-\tilde X)$, the right hand side of \eqref{case21} is equal to
	\begin{align*}
	&P( - \tilde X < (\tau/\sigma) (-\tilde Y) + (\mu-\nu)/\sigma, - \tilde Y < -\xi, | - \tilde X | < \xi ) \\
	&= P( X > Y, \tilde Y > \xi, | \tilde X | < \xi ).
	\end{align*}
	It follows that
	\[
	\eqref{2c} +\eqref{1a} \geq P(|\tilde X| < \xi, \tilde Y > \xi),
	\]
	with equality if and only if $\sigma^2 = \tau^2$. In a similar way as in Case 1 we can deduce from this that the coverage of $I$ is greater than or equal to \eqref{toprove}, with equality if and only if $\sigma^2 = \tau^2$.

	\textbf{Case 3.} Suppose $\mu> \nu$, $\sigma^2> \tau^2$ and $\mu-\nu \leq (\sigma-\tau) \xi$. Using the fact that $(\tilde X,\tilde Y)$ has the same distribution as $(-\tilde X,-\tilde Y)$, we see that
	\begin{align*}
	\eqref{1c} &= P(\tilde X < (\tau/\sigma) \tilde Y - (\mu-\nu)/\sigma, | \tilde X | < \xi, \tilde Y < -\xi) \\
	&< P(\tilde X < (\tau/\sigma) \tilde Y + (\mu-\nu)/\sigma, | \tilde X | < \xi, \tilde Y < -\xi) \\
	&= P( -\tilde X < (\tau/\sigma) (-\tilde Y) + (\mu-\nu)/\sigma, | -\tilde X | < \xi, -\tilde Y < -\xi) \\
	&= P(X> Y, |\tilde X| < \xi, \tilde Y > \xi).
	\end{align*}
	It follows that
	\[
	\eqref{1a} + \eqref{1c} < P(|\tilde X| < \xi, \tilde Y > \xi).
	\]
	Similar to the proof of Case 1 we obtain that the coverage of $I$ is less than \eqref{toprove}.

	\textbf{Case 4.} Suppose $\mu> \nu$, $\sigma^2> \tau^2$ and $\mu-\nu \geq (\sigma-\tau) \xi$. Then the coverage of $I$ is less than \eqref{toprove}. The proof is similar to the proof of Case 2 and we omit it.
\end{proof}

In the proof of Theorem \ref{thmcoverage} we have not used the specific shape of the normal distribution. The statement of the theorem holds more generally if $(X,Y)$ has a continuous distribution with unknown mean $(\mu,\nu)$ and known variances $\sigma^2,\tau^2$ such that, for all $\mu$ and $\nu$, the distribution of $((X-\mu)/\sigma, (Y-\nu)/\tau)$ does not depend on $\mu,\nu$ and is invariant under the reflections $f_1(u,v) = (v,u)$ and $f_2(u,v) = (-v,-u)$.

\vspace{2cm}
\section{Supplemental material to simulation study}  \label{sec:appendix_B}
\vspace{2cm}
\noindent\begin{minipage}{\textwidth}
	\centering
		\includegraphics[trim=50 30 50 30, clip=true, width= \textwidth] {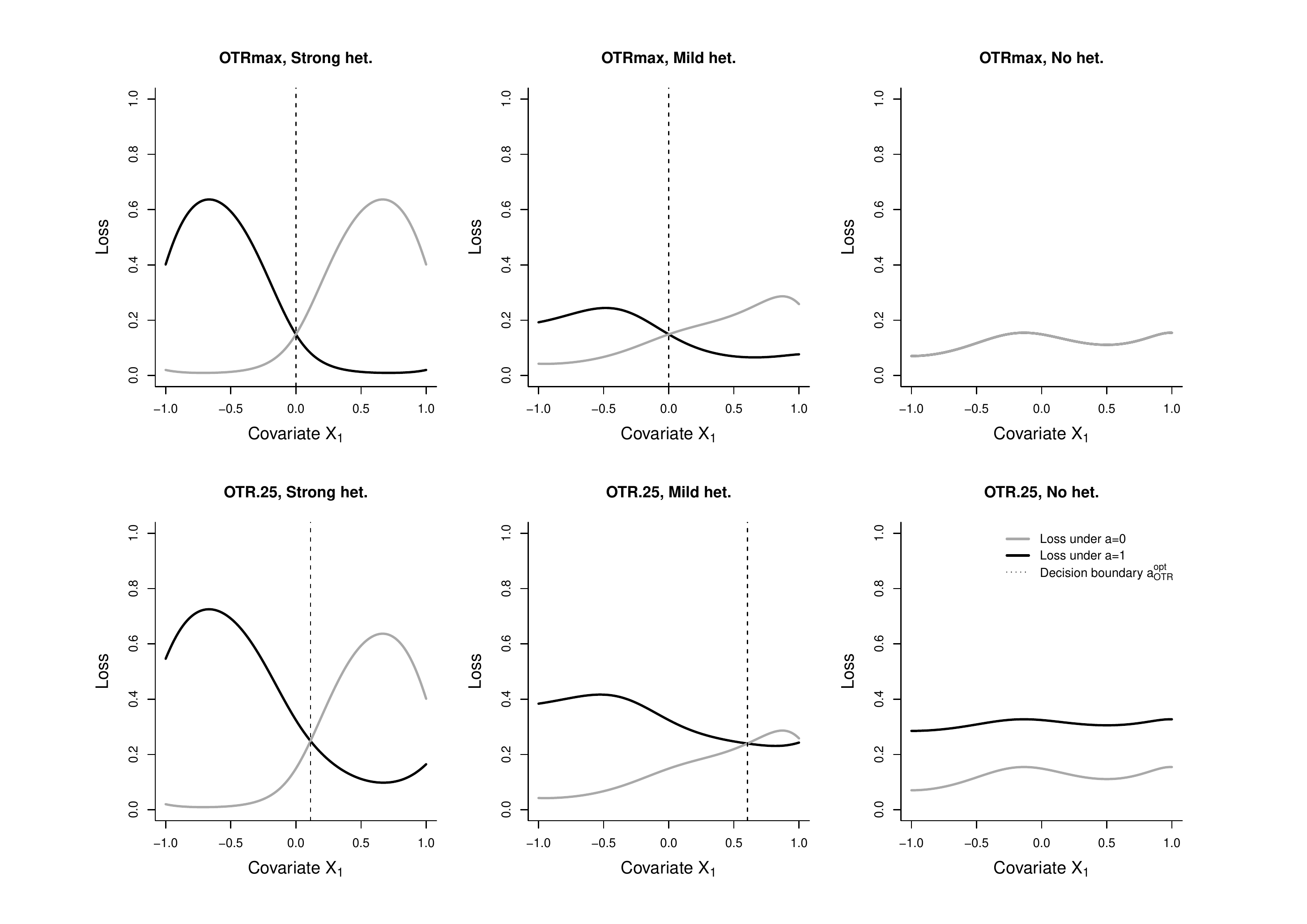}
			\captionof{figure}{True loss functions specified for the simulation by parametrizations OTRmax and OTR.25 and treatment effect heterogeneity (het.). The condition 'OTRmax, no het.' has no unique decision boundary (no treatment is optimal). The condition 'OTR.25, no het.' has no optimal decision boundary and $a=0$ is optimal for any $X_1$.} \label{fig:true_loss}
\end{minipage}
\clearpage

\begin{sidewaystable}  [!htp]
	\centering
	\caption{Simulation results for BART without noise variables ($q=0$) with OTRmax ($\phi=1$) and OTR.25 ($\phi=5$) loss functions (Het.: treatment effect heterogeneity, $\lambda$: selectivity odds ratio, $n$: sample size, $B$: Average bias of OTR posterior mean estimate, $\omega$: Average width of credible intervals, $C$: Average coverage probability of 95\% credible intervals, $A$: Accuracy of assignment.)}	
	\begin{tabular}{cccccccccccccccccc}	\label{tabA:simres_BART_q0}
		&&& \multicolumn{7}{c}{BART ($q=0$), OTRmax} && \multicolumn{7}{c}{BART ($q=0$), OTR.50} \\
		\cline{4-10}
		\cline{12-18}
		Het. &$\lambda$ &  n & $B_{L}$ & $B_Y$ & $\omega_{L}$ & $\omega_{Y}$ & $C_{L}$ & $C_Y$ & $A$ &  & $B_{L}$ & $B_Y$ & $\omega_{L}$ & $\omega_{Y}$ & $C_{L}$ & $C_Y$ & $A$ \\ \hline
		\multirow{9}{*}{\rotatebox[origin=c]{90}{Strong}} &
		\multirow{3}{*}{$-\log(3)$} & 250 & 0.007 & -0.013 & 0.191 & 0.490 & 0.982 & 0.973 & 0.958 &   & 0.007 & -0.010 & 0.195 & 0.489 & 0.979 & 0.971 & 0.955 \\
		&& 500 & 0.007 & -0.016 & 0.161 & 0.436 & 0.987 & 0.982 & 0.966 &   & 0.008 & -0.014 & 0.162 & 0.435 & 0.984 & 0.982 & 0.964 \\
		&& 1000 & 0.008 & -0.017 & 0.134 & 0.377 & 0.987 & 0.986 & 0.974 &   & 0.008 & -0.016 & 0.135 & 0.377 & 0.984 & 0.986 & 0.970 \\ \cline{2-18}
		&\multirow{3}{*}{$0$} & 250 & 0.007 & -0.016 & 0.189 & 0.449 & 0.988 & 0.986 & 0.958 &   & 0.009 & -0.017 & 0.193 & 0.451 & 0.989 & 0.987 & 0.955 \\
		&& 500 & 0.008 & -0.019 & 0.159 & 0.386 & 0.989 & 0.989 & 0.967 &   & 0.008 & -0.017 & 0.157 & 0.386 & 0.988 & 0.988 & 0.964 \\
		&& 1000 & 0.009 & -0.021 & 0.132 & 0.325 & 0.987 & 0.988 & 0.973 &   & 0.008 & -0.019 & 0.129 & 0.326 & 0.984 & 0.987 & 0.970 \\ \cline{2-18}
		&\multirow{3}{*}{$\log(3)$} & 250 & 0.012 & -0.018 & 0.204 & 0.425 & 0.985 & 0.989 & 0.960 &   & 0.012 & -0.013 & 0.206 & 0.427 & 0.985 & 0.986 & 0.951 \\
		&& 500 & 0.010 & -0.020 & 0.170 & 0.359 & 0.987 & 0.988 & 0.966 &   & 0.010 & -0.018 & 0.168 & 0.362 & 0.985 & 0.987 & 0.963 \\
		&& 1000 & 0.010 & -0.021 & 0.142 & 0.299 & 0.987 & 0.986 & 0.972 &   & 0.010 & -0.019 & 0.137 & 0.301 & 0.985 & 0.985 & 0.969 \\ \hline
		\multirow{9}{*}{\rotatebox[origin=c]{90}{Mild}} & 
		\multirow{3}{*}{$-\log(3)$} & 250 & -0.022 & 0.033 & 0.288 & 0.502 & 0.985 & 0.980 & 0.816 &   & -0.010 & 0.030 & 0.295 & 0.498 & 0.986 & 0.982 & 0.842 \\
		&& 500 & -0.012 & 0.021 & 0.250 & 0.446 & 0.988 & 0.988 & 0.853 &   & -0.007 & 0.019 & 0.256 & 0.436 & 0.99 & 0.988 & 0.850 \\
		&& 1000 & -0.005 & 0.012 & 0.213 & 0.386 & 0.991 & 0.991 & 0.885 &   & -0.005 & 0.013 & 0.219 & 0.373 & 0.991 & 0.991 & 0.854 \\ \cline{2-18}
		&\multirow{3}{*}{$0$} & 250 & -0.017 & 0.026 & 0.288 & 0.467 & 0.989 & 0.988 & 0.817 &   & -0.006 & 0.019 & 0.303 & 0.474 & 0.990 & 0.989 & 0.821 \\
		&& 500 & -0.009 & 0.015 & 0.245 & 0.399 & 0.992 & 0.990 & 0.864 &   & -0.006 & 0.015 & 0.257 & 0.407 & 0.991 & 0.993 & 0.835 \\
		&& 1000 & -0.003 & 0.008 & 0.207 & 0.336 & 0.992 & 0.990 & 0.887 &   & -0.004 & 0.010 & 0.216 & 0.343 & 0.993 & 0.993 & 0.847 \\ \cline{2-18}
		&\multirow{3}{*}{$\log(3)$} & 250 & -0.021 & 0.035 & 0.300 & 0.452 & 0.982 & 0.979 & 0.790 &   & -0.007 & 0.026 & 0.322 & 0.470 & 0.984 & 0.983 & 0.799 \\
		&& 500 & -0.010 & 0.019 & 0.261 & 0.381 & 0.991 & 0.986 & 0.839 &   & -0.005 & 0.019 & 0.279 & 0.406 & 0.990 & 0.989 & 0.821 \\
		&& 1000 & -0.004 & 0.011 & 0.220 & 0.314 & 0.992 & 0.985 & 0.876 &   & -0.005 & 0.013 & 0.236 & 0.344 & 0.991 & 0.991 & 0.837 \\ \hline
		\multirow{9}{*}{\rotatebox[origin=c]{90}{None}} &
		\multirow{3}{*}{$-\log(3)$} & 250 & -0.036 & 0.054 & 0.300 & 0.448 & 0.983 & 0.977 & 0.519 &   & 0.001 & 0.025 & 0.302 & 0.449 & 0.984 & 0.977 & 0.930 \\
		&& 500 & -0.030 & 0.042 & 0.264 & 0.386 & 0.99 & 0.983 & 0.506 &   & 0.006 & 0.014 & 0.262 & 0.385 & 0.988 & 0.982 & 0.959 \\
		&& 1000 & -0.023 & 0.033 & 0.228 & 0.328 & 0.992 & 0.986 & 0.503 &   & 0.009 & 0.004 & 0.223 & 0.323 & 0.99 & 0.986 & 0.979 \\ \cline{2-18}
		&\multirow{3}{*}{$0$} & 250 & -0.035 & 0.049 & 0.295 & 0.435 & 0.988 & 0.984 & 0.501 &   & 0.006 & 0.014 & 0.312 & 0.434 & 0.989 & 0.984 & 0.939 \\
		&& 500 & -0.027 & 0.037 & 0.257 & 0.372 & 0.992 & 0.988 & 0.495 &   & 0.010 & 0.004 & 0.270 & 0.371 & 0.990 & 0.987 & 0.965 \\
		&& 1000 & -0.022 & 0.031 & 0.220 & 0.313 & 0.993 & 0.987 & 0.496 &   & 0.011 & -0.001 & 0.228 & 0.311 & 0.992 & 0.987 & 0.986 \\ \cline{2-18}
		&\multirow{3}{*}{$\log(3)$} & 250 & -0.036 & 0.055 & 0.300 & 0.448 & 0.984 & 0.976 & 0.476 &   & 0.006 & 0.018 & 0.338 & 0.444 & 0.983 & 0.977 & 0.913 \\
		&& 500 & -0.029 & 0.042 & 0.265 & 0.387 & 0.988 & 0.983 & 0.491 &   & 0.01 & 0.009 & 0.294 & 0.385 & 0.989 & 0.985 & 0.957 \\
		&& 1000 & -0.024 & 0.033 & 0.228 & 0.328 & 0.992 & 0.985 & 0.493 &   & 0.012 & 0.002 & 0.253 & 0.327 & 0.991 & 0.985 & 0.978 \\ \hline
	\end{tabular}
\end{sidewaystable}

\clearpage

\begin{sidewaystable}  [p]
	\centering
	\caption{Simulation results for the Bayesian logistic regression model with noise variables ($q=5$) with OTRmax ($\phi=1$) and OTR.25 ($\phi=5$) loss functions (Het.: treatment effect heterogeneity, $\lambda$: selectivity odds ratio, $n$: sample size, $B$: Average bias of OTR posterior mean estimate, $\omega$: Average width of credible intervals, $C$: Average coverage probability of 95\% credible intervals, $A$: Accuracy of assignment.)}	
	\begin{tabular}{cccccccccccccccccc}	\label{tabA:simres_mcmclogit_q5}
		&&& \multicolumn{7}{c}{Logistic regression ($q=5$), OTRmax} && \multicolumn{7}{c}{Logistic regression ($q=5$), OTR.50} \\
		\cline{4-10}
		\cline{12-18}
		Het. &$\lambda$ &  n & $B_{L}$ & $B_Y$ & $\omega_{L}$ & $\omega_{Y}$ & $C_{L}$ & $C_Y$ & $A$ &  & $B_{L}$ & $B_Y$ & $\omega_{L}$ & $\omega_{Y}$ & $C_{L}$ & $C_Y$ & $A$ \\ \hline
		\multirow{9}{*}{\rotatebox[origin=c]{90}{Strong}} &
		\multirow{3}{*}{$-\log(3)$} & 250 & -0.045 & 0.040 & 0.208 & 0.570 & 0.726 & 0.830 & 0.892 &   & -0.037 & 0.044 & 0.252 & 0.565 & 0.805 & 0.824 & 0.878 \\
		&& 500 & -0.018 & 0.018 & 0.161 & 0.458 & 0.840 & 0.881 & 0.938 &   & -0.014 & 0.019 & 0.179 & 0.456 & 0.876 & 0.883 & 0.926 \\
		&& 1000 & -0.008 & 0.007 & 0.118 & 0.342 & 0.886 & 0.901 & 0.959 &   & -0.006 & 0.010 & 0.125 & 0.344 & 0.903 & 0.903 & 0.954 \\  \cline{2-18}
		&\multirow{3}{*}{$0$} & 250 & -0.041 & 0.039 & 0.205 & 0.540 & 0.741 & 0.839 & 0.898 &   & -0.033 & 0.044 & 0.242 & 0.539 & 0.818 & 0.839 & 0.888 \\
		&& 500 & -0.016 & 0.017 & 0.157 & 0.418 & 0.848 & 0.887 & 0.940 &   & -0.012 & 0.018 & 0.170 & 0.421 & 0.878 & 0.888 & 0.932 \\
		&& 1000 & -0.007 & 0.008 & 0.114 & 0.309 & 0.889 & 0.906 & 0.960 &   & -0.005 & 0.008 & 0.117 & 0.310 & 0.904 & 0.906 & 0.957 \\  \cline{2-18}
		&\multirow{3}{*}{$\log(3)$} & 250 & -0.040 & 0.042 & 0.208 & 0.523 & 0.732 & 0.844 & 0.900 &   & -0.032 & 0.044 & 0.241 & 0.525 & 0.809 & 0.844 & 0.891 \\
		&& 500 & -0.016 & 0.016 & 0.161 & 0.400 & 0.843 & 0.887 & 0.94 &   & -0.012 & 0.019 & 0.173 & 0.404 & 0.875 & 0.889 & 0.932 \\
		&& 1000 & -0.007 & 0.007 & 0.117 & 0.294 & 0.887 & 0.906 & 0.962 &   & -0.005 & 0.008 & 0.119 & 0.297 & 0.9 & 0.907 & 0.957 \\  \hline
		\multirow{9}{*}{\rotatebox[origin=c]{90}{Mild}} &
		\multirow{3}{*}{$-\log(3)$} & 250 & -0.095 & 0.096 & 0.299 & 0.577 & 0.746 & 0.850 & 0.693 &   & -0.058 & 0.082 & 0.336 & 0.573 & 0.812 & 0.847 & 0.740 \\
		&& 500 & -0.057 & 0.056 & 0.250 & 0.456 & 0.839 & 0.892 & 0.760 &   & -0.034 & 0.047 & 0.263 & 0.450 & 0.876 & 0.883 & 0.787 \\
		&& 1000 & -0.031 & 0.031 & 0.192 & 0.344 & 0.881 & 0.907 & 0.829 &   & -0.019 & 0.026 & 0.198 & 0.334 & 0.903 & 0.899 & 0.830 \\  \cline{2-18}
		&\multirow{3}{*}{$0$} & 250 & -0.091 & 0.090 & 0.295 & 0.556 & 0.755 & 0.854 & 0.702 &   & -0.057 & 0.077 & 0.335 & 0.555 & 0.817 & 0.850 & 0.736 \\
		&& 500 & -0.052 & 0.053 & 0.245 & 0.430 & 0.850 & 0.893 & 0.773 &   & -0.032 & 0.044 & 0.260 & 0.431 & 0.880 & 0.890 & 0.789 \\
		&& 1000 & -0.028 & 0.028 & 0.186 & 0.316 & 0.886 & 0.906 & 0.841 &   & -0.019 & 0.025 & 0.194 & 0.317 & 0.906 & 0.905 & 0.829 \\  \cline{2-18}
		&\multirow{3}{*}{$\log(3)$} & 250 & -0.093 & 0.096 & 0.300 & 0.553 & 0.751 & 0.850 & 0.696 &   & -0.059 & 0.083 & 0.348 & 0.561 & 0.812 & 0.857 & 0.731 \\
		&& 500 & -0.055 & 0.057 & 0.251 & 0.426 & 0.844 & 0.89 & 0.763 &   & -0.033 & 0.046 & 0.273 & 0.433 & 0.879 & 0.89 & 0.785 \\
		&& 1000 & -0.030 & 0.030 & 0.191 & 0.310 & 0.882 & 0.905 & 0.838 &   & -0.020 & 0.026 & 0.202 & 0.319 & 0.899 & 0.903 & 0.826 \\  \hline
		\multirow{9}{*}{\rotatebox[origin=c]{90}{None}} &
		\multirow{3}{*}{$-\log(3)$} & 250 & -0.101 & 0.105 & 0.285 & 0.522 & 0.705 & 0.832 & 0.484 &   & -0.055 & 0.078 & 0.333 & 0.509 & 0.783 & 0.814 & 0.782 \\
		&& 500 & -0.071 & 0.072 & 0.249 & 0.408 & 0.811 & 0.881 & 0.485 &   & -0.026 & 0.042 & 0.271 & 0.402 & 0.869 & 0.873 & 0.863 \\
		&& 1000 & -0.049 & 0.050 & 0.198 & 0.302 & 0.864 & 0.905 & 0.492 &   & -0.008 & 0.017 & 0.207 & 0.296 & 0.905 & 0.902 & 0.936 \\ \cline{2-18}
		&\multirow{3}{*}{$0$} & 250 & -0.099 & 0.101 & 0.284 & 0.511 & 0.712 & 0.834 & 0.501 &   & -0.053 & 0.077 & 0.34 & 0.511 & 0.796 & 0.830 & 0.789 \\
		&& 500 & -0.068 & 0.068 & 0.245 & 0.399 & 0.821 & 0.89 & 0.505 &   & -0.025 & 0.039 & 0.274 & 0.396 & 0.874 & 0.882 & 0.865 \\
		&& 1000 & -0.047 & 0.047 & 0.192 & 0.291 & 0.864 & 0.907 & 0.502 &   & -0.008 & 0.017 & 0.207 & 0.292 & 0.902 & 0.903 & 0.937 \\  \cline{2-18}
		&\multirow{3}{*}{$\log(3)$} & 250 & -0.101 & 0.104 & 0.285 & 0.522 & 0.705 & 0.831 & 0.515 &   & -0.058 & 0.083 & 0.351 & 0.520 & 0.778 & 0.822 & 0.78 \\
		&& 500 & -0.070 & 0.071 & 0.250 & 0.410 & 0.817 & 0.886 & 0.508 &   & -0.030 & 0.045 & 0.286 & 0.411 & 0.859 & 0.88 & 0.853 \\
		&& 1000 & -0.049 & 0.049 & 0.199 & 0.303 & 0.865 & 0.908 & 0.510 &   & -0.010 & 0.019 & 0.222 & 0.308 & 0.902 & 0.908 & 0.926 \\ \hline
	\end{tabular}
\end{sidewaystable}	

\clearpage
\newpage

\section{Supplemental material to oropharynx OTR study}  \label{sec:appendix_C}
\begin{figure}[!ht]  
	\begin{subfigure}{0.32\textwidth}
		\includegraphics[trim=140 0 150 0, clip=true, width=\linewidth]{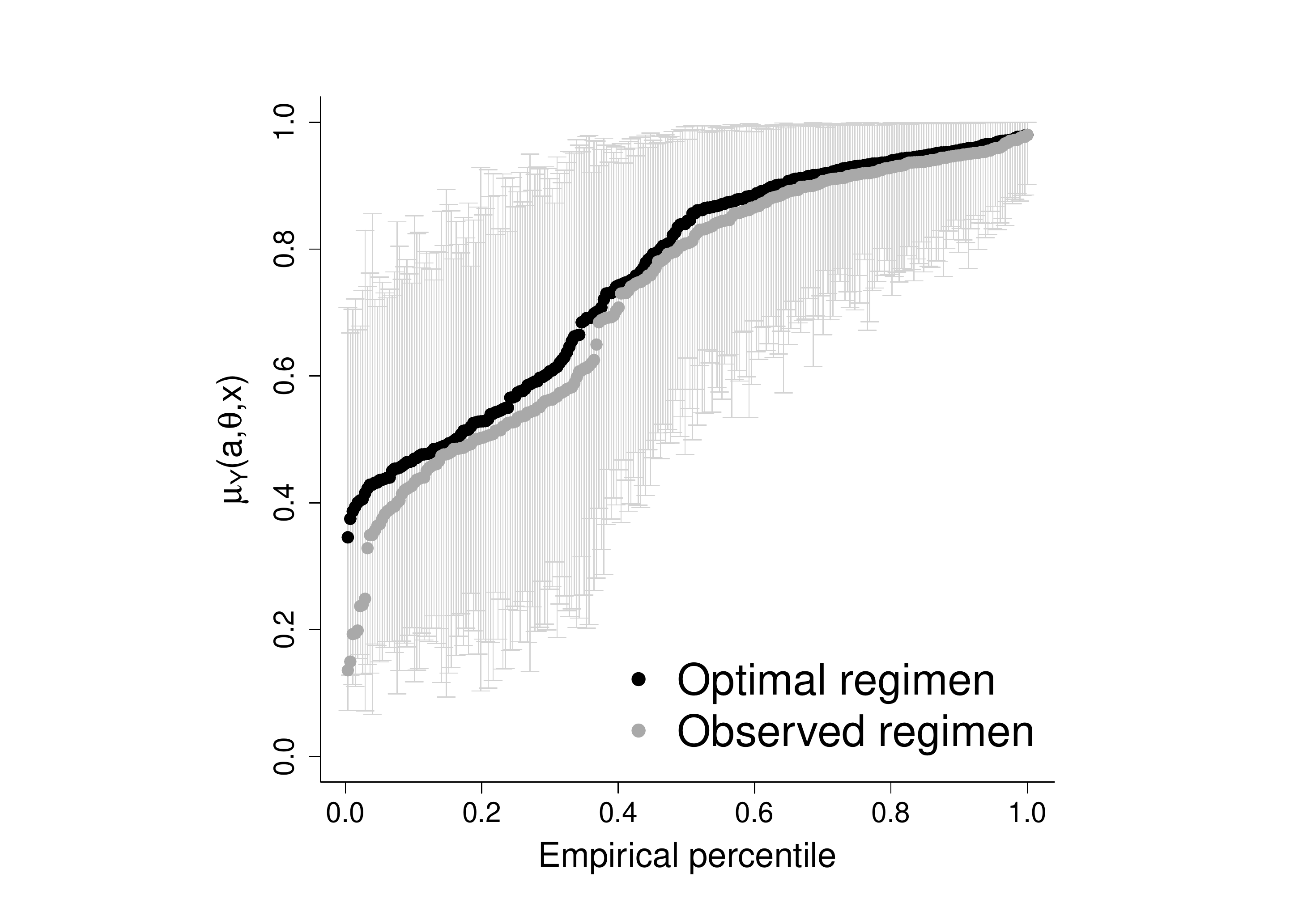}
		\caption{OTRmax} 
	\end{subfigure}  
	\begin{subfigure}{0.32\textwidth}
		\includegraphics[trim=140 0 150 0, clip=true, width=\linewidth]{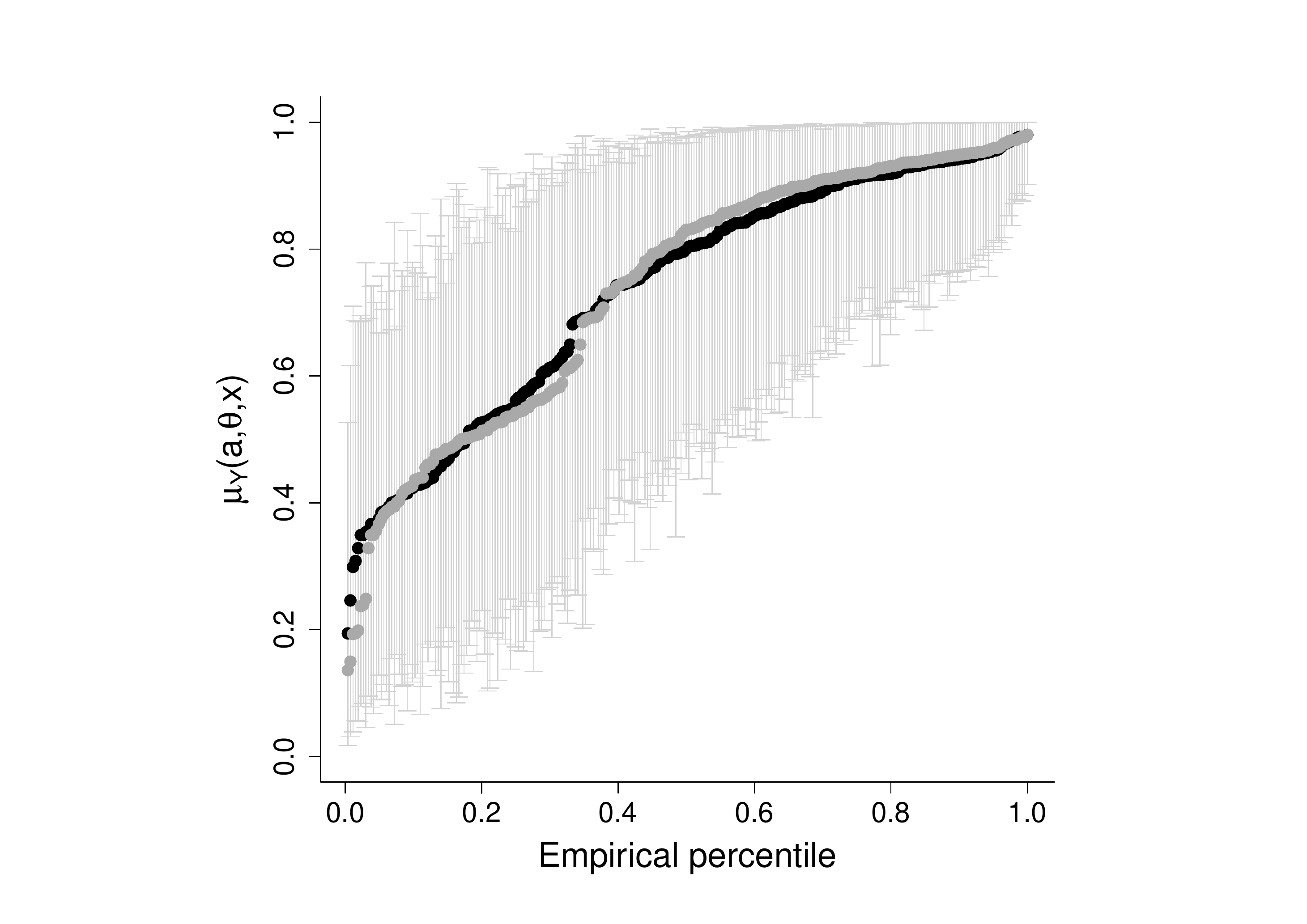}
		\caption{OTR.25} 
	\end{subfigure}
	\begin{subfigure}{0.32\textwidth}
		\includegraphics[trim=140 0 150 0, clip=true, width=\linewidth]{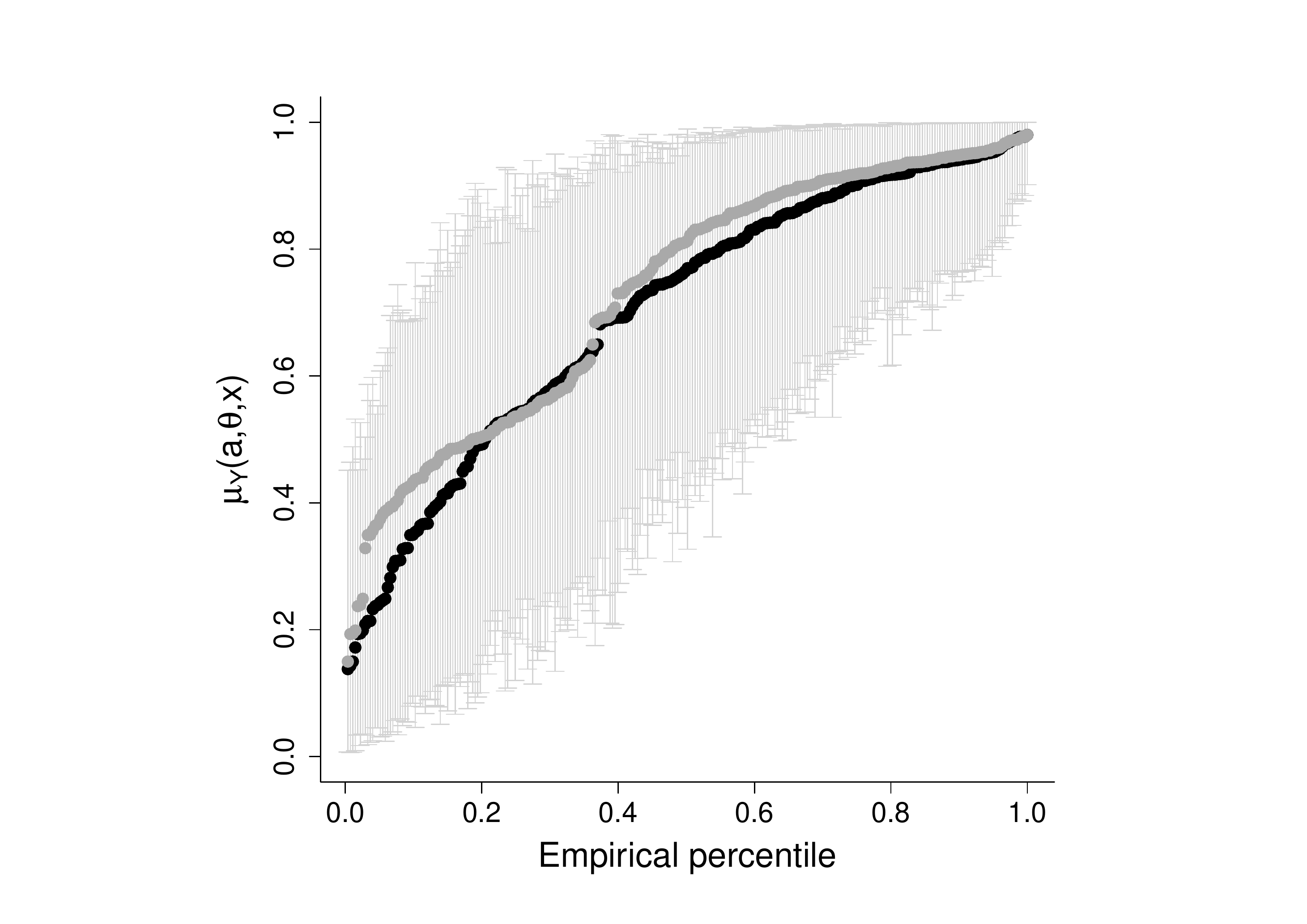}
		\caption{OTR.50} 
	\end{subfigure}
	\caption{Mean posterior survival probabilities by OTR $\mu_Y(a_1^*(\x_i),\theta,\x_i)$ as compared to the observed regime $\mu_Y(w_i,\theta,\x_i)$. 95\% credible intervals given for $\mu_Y(a_1^*(\x_i),\theta,\x_i)$. Patients with non-robust decisions omitted. }
\end{figure}

\begin{figure}[!ht]  
	\begin{subfigure}{0.32\textwidth}
		\includegraphics[trim=140 0 150 0, clip=true, width=\linewidth]{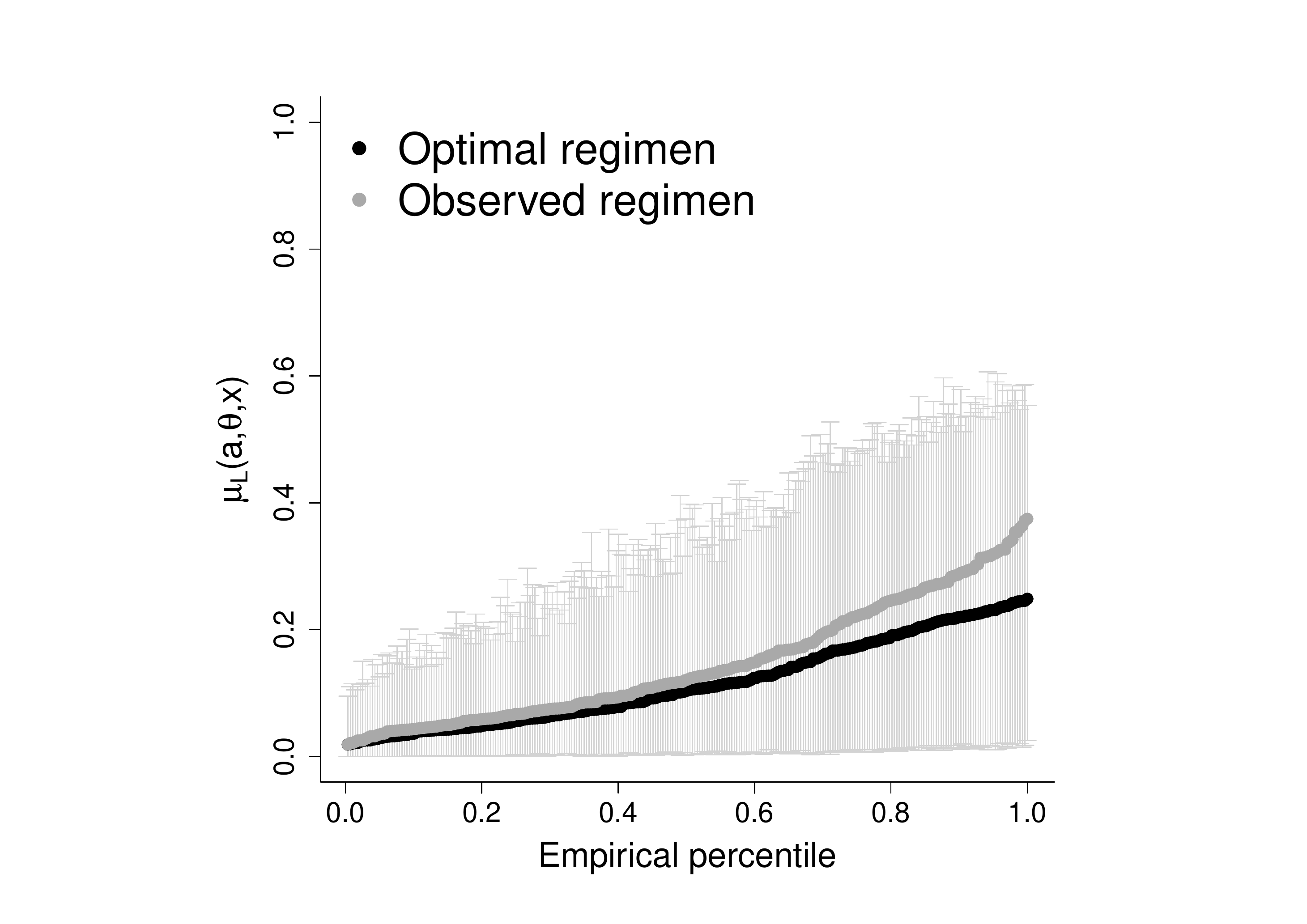}
		\caption{OTRmax} 
	\end{subfigure} 
	\begin{subfigure}{0.32\textwidth}
		\includegraphics[trim=140 0 150 0, clip=true, width=\linewidth]{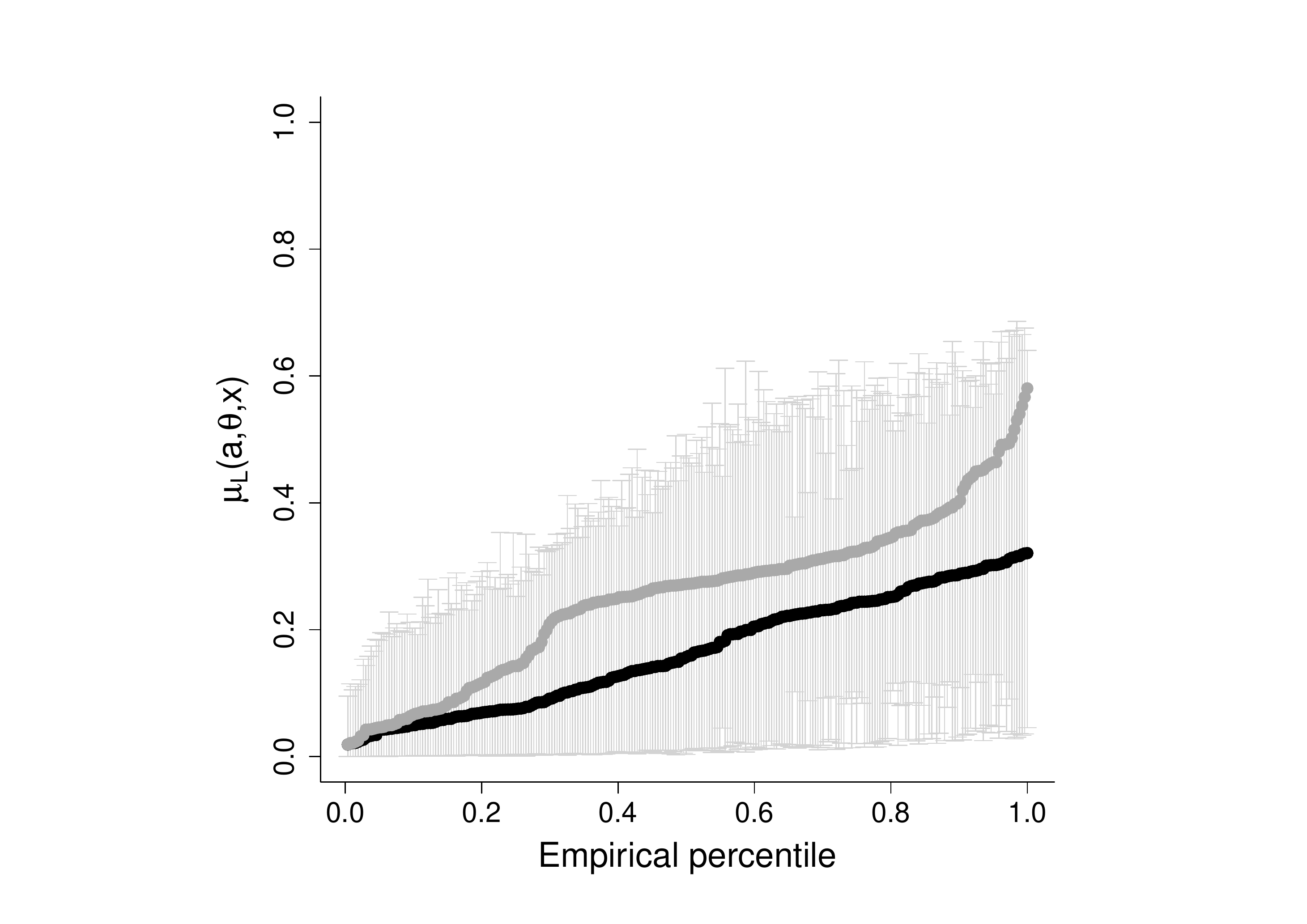}
		\caption{OTR.25} 
	\end{subfigure}
	\begin{subfigure}{0.32\textwidth}
		\includegraphics[trim=140 0 150 0, clip=true, width=\linewidth]{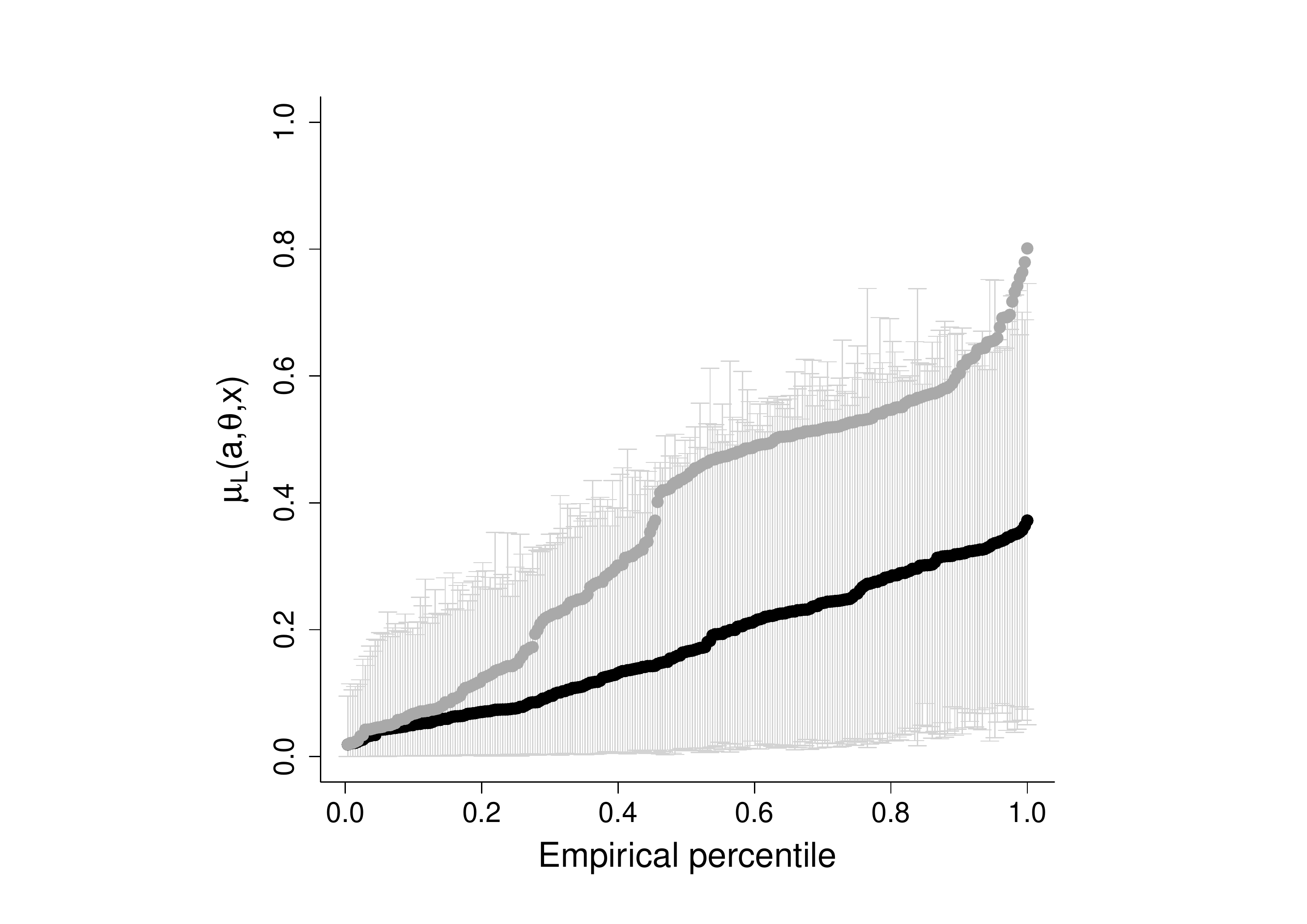}
		\caption{OTR.50} 
	\end{subfigure}
	\caption{Mean posterior expected loss $\mu_{\Loss}(a_1^*(\x_i),\theta,\x_i)$ under OTR (optimal regime) and the observed regime $\mu_{\Loss}(w_i,\theta,\x_i)$. Credible intervals given for OTR. Parameter uncertainty of $\phi$ simulated by prior $\phi \sim \text{U}(\exp(-3), \exp(3))$. Patients with non-robust decisions omitted. }
\end{figure}

\clearpage

\begin{center} 
	\begin{figure} 
	\scalebox{.7}{	\includegraphics[trim=130 30 130 30, clip=true, width=\linewidth]{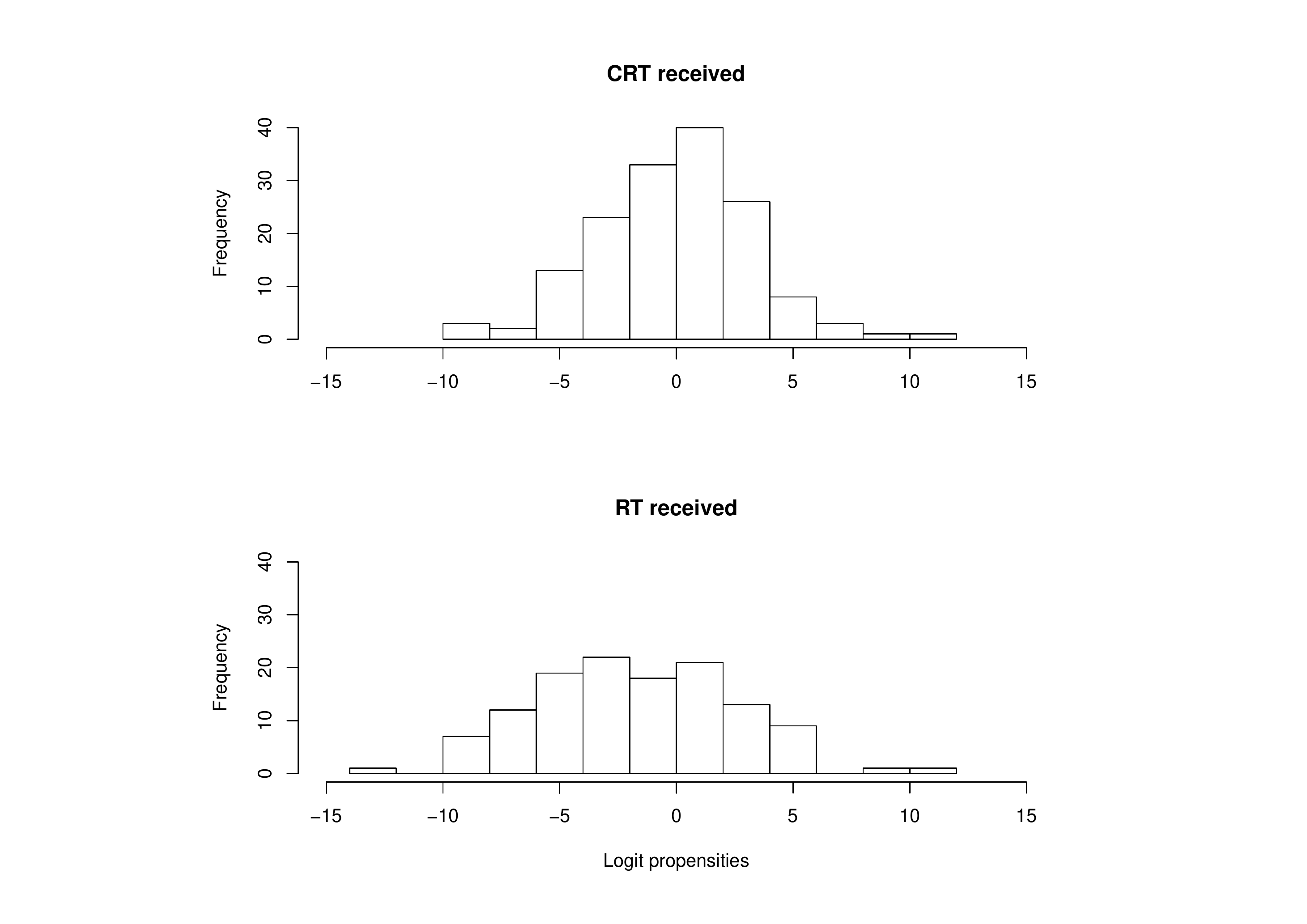}} 
	\caption{Logit propensities by treatments received. Propensities were estimated using a logistic treatment assignment model. The distributions demonstrate overlap in support of the overlap assumption under the Rubin Causal Model.}
		\label{fig:selectivity_analysis}
	\end{figure}
\end{center}

\end{appendices}
\end{document}